\newcommand{\removelatexerror}{\let\@latex@error\@gobble}
\newcommand{\printalgo}[1]
{
\begin{center}
\scalebox{0.97}{
\removelatexerror
\begin{algorithm}[H]
 #1
\end{algorithm}
}
\end{center}
}
\tikzset{external/system call={pdflatex \tikzexternalcheckshellescape -interaction=batchmode -jobname "\image" "\texsource"}}
\pgfplotsset{compat=newest}
\tikzstyle{help lines}=[black!20,dashed]
\newtheorem{definition}{Definition}
\newtheorem{theorem}[definition]{Theorem}
\newtheorem{lemma}[definition]{Lemma}
\newtheorem{example}[definition]{Example}
\newtheorem{corollary}[definition]{Corollary}
\newtheorem{conjecture}[definition]{Conjecture}
\DeclareMathOperator{\lcm}{lcm}
\DeclareMathOperator{\diag}{diag}
\DeclareMathOperator{\defi}{def}
\newcommand{\defeq}{\overset{\defi}{=}}
\newcommand{\defeqspace}{\overset{\hphantom{\defi}}{=}}
\newcommand{\defequiv}{\overset{\defi}{\equiv}}
\newcommand{\F}[1]{\mathbb{F}_{#1}}
\newcommand{\Fx}[1]{\ensuremath{\mathbb{F}_{#1}[X]}}
\renewcommand{\vec}[1]{\mathbf #1}
\newcommand{\shiftset}[2]{\ensuremath{{#1}^{\oplus{#2}}}}
\newcommand{\LIN}[4]{\ensuremath{[#1,#2,#3]_{#4}}}
\newcommand{\level}{\ensuremath{r}}
\newcommand{\basis}{\ensuremath{\tilde{K}}}
\newcommand{\QCC}{\ensuremath{\mathcal{C}}}
\newcommand{\QCCn}{\ensuremath{n}} % length
\newcommand{\QCCm}{\ensuremath{m}} % sub-length m
\newcommand{\QCCl}{\ensuremath{\ell}} % quasi-cyclicity
\newcommand{\QCCk}{\ensuremath{k}} % dimension
\newcommand{\QCCd}{\ensuremath{d}} % minimum Hamming distance
\newcommand{\QCCs}{\ensuremath{s}} % extension field
\newcommand{\QCCa}{\ensuremath{\mathcal{A}}}
\newcommand{\QCCan}{\ensuremath{n_{A}}} % length
\newcommand{\QCCam}{\ensuremath{m_{A}}} % sub-length 
\newcommand{\QCCal}{\ensuremath{\ell_{A}}}
\newcommand{\QCCak}{\ensuremath{k_{A}}} % dimension
\newcommand{\QCCad}{\ensuremath{d_{A}}} % minimum Hamming distance
\newcommand{\QCCas}{\ensuremath{s_{A}}} % extension field
\newcommand{\QCCb}{\ensuremath{\mathcal{B}}}
\newcommand{\QCCbn}{\ensuremath{n_{B}}} % length
\newcommand{\QCCbm}{\ensuremath{m_{B}}} % sub-length 
\newcommand{\QCCbl}{\ensuremath{\ell_{B}}}
\newcommand{\QCCbk}{\ensuremath{k_{B}}} % dimension
\newcommand{\QCCbd}{\ensuremath{d_{B}}} % minimum Hamming distance
\newcommand{\QCCbs}{\ensuremath{s_{B}}} % extension field
\newcommand{\CYCb}{\ensuremath{\mathcal{B}}}
\newcommand{\CYCbn}{\ensuremath{n_{B}}} % length
\newcommand{\CYCbk}{\ensuremath{k_{B}}} % dimension
\newcommand{\CYCbd}{\ensuremath{d_{B}}} % minimum Hamming distance
\newcommand{\inta}{\ensuremath{a}}
\newcommand{\intb}{\ensuremath{b}}
\newcommand{\HTconst}{\ensuremath{f}}
\newcommand{\HTconsta}{\ensuremath{f_1}}
\newcommand{\HTconstb}{\ensuremath{f_2}}
\newcommand{\lenseq}{\ensuremath{\delta}}
\newcommand{\noseq}{\ensuremath{\nu}}
\newcommand{\HTmult}{\ensuremath{z}}
\newcommand{\HTmulta}{\ensuremath{z_1}}
\newcommand{\HTmultb}{\ensuremath{z_2}}
\newcommand{\eigenvalue}[1]{\ensuremath{\lambda_{#1}}}
\newcommand{\map}[2]{\ensuremath{\mu(#1,#2)}}
\newcommand{\mapb}[2]{\ensuremath{\bar{\mu}(#1,#2)}}
\newcommand{\submap}[2]{\ensuremath{\nu(#1,#2)}}
\DeclareDocumentCommand \eigenvector { ooo }
{
	\IfNoValueTF {#3}
	{  
		\IfNoValueTF {#2}
			{ % less then two values given
				\IfNoValueTF {#1}
				{ 	% NO parameter
					\ensuremath{\mathbf{v}}	
				}
				{ % one parameter given
					\ensuremath{v_{#1}}	
				}
			}
			{ % two values given
			\ensuremath{\mathbf{v}_{#1}^{\langle #2 \rangle}}			
			}
	}
	{% all three parameters are given
	\ensuremath{{v}_{#1,#3}^{\langle #2 \rangle}}
	}
}
\DeclareDocumentCommand \gen { ooo }
{
	\IfNoValueTF {#3}
	{  
		\IfNoValueTF {#2}
			{ % less then two values given
				\IfNoValueTF {#1}
				{ 	% NO parameter
					\ensuremath{g(X)}	
				}
				{ % one parameter given
					\ensuremath{g^{#1}(X)}	
				}
			}
			{ % two values given
			\ensuremath{g_{#1,#2}(X)}			
			}
	}
	{% all three parameters are given
	\ensuremath{g_{#2,#3}^{#1}(X)}
	}
}
\DeclareDocumentCommand \genunred { ooo }
{
	\IfNoValueTF {#3}
	{  
		\IfNoValueTF {#2}
			{ % less then two values given
				\IfNoValueTF {#1}
				{ 	% NO parameter
					\ensuremath{u(X)}	
				}
				{ % one parameter given
					\ensuremath{u^{#1}(X)}	
				}
			}
			{ % two values given
              \ensuremath{u_{#1,#2}(X)}			
			}
	}
	{% all three parameters are given
      \ensuremath{u_{#2,#3}^{#1}(X)}
	}
}
\DeclareDocumentCommand \genbar { ooo }
{
	\IfNoValueTF {#3}
	{  
		\IfNoValueTF {#2}
			{ % less then two values given
				\IfNoValueTF {#1}
				{ 	% NO parameter
					\ensuremath{\bar{g}(X)}	
				}
				{ % one parameter given
					\ensuremath{\bar{g}^{#1}(X)}	
				}
			}
			{ % two values given
              \ensuremath{\bar{g}_{#1,#2}(X)}			
			}
	}
	{% all three parameters are given
      \ensuremath{\bar{g}_{#2,#3}^{#1}(X)}
	}
}
\DeclareDocumentCommand \genarg { moom }
{
  \IfNoValueTF {#2}
  {
    \ensuremath{g^{#1}(#4)}   
  }
  { % all four parameters are given
    \ensuremath{g^{#1}_{#2,#3}(#4)}
  }
}
\DeclareDocumentCommand \gencoeff{ mm }
{
  { % all four parameters are given
    \ensuremath{g^{#1}_{#2}}
  }
}
\DeclareDocumentCommand \genmat { oo }
{
  \IfNoValueTF {#2}
  { % less then two values given
    \IfNoValueTF {#1}
    {   % NO parameter
      \ensuremath{\mathbf{G}(X)}
    }
    { % one parameter given
      \ensuremath{\mathbf{G}^{#1}(X)}    
    }
  }
  { % two values given
    \ensuremath{\mathbf{G}^{#1}(#2)}           
  }
}
\DeclareDocumentCommand \supp{o}
{
	\IfNoValueTF {#1}
	{  
		\ensuremath{\mathcal{Y}}
	}
	{
		\ensuremath{\mathcal{Y}_{#1}}
	}
}
\DeclareDocumentCommand \errorsup{o}
{
	\IfNoValueTF {#1}
	{  
		\ensuremath{\mathcal{E}}
	}
	{
		\ensuremath{\mathcal{E}_{#1}}
	}
}
\DeclareDocumentCommand \noerrors{o}
{
	\IfNoValueTF {#1}
	{  
		\ensuremath{\mathcal{\varepsilon}}
	}
	{
		\ensuremath{\mathcal{\varepsilon}_{#1}}
	}
}
\DeclareDocumentCommand \eigenspace{o}
{
	\IfNoValueTF {#1}
	{  
		\ensuremath{\mathcal{V}}
	}
	{
		\ensuremath{\mathcal{V}_{#1}}
	}
}
\newcommand{\eigencode}[1]{\ensuremath{\mathbb{C}(#1)}}
\newcommand{\mult}[1]{\ensuremath{u_{#1}}}
\newcommand{\ECn}{\ensuremath{n^{ec}}}
\newcommand{\ECk}{\ensuremath{k^{ec}}}
\newcommand{\ECd}{\ensuremath{d^{ec}}}
\newcommand{\coset}[1]{\ensuremath{M_{#1}}}
\DeclareDocumentCommand \minpoly { mo }
{
  \IfNoValueTF {#2}
  {
    \ensuremath{M_{#1}(X)}   
  }
  { 
    \ensuremath{M_{#1}(#2)}
  }
}
\newcommand{\NewBCHBound}{\ensuremath{d^{\ast}}}
\newcommand{\rowop}[1]{\ensuremath{\mathsf{R}[#1]}}
\renewcommand{\tilde}{\widetilde}
\renewcommand{\bar}{\overline}
\newcommand{\interval}[1]{\ensuremath{[#1)}}
\newcommand\bigzero{\makebox(0,0){\text{\textnormal{\huge0}}}}
\newcommand{\ith}[1]{\ensuremath{#1^{\text{th}}}}
\newcommand{\nonzero}{nonzero}
\begin{document}
\IEEEoverridecommandlockouts
\title{Spectral Analysis of Quasi-Cyclic Product Codes}
\author{\IEEEauthorblockN{Alexander Zeh}\thanks{A. Zeh has been supported by the German research council (Deutsche Forschungsgemeinschaft, DFG) under grant Ze1016/1-1. S. Ling has been supported by NTU Research Grant M4080456. Parts of the presented work were published in the proceedings of the $10^{\text{th}}$ International ITG Conference on Systems, Communications and Coding 2015 (SCC’2015)~\cite{zeh_construction_2015}.}
\IEEEauthorblockA{Computer Science Department\\
Technion---Israel Institute of Technology\\
Haifa, Israel\\
\texttt{alex@codingtheory.eu}
}
\and
\IEEEauthorblockN{San Ling}
\IEEEauthorblockA{Division of Mathematical Sciences, School of Physical \& \\
Mathematical Sciences, Nanyang Technological University\\
Singapore, Republic of Singapore\\
\texttt{lingsan@ntu.edu.sg}}
}

\maketitle

\begin{abstract}
This paper considers a linear quasi-cyclic product code of two given quasi-cyclic codes of relatively prime lengths over finite fields. We give the spectral analysis of a quasi-cyclic product code in terms of the spectral analysis of the row- and the column-code. Moreover, we provide a new lower bound on the minimum Hamming distance of a given quasi-cyclic code and present a new algebraic decoding algorithm.

More specifically, we prove an explicit (unreduced) basis of an $\QCCal \QCCbl$-quasi-cyclic product code in terms of the generator matrix in reduced Gröbner basis with respect to the position-over-term order (RGB/POT) form of the $\QCCal$-quasi-cyclic row- and the $\QCCbl$-quasi-cyclic column-code, respectively. This generalizes the work of Burton and Weldon for the generator polynomial of a cyclic product code (where $\QCCal=\QCCbl=1$). Furthermore, we derive the generator matrix in Pre-RGB/POT form of an $\QCCal \QCCbl$-quasi-cyclic product code for two special cases: (i) for $\QCCal=2$ and $\QCCbl=1$, and (ii) if the row-code is a $1$-level $\QCCal$-quasi-cyclic code (for arbitrary $\QCCal$) and $\QCCbl=1$.
For arbitrary $\QCCal$ and $\QCCbl$, the Pre-RGB/POT form of the generator matrix of an $\QCCal \QCCbl$-quasi-cyclic product code is conjectured.

The spectral analysis is applied to the generator matrix of the product of an $\QCCl$-quasi-cyclic and a cyclic code, and we propose a new lower bound on the minimum Hamming distance of a given $\QCCl$-quasi-cyclic code. In addition, we develop an efficient syndrome-based  decoding algorithm for $\QCCl$-phased burst errors with guaranteed decoding radius.
\end{abstract}

\begin{IEEEkeywords}
Bound on the minimum Hamming distance, phased burst error, decoding, key equation, quasi-cyclic product code, reduced Gröbner basis, spectral analysis, syndrome 
\end{IEEEkeywords}

\section{Introduction}
The family of quasi-cyclic codes over finite fields is an important class of linear block codes, which is---in contrast to cyclic codes---known to be asymptotically good (see, e.g., Chen~\textit{et al.}~\cite{chen_results_1969}).
Several quasi-cyclic codes have the highest minimum Hamming distance for a given length and dimension (see, e.g., Gulliver and Bhargava~\cite{gulliver_best_1991} as well as Chen's and Grassl's databases~\cite{chen_database_2014,grassl_bounds_2007}). Many good LDPC codes are quasi-cyclic (see, e.g.,~\cite{butler_bounds_2013}) and the connection to convolutional codes was investigated among others in~\cite{solomon_connection_1979,esmaeili_link_1998,lally_algebraic_2006}.

Recent works of Barbier~\textit{et al.}~\cite{barbier_quasi-cyclic_2012, barbier_decoding_2013}, Lally and Fitzpatrick~\cite{lally_algebraic_2001, lally_quasicyclic_2003, lally_algebraic_2006}, Ling and Solé~\cite{ling_algebraic_2001, ling_algebraic_2003, ling_algebraic_2005}, Semenov and Trifonov~\cite{semenov_spectral_2012} and Güneri and Özbudak~\cite{guneri_bound_2012} discuss different aspects of the algebraic structure of quasi-cyclic codes. Although several of these works~\cite{barbier_quasi-cyclic_2012, barbier_decoding_2013,lally_algebraic_2001, lally_quasicyclic_2003, lally_algebraic_2006,ling_algebraic_2001, ling_algebraic_2003, ling_algebraic_2005, semenov_spectral_2012,guneri_bound_2012}  propose new lower bounds on the minimum Hamming distance, their estimates are still far away from the real minimum distance, and therefore, it is an open issue to find better bounds and in addition to develop efficient algebraic decoding approaches.

The work of Wasan~\cite{wasan_quasi_1977} considers quasi-cyclic product codes while investigating the mathematical properties of the wider class of quasi-abelian codes. Some more results were published in a short note by Wasan and Dass~\cite{dass_note_1983}. Koshy proposed a so-called ``circle'' quasi-cyclic product code in~\cite{koshy_quasi-cyclic_1972}.

This work provides the generator matrix of an $\QCCal \QCCbl$-quasi-cyclic product code $\QCCa \otimes \QCCb$ based on the given reduced Gröbner basis (RGB) representation of Lally and Fitzpatrick~\cite{lally_algebraic_2001} of the $\QCCal$-quasi-cyclic row-code $\QCCa$ and the $\QCCbl$-quasi-cyclic column-code $\QCCb$. This generalizes the results of Burton and Weldon~\cite{burton_cyclic_1965} and Lin and Weldon~\cite{lin_further_1970} for the generator polynomial of a cyclic product code (see also~\cite[Chapter 18]{macwilliams_theory_1988}). The generator matrix in Pre-RGB/POT form of an $\QCCal \QCCbl$-quasi-cyclic product code $\QCCa \otimes \QCCb$ is derived for two special cases: (i) for $\QCCal=2$ and $\QCCbl=1$, and (ii) if the row-code $\QCCa$ is a $1$-level $\QCCal$-quasi-cyclic code and $\QCCbl=1$. We conjecture the basis of $\QCCa \otimes \QCCb$ for arbitrary $\QCCal$ and $\QCCbl$.

We apply the spectral analysis of Semenov and Trifonov~\cite{semenov_spectral_2012} to the generator matrix in Pre-RGB/POT form of an $\QCCl$-quasi-cyclic product code $\QCCa \otimes \QCCb$, where $\QCCa$ is an $\QCCl$-quasi-cyclic code and $\QCCb$ is a cyclic code. Moreover, we propose a new lower bound $\NewBCHBound$ on the minimum Hamming distance of a given $\QCCl$-quasi-cyclic code $\QCCa$ via embedding $\QCCa$ into an $\QCCl$-quasi-cyclic product code $\QCCa \otimes \QCCb$. This embedding approach provides an efficient syndrome-based algebraic decoding algorithm that guarantees to decode up to $\lfloor (\NewBCHBound{-}1)/2 \rfloor$ $\QCCl$-phased burst errors. 

The paper is structured as follows. We introduce basic notation, recall relevant parts of the Gröbner basis theory for quasi-cyclic codes of Lally and Fitzpatrick~\cite{lally_algebraic_2001} and the spectral analysis technique of Semenov and Trifonov~\cite{semenov_spectral_2012} in Section~\ref{sec_Preliminaries}. Section~\ref{sec_ProductQCCQCC} covers elementary properties of quasi-cyclic product codes and our main theorem (Thm.~\ref{theo_ProductCodeQCCQCCUnred}) on the (unreduced) basis of an $\QCCal \QCCbl$-quasi-cyclic product code $\QCCa \otimes \QCCb$, in terms of the two given generator matrices in RGB/POT form of the $\QCCal$-quasi-cyclic row-code $\QCCa$ and the $\QCCbl$-quasi-cyclic column-code $\QCCb$, is proven. The generator matrix of a quasi-cyclic product code is derived for two special cases. The first case is a $2$-quasi-cyclic product code of a $2$-quasi-cyclic and a cyclic code and its generator matrix in Pre-RGB/POT form is proposed in Thm.~\ref{theo_ProductCode2-QCCTimesCyclic}. Thm.~\ref{theo_OneLevelQC} gives the RGB form for the second case, i.e., an $\QCCl$-quasi-cyclic product code of a $1$-level $\QCCl$-quasi-cyclic and a cyclic code. The explicit expression of the generator matrix of $\QCCa \otimes \QCCb$ in Pre-RGB/POT form for arbitrary $\QCCal$ and $\QCCbl$ is presumed in Conjecture~\ref{conj_ProductCodeQCCQCC}, which we verified through reducing the unreduced basis of several examples.

Although we could prove the RGB/POT form of the generator matrix of a quasi-cyclic product code only for the previously mentioned cases (Thm.~\ref{theo_ProductCode2-QCCTimesCyclic} and Thm.~\ref{theo_OneLevelQC}), we perform the spectral analysis for the instance of an $\QCCl$-quasi-cyclic product code $\QCCa \otimes \QCCb$, where $\QCCa$ is an $\QCCl$-quasi-cyclic and $\QCCb$ is a cyclic code in Section~\ref{sec_SpectralAnalysis} based on Conjecture~\ref{conj_ProductCodeQCCQCC}. The new lower bound $\NewBCHBound$ is proposed in Section~\ref{subsec_BoundingDistance}. Section~\ref{sec_Decoding} contains our syndrome-based decoding algorithm with guaranteed $\QCCl$-phased burst error-correcting radius $\lfloor (\NewBCHBound-1)/2 \rfloor$. We conclude the paper in Section~\ref{sec_conclusion}.

\section{Preliminaries} \label{sec_Preliminaries}
\subsection{Notation and Reduced Gröbner Basis (RGB)}
Let $\F{q}$ denote the finite field of order $q$, $\Fx{q}$ the polynomial ring over $\F{q}$ with indeterminate $X$, and $\F{q}^n$ the linear vector space over $\F{q}$  of dimension $n$. The entries of a vector $\vec{v} \in \F{q}^n$ are indexed from zero to $n{-}1$, i.e., $\vec{v} = (v_0 \ v_1 \ \cdots \ v_{n-1})$.
For two vectors $\vec{v}, \vec{w} \in \F{q}^n$, the scalar product $\sum_{i=0}^{n-1} v_i w_i$ is denoted by $\vec{v} \circ \vec{w}$. For two positive integers $a, b$ with $b > a$ the set of integers $\{a,a+1,\dots,b-1\}$ is denoted by $\interval{a,b}$ and we define the short-hand notation $\interval{b} \defeq \interval{0,b}$. An $m \times n$ matrix $\mathbf{M} \in \F{q}^{m \times n}$ is denoted as $\mathbf{M} = (m_{i,j})_{i \in \interval{m}}^{j \in \interval{n}}$ or where the size follows from the context, we use the short-hand notation $(m_{i,j})$.

A linear $\LIN{\QCCl \cdot \QCCm}{\QCCk}{\QCCd}{q}$ code $\QCC$ of length $\QCCl \QCCm$, dimension $\QCCk$, and minimum Hamming distance $\QCCd$ over $\F{q}$ is $\QCCl$-quasi-cyclic if every cyclic shift by $\QCCl$ of a codeword is again a codeword of $\QCC$, more explicitly if:
\begin{align*}
(c_{0,0} \ \cdots \ c_{\QCCl-1,0} \ c_{0,1} \  \cdots \ c_{\QCCl-1,1} \ \cdots \ c_{0,\QCCm-1} & \ \cdots \ c_{\QCCl-1,\QCCm-1}) \in \QCC \Rightarrow \\
& (c_{0,\QCCm-1} \ \cdots \ c_{\QCCl-1,\QCCm-1} \ c_{0,0} \ \cdots \ c_{\QCCl-1,0} \ \cdots \ c_{0,\QCCm-2} \ \cdots \ c_{\QCCl-1,\QCCm-2}) \in \QCC.
\end{align*}
We can represent a codeword of an $\LIN{\QCCl \cdot \QCCm}{\QCCk}{\QCCd}{q}$ $\QCCl$-quasi-cyclic code $\QCC$ as $\mathbf{c}(X) = (c_0(X) \ c_1(X) \ \cdots \ c_{\QCCl-1}(X)) \in \Fx{q}^{\ell} $, where each entry is given by
\begin{equation} \label{eq_QCCComponents}
c_j(X) \defeq \sum_{i=0}^{\QCCm-1} c_{j,i} X^{i}, \quad \forall j \in \interval{\QCCl}.
\end{equation}
Then, the defining property of the $\QCCl$-quasi-cyclic code $\QCC$ is that it is closed under multiplication by $X$ modulo $(X^{\QCCm}-1)$ in each entry.
\begin{lemma}[Codeword Representation: Vector to Univariate Polynomial] \label{lem_VectorToUnivariatePoly}
Let $(c_0(X) \ c_1(X) \ \cdots \ c_{\QCCl-1}(X))$ be a codeword of an $\LIN{\QCCl \cdot \QCCm}{\QCCk}{\QCCd}{q}$ $\QCCl$-quasi-cyclic code $\QCC$, where the entries are defined as in \eqref{eq_QCCComponents}. Then a codeword in $\QCC$, represented as one univariate polynomial of degree smaller than $\QCCl \QCCm$, is
\begin{equation} \label{eq_VectorToUnivariatePoly}
c(X) = \sum_{j=0}^{\QCCl-1} c_j(X^{\QCCl})X^j.
\end{equation}
\end{lemma}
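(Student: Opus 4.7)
The plan is to verify the identity by a direct expansion, using the bijection between the index pairs $(i,j)$ with $i \in \interval{\QCCm}$, $j \in \interval{\QCCl}$ and the integers $n \in \interval{\QCCl \QCCm}$ given by Euclidean division $n = i\QCCl + j$. This is really a bookkeeping lemma about two equivalent representations of the same codeword, and not an assertion that requires the quasi-cyclic structure of $\QCC$; the claim is simply that the polynomial on the right-hand side reproduces, coefficient by coefficient, the codeword laid out as $(c_{0,0},\ldots,c_{\QCCl-1,0},c_{0,1},\ldots,c_{\QCCl-1,\QCCm-1})$.

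First I would substitute the definition \eqref{eq_QCCComponents} into each $c_j(X^{\QCCl})$ to obtain $c_j(X^{\QCCl}) = \sum_{i=0}^{\QCCm-1} c_{j,i}\, X^{i\QCCl}$, then multiply by $X^j$ and sum over $j$ to get the double sum
\begin{equation*}
c(X) \;=\; \sum_{j=0}^{\QCCl-1}\sum_{i=0}^{\QCCm-1} c_{j,i}\, X^{i\QCCl + j}.
\end{equation*}
Next I would observe that the exponents $i\QCCl+j$, as $(i,j)$ ranges over $\interval{\QCCm}\times\interval{\QCCl}$, run once through every element of $\interval{\QCCl\QCCm}$; this is exactly the Euclidean division of an integer $n \in \interval{\QCCl\QCCm}$ by $\QCCl$. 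Hence $c(X)$ has degree at most $(\QCCm-1)\QCCl + (\QCCl-1) = \QCCl\QCCm - 1 < \QCCl\QCCm$, and its coefficient at position $i\QCCl+j$ equals $c_{j,i}$, which is precisely the entry in position $i\QCCl+j$ of the vector $(c_{0,0}\ \cdots\ c_{\QCCl-1,0}\ c_{0,1}\ \cdots\ c_{\QCCl-1,\QCCm-1})$.

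Since this vector lies in $\QCC$ by hypothesis, the polynomial $c(X)$ obtained from the formula \eqref{eq_VectorToUnivariatePoly} is indeed the univariate polynomial representation of that codeword, which completes the proof. There is no real obstacle: the only subtle point is to confirm that the map $(i,j) \mapsto i\QCCl + j$ is a bijection $\interval{\QCCm}\times\interval{\QCCl} \to \interval{\QCCl\QCCm}$, which follows immediately from the uniqueness of Euclidean division.
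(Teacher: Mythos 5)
Your proof is correct and follows essentially the same route as the paper's: the paper's argument is exactly the substitution of \eqref{eq_QCCComponents} into \eqref{eq_VectorToUnivariatePoly} yielding $c(X)=\sum_{j=0}^{\QCCl-1}\sum_{i=0}^{\QCCm-1} c_{j,i}X^{i\QCCl+j}$, and your added remarks on the Euclidean-division bijection and the degree bound merely make explicit what the paper leaves implicit.
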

\begin{proof}
Substituting~\eqref{eq_QCCComponents} into~\eqref{eq_VectorToUnivariatePoly} leads to:
\begin{align*}
c(X) & = \sum_{j=0}^{\QCCl-1} c_j(X^{\QCCl})X^j = \sum_{j=0}^{\QCCl-1} \sum_{i=0}^{\QCCm-1} c_{j,i} X^{i\QCCl+j}.
\end{align*}
\end{proof}
Lally and Fitzpatrick~\cite{lally_construction_1999, lally_algebraic_2001} showed that an $\QCCl$-quasi-cyclic code $\QCC$ can be viewed as an $R$-submodule of the algebra $R^{\QCCl}$, where $R = \Fx{q}/\langle X^{\QCCm}-1 \rangle$. The code $\QCC$ is the image of an $\Fx{q}$-submodule $\tilde{\QCC}$ of $\Fx{q}^{\QCCl}$ containing
\begin{equation*}
\basis = \left \langle (X^{\QCCm}-1)\mathbf{e}_j, j \in \interval{\QCCl} \right \rangle,
\end{equation*}
where $\mathbf{e}_j \in \Fx{q}^{\QCCl}$ is the standard basis vector with one in position $j$ and zero elsewhere under the natural homomorphism
\begin{equation} \label{eq_Homorphism}
\begin{split}
\phi: \; \Fx{q}^{\QCCl} & \rightarrow  R^{\QCCl} \\
 (c_0(X) \ \cdots \ c_{\QCCl-1}(X)) & \mapsto (c_0(X) + \langle X^{\QCCm}-1 \rangle \  \cdots \ c_{\QCCl-1}(X) +\langle X^{\QCCm}-1 \rangle ).
\end{split}
\end{equation}
The submodule has a generating set of the form $\{ \mathbf{u}_i, i \in \interval{z}, (X^{\QCCm}-1)\mathbf{e}_j, j \in \interval{\QCCl} \}$, where $\mathbf{u}_i \in \Fx{q}^{\QCCl}$ and $z \leq \QCCl$ (see, e.g.,~\cite[Chapter 5]{cox_using_1998} for further information) and can be represented as a matrix with entries in $\Fx{q}$:
\begin{equation} \label{eq_GeneratorWithBasis}
\mathbf{U}(X) = 
\begin{pmatrix}
u_{0,0}(X) & u_{0,1}(X) & \cdots & u_{0,\QCCl-1}(X) \\
u_{1,0}(X) & u_{1,1}(X) & \cdots & u_{1,\QCCl-1}(X) \\
 \vdots & \vdots & \ddots & \vdots \\
u_{z-1,0}(X) & u_{z-1,1}(X) & \cdots & u_{z-1,\QCCl-1}(X) \\
X^{\QCCm}-1 &  &  \\
& X^{\QCCm}-1 & \multicolumn{2}{c}{\bigzero} \\
\multicolumn{2}{c}{\bigzero} & \ddots \\
& & & X^{\QCCm}-1
\end{pmatrix}.
\end{equation}
Every matrix $\mathbf{U}(X)$ as in~\eqref{eq_GeneratorWithBasis} of an $\QCCl$-quasi-cyclic code $\QCC$ can be transformed to a reduced Gröbner basis (RGB) with respect to the position-over-term order (POT) in $\Fx{q}^{\QCCl}$ (as shown in~\cite{lally_construction_1999, lally_algebraic_2001}).
A basis in RGB/POT form can be represented by an upper-triangular $\ell \times \ell$ matrix with entries in $\Fx{q}$ as follows: 
\begin{equation} \label{def_GroebBasisMatrix}
\mathbf{G}(X) =
\begin{pmatrix}
g_{0,0}(X) & g_{0,1}(X) & \cdots  & g_{0,\QCCl-1}(X) \\
 & g_{1,1}(X) & \cdots & g_{1,\QCCl-1}(X) \\
\multicolumn{2}{c}{\bigzero}& \ddots & \vdots \\
 &  & & g_{\QCCl-1,\QCCl-1}(X)
\end{pmatrix},
\end{equation}
where the following conditions must be fulfilled:\\

\begin{tabular}[htb]{lrcll}
C1: & $g_{i,j}(X)$ & $=$ & $0,$ & $\forall 0 \leq j < i < \QCCl$, \\
C2: & $\deg g_{j,i}(X)$ & $<$ & $ \deg g_{i,i}(X),$ & $ \forall j < i, i \in \interval{\QCCl}$,\\
C3: & $g_{i,i}(X)$ & $|$ & $ (X^{\QCCm}-1),$ & $\forall i \in \interval{\QCCl}$,\\
C4: & if $g_{i,i}(X)$ & $=$ & $X^{\QCCm}-1$ then \\ 
& $g_{i,j}(X)$ & $=$ & $0,$ & $ \forall j \in \interval{i+1, \QCCl}$.
\end{tabular}\\

We refer to these conditions as RGB/POT conditions C1--C4 throughout this paper and refer to the unreduced representation as in~\eqref{eq_GeneratorWithBasis} if necessary.
The rows of $\mathbf{G}(X)$  with $g_{i,i}(X) \neq X^{\QCCm}-1$ (i.e., the rows that do not map to zero under $\phi$ as in~\eqref{eq_Homorphism}) are called the reduced generating set of the quasi-cyclic code $\QCC$.
Let $k_j= \QCCm - \deg g_{j,j}(X)$ for all $j \in \interval{\QCCl}$. A codeword of $\QCC$ can be represented as $\mathbf{c}(X) = \mathbf{i}(X) \mathbf{G}(X)$, where $\vec{i}(X) = (i_0(X) \ i_1(X) \ \cdots \ i_{\QCCl-1}(X))$ and $\deg i_j(X) < k_j, \, \forall j \in \interval{\QCCl}$. The dimension of $\QCC$ is $\QCCk = \QCCm \QCCl - \sum_{j=0}^{\QCCl-1} \deg g_{j,j}(X)$.
For $\QCCl=1$, the generator matrix $\mathbf{G}(X)$ in RGB/POT form as in~\eqref{def_GroebBasisMatrix} becomes the well-known generator polynomial of a cyclic code of degree $\QCCm-\QCCk$. In this paper we consider the single-root case, i.e., $\gcd(\QCCm, \text{char}(\F{q}))=1$.

We recall the following definition (see \cite[Thm. 3.2]{lally_construction_1999}).
\begin{definition}[$\level$-level Quasi-Cyclic Code {\cite[Thm. 3.2]{lally_construction_1999}}] \label{def_LevelQC}
We call an $\LIN{\QCCl \cdot \QCCm}{\QCCk}{\QCCd}{q}$ $\QCCl$-quasi-cyclic code $\QCC$ an $\level$-level quasi-cyclic code if there is an index $\level \in \interval{\QCCl}$ for which the RGB/POT matrix as defined in~\eqref{def_GroebBasisMatrix} is such that $g_{\level-1,\level-1}(X) \neq X^{\QCCm}-1$ and $g_{\level,\level}(X) = \dots = g_{\QCCl-1,\QCCl-1}(X) = X^{\QCCm}-1$.
\end{definition}
Furthermore, the generator matrix in RGB/POT form of a $1$-level $\QCCl$-quasi-cyclic code as in Def.~\ref{def_LevelQC} is stated in the following corollary.
\begin{corollary}[$1$-level Quasi-Cyclic Code {\cite[Corollary 3.3]{lally_construction_1999}}] \label{cor_OneLevelQC}
The generator matrix in RGB/POT form of an $\LIN{\QCCl \cdot \QCCm}{\QCCk}{\QCCd}{q}$ $1$-level $\QCCl$-quasi-cyclic code $\QCC$ has the following form:
\begin{equation*}
\mathbf{G}(X) =
\begin{pmatrix}
 g(X) & g(X) f_{1}(X)  & \cdots  & g(X)f_{\QCCl-1}(X)
\end{pmatrix},
\end{equation*}
where $g(X) \mid (X^{\QCCm}-1)$, $ \deg g(X) = \QCCm-\QCCk$, and $f_{1}(X), \dots, f_{\QCCl-1}(X) \in \Fx{q}$.
\end{corollary}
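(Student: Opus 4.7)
The plan is in two stages: first use the RGB/POT conditions to reduce $\mathbf{G}(X)$ to a single nontrivial row, and then show that its leading entry divides the others.

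By Def.~\ref{def_LevelQC}, $g_{i,i}(X)=X^{\QCCm}-1$ for every $i\in\interval{1,\QCCl}$. Condition C4 then forces $g_{i,j}(X)=0$ for $j>i\geq 1$, and C1 already gives $g_{i,j}(X)=0$ for $j<i$. Hence rows $1,\ldots,\QCCl-1$ of $\mathbf{G}(X)$ contain only $X^{\QCCm}-1$ on the diagonal and zeros elsewhere, so they lie in the kernel of $\phi$ in~\eqref{eq_Homorphism}. The reduced generating set of $\QCC$ therefore consists of the single row $(g(X),g_{0,1}(X),\ldots,g_{0,\QCCl-1}(X))$, where $g(X)=g_{0,0}(X)$; by C3, $g(X)\mid(X^{\QCCm}-1)$, and substituting into the dimension formula $\QCCk=\QCCm\QCCl-\sum_{j=0}^{\QCCl-1}\deg g_{j,j}(X)$ yields $\deg g(X)=\QCCm-\QCCk$.

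To prove $g(X)\mid g_{0,j}(X)$ for $j\geq 1$, I would exploit the uniqueness of the canonical representation: every codeword can be written as $\mathbf{c}(X)=i_0(X)(g(X),g_{0,1}(X),\ldots,g_{0,\QCCl-1}(X))\bmod(X^{\QCCm}-1)$ with $\deg i_0(X)<\QCCk$. The first-coordinate projection is then injective, because if it vanishes then $i_0(X)g(X)\equiv 0\pmod{X^{\QCCm}-1}$, and the inequality $\deg(i_0g)<\QCCk+\deg g=\QCCm$ turns this congruence into the polynomial equality $i_0(X)g(X)=0$, forcing $i_0=0$ and hence the whole codeword to be zero. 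Now set $h(X)=(X^{\QCCm}-1)/g(X)$ and consider the codeword $\mathbf{c}$ obtained by acting with $h(X)$ on the reduced generator via the $R$-module structure of $\QCC$. Its first coordinate equals $h(X)g(X)=X^{\QCCm}-1\equiv 0$, so by the injectivity just established the entire codeword vanishes. In particular $h(X)g_{0,j}(X)\equiv 0\pmod{X^{\QCCm}-1}$ for each $j\geq 1$, which is equivalent to $g(X)\mid g_{0,j}(X)$; writing $g_{0,j}(X)=g(X)f_j(X)$ with $f_j(X)\in\Fx{q}$ then delivers the claimed form of $\mathbf{G}(X)$.

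The subtle point to handle carefully is that $h(X)$ has degree $\QCCk$ and is therefore \emph{not} itself a canonical information polynomial (which would require degree strictly less than $\QCCk$). The remedy is to treat $h(X)\cdot(g(X),g_{0,1}(X),\ldots,g_{0,\QCCl-1}(X))$ as an element of $\QCC$ via its $R$-module structure, then rewrite it in its unique canonical form with some $i_0(X)$ of degree below $\QCCk$ before invoking injectivity; a single degree inequality then forces $i_0=0$, and the divisibility of each off-diagonal entry by $g(X)$ falls out without further computation.
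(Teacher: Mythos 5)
Your argument is correct. Note that the paper itself gives no proof of this statement: it is imported verbatim as Corollary~3.3 of Lally and Fitzpatrick~\cite{lally_construction_1999}, so there is no in-paper proof to match against. Your first stage (rows $1,\dots,\QCCl-1$ collapse to $(X^{\QCCm}-1)\mathbf{e}_j$ by C1/C4 and die under $\phi$, C3 gives $g(X)\mid X^{\QCCm}-1$, and the dimension formula gives $\deg g(X)=\QCCm-\QCCk$) is exactly the routine part. Your second stage is a clean self-contained way to get the divisibility $g(X)\mid g_{0,j}(X)$: the injectivity of the first-coordinate projection on canonical representatives $i_0(X)$ with $\deg i_0(X)<\QCCk$ (so $\deg\bigl(i_0(X)g(X)\bigr)<\QCCm$ and no reduction occurs), combined with multiplying the generating row by $h(X)=(X^{\QCCm}-1)/g(X)$ inside the $R$-module, forces $h(X)g_{0,j}(X)\equiv 0 \pmod{X^{\QCCm}-1}$ and hence $g(X)\mid g_{0,j}(X)$; you also correctly flag and resolve the point that $h(X)$ has degree $\QCCk$ and must first be rewritten in canonical form before invoking injectivity. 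This is essentially the annihilator argument that the present paper uses elsewhere in a different guise --- in the proof of Theorem~\ref{theo_ProductCode2-QCCTimesCyclic} it invokes a matrix $\mathbf{A}(X)$ with $\mathbf{A}(X)\genmat[A]=(X^{\QCCam}-1)\mathbf{I}$, and for a $1$-level code the relevant row of that identity says precisely that $\bigl((X^{\QCCm}-1)/g(X)\bigr)\,g_{0,j}(X)\equiv 0 \pmod{X^{\QCCm}-1}$ --- so your route reaches the same divisibility by an elementary codeword-counting/injectivity argument instead of quoting the annihilator-matrix machinery. No gaps.
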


\subsection{Spectral Analysis of Quasi-Cyclic Codes} \label{Subsec_SpectralAnalysis}
Let $\mathbf{G}(X)$ be the upper-triangular generator matrix of a given $\LIN{\QCCl \cdot \QCCm}{\QCCk}{\QCCd}{q}$ $\QCCl$-quasi-cyclic code $\QCC$ in RGB/POT form as in~\eqref{def_GroebBasisMatrix}. Let $\alpha \in \F{q^{\QCCs}}$ be an \ith{\QCCm} root of unity.
An eigenvalue $\eigenvalue{i}$ of $\QCC$ is defined to be a root of $\det(\mathbf{G}(X))$, i.e., a root of $\prod_{j=0}^{\QCCl-1} g_{j,j}(X)$.
The \textit{algebraic} multiplicity of $\eigenvalue{i}$ is the largest integer $ \mult{i} $ such that 
$(X-\eigenvalue{i})^{\mult{i}} \mid \det(\mathbf{G}(X)).$
Semenov and Trifonov~\cite{semenov_spectral_2012} defined the \textit{geometric} multiplicity of an eigenvalue $\eigenvalue{i}$ as the dimension of the right kernel of the matrix $\mathbf{G}(\eigenvalue{i})$, i.e., the dimension of the solution space of the homogeneous linear system of equations:
\begin{equation} \label{eq_eigenvectors}
\mathbf{G}(\eigenvalue{i}) \eigenvector =  \textbf{0}.
\end{equation}
The solution space of~\eqref{eq_eigenvectors} is called the right kernel eigenspace and it is denoted by $\eigenspace[i]$. Furthermore, it was shown that, for a matrix $\mathbf{G}(X) \in \Fx{q}^{\QCCl \times \QCCl}$ in RGB/POT form, the algebraic multiplicity $\mult{i}$ of an eigenvalue $\eigenvalue{i}$ equals the geometric multiplicity~\cite[Lemma 1]{semenov_spectral_2012}.
\begin{definition}[Pre-RGB/POT Form] \label{def_PreRGBPOTForm}
A generator matrix $\bar{\mathbf{G}}(X)$ of $\QCC$ that satisfies RGB/POT Conditions C1, C3 and C4, but not C2, is called a matrix in Pre-RGB/POT form. More explicitly, the generator matrix has the following form:
\begin{equation} \label{eq_PRE-RGBPOTForm}
\bar{\mathbf{G}}(X) =
\begin{pmatrix}
g_{0,0}(X) & \bar{g}_{0,1}(X) & \cdots  & \bar{g}_{0,\QCCl-1}(X) \\
 & g_{1,1}(X) & \cdots & \bar{g}_{1,\QCCl-1}(X) \\
\multicolumn{2}{c}{\bigzero}& \ddots & \vdots \\
 &  & & g_{\QCCl-1,\QCCl-1}(X)
\end{pmatrix}, 
\end{equation}
where the entries of $\bar{\mathbf{G}}(X)$ that can be different from their counterparts in the RGB/POT form, are marked by a bar.
\end{definition}

\begin{lemma}[Equivalence of the Spectral Analysis of a Matrix in Pre-RGB/POT Form] \label{lem_EquivalenceSpectralAnalysis}
Let $\mathbf{G}(X)$ be an $\QCCl \times \QCCl$ generator matrix of an $\QCCl$-quasi-cyclic code $\QCC$ in RGB/POT form as in~\eqref{def_GroebBasisMatrix} and let $\bar{\mathbf{G}}(X)$ be a generator matrix of the same code in Pre-RGB/POT form as in Definition~\ref{def_PreRGBPOTForm}.

Let $\lambda_i$ be an eigenvalue of $\mathbf{G}(X)$. Then, the right kernels of $\mathbf{G}(\lambda_i)$ and $\bar{\mathbf{G}}(\lambda_i)$ are equal, i.e., the (algebraic and geometric) multiplicity and the corresponding eigenvalues are identical.
\end{lemma}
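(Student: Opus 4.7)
The plan is to produce an invertible transition matrix $\mathbf{T}(X) \in \Fx{q}^{\QCCl \times \QCCl}$ satisfying $\bar{\mathbf{G}}(X) = \mathbf{T}(X)\mathbf{G}(X)$ and then exploit the invertibility of $\mathbf{T}(\lambda_i)$. First I would observe that both $\mathbf{G}(X)$ and $\bar{\mathbf{G}}(X)$ are upper-triangular $\QCCl \times \QCCl$ matrices sharing the same diagonal entries $g_{0,0}(X),\dots,g_{\QCCl-1,\QCCl-1}(X)$, which is built into Definition~\ref{def_PreRGBPOTForm}. Consequently
\[
\det \bar{\mathbf{G}}(X) = \det \mathbf{G}(X) = \prod_{i=0}^{\QCCl-1} g_{i,i}(X),
\]
so the two matrices already agree on the eigenvalues and on their algebraic multiplicities.

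Next I would construct $\mathbf{T}(X)$ via a Gröbner-style reduction. Starting from $\bar{\mathbf{G}}(X)$ and processing the rows from the bottom ($j=\QCCl-1$) upwards and, within each row $j$, the columns from left to right ($i=j+1,j+2,\ldots,\QCCl-1$), one reduces each entry $\bar{g}_{j,i}(X)$ modulo $g_{i,i}(X)$ by subtracting a suitable $\Fx{q}$-multiple of row $i$ from row $j$. Because row $i$ is zero strictly to the left of column $i$, such a subtraction leaves the already-reduced entries in columns $<i$ of row $j$ untouched, while any further modification of columns $>i$ is absorbed when those columns are processed in turn. After one sweep through all rows the matrix coincides with the (unique) RGB/POT form $\mathbf{G}(X)$. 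Each elementary operation has the form ``add $h(X)\,\text{row}_i$ to $\text{row}_j$'' with $i>j$, i.e.\ left-multiplication by an upper-triangular matrix with ones on the diagonal. Composing them gives $\mathbf{G}(X) = \mathbf{S}(X)\bar{\mathbf{G}}(X)$ with $\mathbf{S}(X)$ upper-triangular and unit-diagonal, so $\mathbf{T}(X):=\mathbf{S}(X)^{-1}$ inherits the same structure and satisfies $\bar{\mathbf{G}}(X) = \mathbf{T}(X)\mathbf{G}(X)$ with $\det\mathbf{T}(X)=1$, reconfirming the determinant identity above.

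Finally I would evaluate at $\lambda_i$. Since $\mathbf{T}(\lambda_i)$ is still upper-triangular with ones on the diagonal as a matrix over $\F{q^{\QCCs}}$, it is invertible. The identity
\[
\bar{\mathbf{G}}(\lambda_i)\,\eigenvector = \mathbf{T}(\lambda_i)\,\mathbf{G}(\lambda_i)\,\eigenvector
\]
immediately implies $\bar{\mathbf{G}}(\lambda_i)\,\eigenvector=\mathbf{0}\iff\mathbf{G}(\lambda_i)\,\eigenvector=\mathbf{0}$, so the right kernels, and hence the geometric multiplicities, coincide. Combined with the determinant identity and the algebraic $=$ geometric multiplicity result of~\cite[Lemma~1]{semenov_spectral_2012}, this yields the claim.

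The main obstacle is the middle step: one must pin down an ordering of the Gröbner-style reduction that forces $\mathbf{S}(X)$ to be upper-triangular with unit diagonal, rather than just a generic invertible matrix over $\Fx{q}$. Once the left-to-right / bottom-up ordering is fixed, the structural form of $\mathbf{T}(X)$, and therefore the invertibility of $\mathbf{T}(\lambda_i)$, becomes transparent and the remainder is a routine linear-algebraic observation.
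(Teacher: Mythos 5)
Your proof is correct and follows essentially the same route as the paper's: the paper's one-line argument simply observes that $\bar{\mathbf{G}}(X)$ is reduced to $\mathbf{G}(X)$ by row operations over $\Fx{q}$, and concludes that the right kernels at $\lambda_i$ coincide. Your write-up supplies the detail the paper leaves implicit, namely that these operations compose to a unit upper-triangular (unimodular) transition matrix whose evaluation at $\lambda_i$ remains invertible, which is exactly what upgrades mere kernel containment to equality.
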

\begin{proof}
To reduce the matrix $\bar{\mathbf{G}}(X)$ to $\mathbf{G}(X)$ only linear transformations in $\Fx{q}$, i.e., linear combinations of rows are necessary and therefore the right kernels of $\mathbf{G}(\lambda_i)$ and $\bar{\mathbf{G}}(\lambda_i)$ are the same.
\end{proof}
Moreover, Semenov and Trifonov~\cite{semenov_spectral_2012} gave an explicit construction of the parity-check matrix of an $\LIN{\QCCl \cdot \QCCm}{\QCCk}{\QCCd}{q}$ $\QCCl$-quasi-cyclic code $\QCC$ and proved a BCH-like~\cite{bose_class_1960, hocquenghem_codes_1959} lower bound on the minimum Hamming distance $\QCCd$ (see Thm.~\ref{theo_SemenovTrifonovBound}) using the parity-check matrix and the so-called eigencode. We generalize their approach in Section~\ref{sec_SpectralAnalysis}, but do not explicitly need the parity-check matrix for the proof, though the eigencode is still needed.
\begin{definition}[Eigencode] \label{def_eigencode}
Let $\eigenspace{} \subseteq \F{q^{\QCCs}}^{\QCCl}$ be an eigenspace. Define the $\LIN{\ECn=\QCCl}{\ECk}{\ECd}{q}$ eigencode corresponding to $\eigenspace{}$ by
\begin{equation*} %\label{def_eq_eigencode}
\eigencode{\eigenspace{}} \defeq \left \lbrace \vec{c} \in \F{q}^{\QCCl} \mid \forall \eigenvector \in \eigenspace : \vec{v} \circ \vec{c} = 0  \right\rbrace.
\end{equation*}
\end{definition}
If there exists an eigenvector $\eigenvector = (\eigenvector[0] \ \eigenvector[1] \ \cdots \ \eigenvector[\QCCl-1]) \in \eigenspace$ with entries $\eigenvector[0], \eigenvector[1], \dots, \eigenvector[\QCCl-1]$ that are linearly independent over $\F{q}$, then $\eigencode{\eigenspace{}} = \{ (0 \ 0 \ \cdots \ 0) \} $ and $\ECd$ is infinity.

To describe quasi-cyclic codes explicitly, we need to recall the following facts related to cyclic codes. A $q$-cyclotomic coset $\coset{i}$ is defined as:
\begin{equation} \label{eq_cyclotomiccoset}
 \coset{i} \defeq \Big\{ iq^j \mod \QCCm \, \vert \, j \in \interval{a} \Big\},
\end{equation}
where $a$ is the smallest positive integer such that $iq^{a} \equiv i \bmod \QCCm$. 
The minimal polynomial in $\Fx{q}$ of the element $\alpha^i \in \F{q^{\QCCs}}$ is given by
\begin{equation} \label{eq_MinPoly}
\minpoly{\alpha^i} = \prod_{j \in \coset{i} } (X-\alpha^j).
\end{equation}

\section{Quasi-Cyclic Product Codes} \label{sec_ProductQCCQCC}
In this section we consider an $\QCCal \QCCbl$-quasi-cyclic product code $\QCCa \otimes \QCCb$, where the symbol $\otimes$ stems from the fact that a generator matrix with entries in $\F{q}$ of $\QCCa \otimes \QCCb$ is the Kronecker product of the generator matrices (with entries in $\F{q}$) of $\QCCa$ and $\QCCb$ (see, e.g., \cite[Ch. 18. §2]{macwilliams_theory_1988}). In the following, let $\QCCa$ be an $\LIN{\QCCan = \QCCal \cdot \QCCam}{\QCCak}{\QCCad}{q}$ $\QCCal$-quasi-cyclic code generated by the following matrix in RGB/POT form as defined in~\eqref{def_GroebBasisMatrix}:
\begin{equation} \label{eq_GroebMatrixCodeA}
\genmat[A] = 
\begin{pmatrix} 
\gen[A][0][0] & \gen[A][0][1] & \cdots & \gen[A][0][\QCCal-1]\\
 & \gen[A][1][1] & \cdots & \gen[A][1][\QCCal-1] \\
\multicolumn{2}{c}{\bigzero}& \ddots & \vdots \\
 &  & & \gen[A][\QCCal-1][\QCCal-1] 
\end{pmatrix},
\end{equation}
and let $\QCCb$ be an $\LIN{\QCCbn = \QCCbl \cdot \QCCbm}{\QCCbk}{\QCCbd}{q}$ $\QCCbl$-quasi-cyclic code with generator matrix in RGB/POT form:
\begin{equation} \label{eq_GroebMatrixCodeB}
\genmat[B] = 
\begin{pmatrix}
\gen[B][0][0] & \gen[B][0][1] & \cdots & \gen[B][0][\QCCbl-1]\\
 & \gen[B][1][1] & \cdots & \gen[B][1][\QCCbl-1] \\
\multicolumn{2}{c}{\bigzero}& \ddots & \vdots \\
 &  & & \gen[B][\QCCbl-1][\QCCbl-1] 
\end{pmatrix}.
\end{equation}
We assume throughout the paper that $\gcd(\QCCan, \CYCbn) = 1$. Let two integers $\inta$ and $\intb$ be such that 
\begin{equation} \label{eq_BEzoutRel}
\inta \QCCan + \intb \CYCbn = 1.
\end{equation}
A codeword $c(X) \in \Fx{q}$ of the $\LIN{\QCCn = \QCCal \QCCbl \cdot \QCCam \QCCbm}{\QCCk = \QCCak \QCCbk}{\QCCd = \QCCad \QCCbd}{q}$ product code $\QCCa \otimes \QCCb$ can then be obtained from the $\QCCbn \times \QCCan$ matrix $(m_{i,j})_{i \in \interval{\QCCbn}}^{j \in \interval{\QCCan}}$ representation, where each row is in $\QCCa$ and each column is in $\QCCb$, as follows:
\begin{equation} \label{eq_OneUnivariatePolyProduct}
c(X) \equiv \sum_{i=0}^{\QCCbn-1} \sum_{j=0}^{\QCCan-1}  m_{i,j} X^{\map{i}{j}}  \mod (X^{\QCCn}-1), 
\end{equation}
where we give $\map{i}{j}$ in Lemma~\ref{lem_MappingToUnivariatePolyQCC}. This mapping was stated by Wasan in~\cite{wasan_quasi_1977} and generalizes the result of Burton and Weldon~\cite[Thm. I]{burton_cyclic_1965} for a cyclic product code to the case of an $\QCCal \QCCbl$-quasi-cyclic product code $\QCCa \otimes \QCCb$.
\begin{lemma}[Mapping to a Univariate Polynomial~\cite{wasan_quasi_1977}] \label{lem_MappingToUnivariatePolyQCC}
Let $\QCCa$ be an $\QCCal$-quasi-cyclic code of length $\QCCan$ and let $\QCCb$ be an $\QCCbl$-quasi-cyclic code of length $\QCCbn$. The product code $\QCCa \otimes \QCCb$ is an $\QCCal \QCCbl$-quasi-cyclic code of length $\QCCn = \QCCan \QCCbn$ if $\gcd(\QCCan, \QCCbn) = 1$.
\end{lemma}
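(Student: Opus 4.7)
The plan is to establish the lemma by constructing an explicit bijection $\map{i}{j}:\interval{\QCCbn}\times\interval{\QCCan}\to\interval{\QCCn}$ under which the matrix representation of $\QCCa\otimes\QCCb$ is carried onto a subspace of $\Fx{q}/\langle X^{\QCCn}-1\rangle$ that is closed under multiplication by $X^{\QCCal\QCCbl}$. The length claim $\QCCn=\QCCan\QCCbn$ is built into the bijection, so the real work is to exhibit $\map{\cdot}{\cdot}$ and to verify the quasi-cyclic shift property.

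To set up the map I would split indices in a mixed-radix fashion, writing $i=i_0\QCCbl+i_1$ with $i_0\in\interval{\QCCbm}$, $i_1\in\interval{\QCCbl}$, and $j=j_0\QCCal+j_1$ with $j_0\in\interval{\QCCam}$, $j_1\in\interval{\QCCal}$. Since $\gcd(\QCCan,\QCCbn)=1$ forces $\gcd(\QCCam,\QCCbm)=1$, the Chinese Remainder Theorem (CRT) produces a unique $k_0\in\interval{\QCCam\QCCbm}$ with $k_0\equiv j_0\pmod{\QCCam}$ and $k_0\equiv i_0\pmod{\QCCbm}$, and I would set $\map{i}{j}\defeq k_0\QCCal\QCCbl+i_1\QCCal+j_1$. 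Being a composition of the CRT isomorphism with the mixed-radix digit isomorphism, this is a bijection onto $\interval{\QCCn}$.

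The heart of the argument is the shift identity
\[
\map{(i+\QCCbl)\bmod\QCCbn}{(j+\QCCal)\bmod\QCCan}\equiv\map{i}{j}+\QCCal\QCCbl\pmod{\QCCn}.
\]
Incrementing $j$ by $\QCCal$ leaves $j_1$ fixed and sends $j_0\mapsto(j_0+1)\bmod\QCCam$, and analogously for $i$; by the uniqueness clause of CRT, the associated $k_0$ is then incremented to $(k_0+1)\bmod(\QCCam\QCCbm)$, which advances $\map{i}{j}$ by exactly $\QCCal\QCCbl$ modulo $\QCCn$. Hence for any $M=(m_{i,j})\in\QCCa\otimes\QCCb$ with image $c(X)=\sum_{i,j}m_{i,j}X^{\map{i}{j}}$, the product $X^{\QCCal\QCCbl}c(X)\bmod(X^{\QCCn}-1)$ is precisely the image of the matrix obtained by right-shifting every row of $M$ by $\QCCal$ and down-shifting every column by $\QCCbl$. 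Both operations preserve $\QCCa$ and $\QCCb$ respectively by their defining quasi-cyclicity, so the shifted matrix still lies in $\QCCa\otimes\QCCb$; closure under the $\QCCal\QCCbl$-shift follows, and the code is $\QCCal\QCCbl$-quasi-cyclic of length $\QCCn$.

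The main obstacle I anticipate is the CRT bookkeeping of the wraparound: when $j_0+1=\QCCam$ or $i_0+1=\QCCbm$, the new $k_0'=(k_0+1)\bmod(\QCCam\QCCbm)$ differs from $k_0+1$ by $\QCCam\QCCbm$, and one must confirm that the resulting discrepancy in $\map{i}{j}$ is absorbed modulo $\QCCn$. This boils down to the identity $\QCCam\QCCbm\cdot\QCCal\QCCbl=\QCCn$, which is transparent once the decomposition is untangled, but merits explicit verification so that the univariate shift is seen to match the simultaneous row-column shift of the matrix in every case.
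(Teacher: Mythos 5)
Your proof is correct, and it follows the same overall strategy as the paper---exhibit an explicit bijective serialization of the $\QCCbn \times \QCCan$ matrix into $\interval{\QCCn}$, prove the key shift identity $\map{i+\QCCbl}{j+\QCCal} \equiv \map{i}{j} + \QCCal\QCCbl \pmod{\QCCn}$, and conclude closure from the row- and column-quasi-cyclicity---but your serialization map is genuinely different from the paper's. The paper (following Wasan and Burton--Weldon) uses the single Bézout-based formula $\map{i}{j} = i \inta \QCCan \QCCal + j \intb \QCCbn \QCCbl \bmod \QCCn$ with $\inta\QCCan + \intb\QCCbn = 1$, for which the shift identity is a one-line computation; you instead split $i,j$ into block and phase digits, combine the block indices $(j_0,i_0)$ by CRT modulo $\QCCam\QCCbm$, and append the phase offset $i_1\QCCal+j_1$. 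Your map is indeed a bijection and the CRT-uniqueness argument for the increment of $k_0$ (including the wraparound absorbed by $\QCCam\QCCbm\cdot\QCCal\QCCbl = \QCCn$) is sound, so the lemma as stated is proved: since quasi-cyclicity depends only on the existence of a suitable coordinate ordering, either serialization suffices. What your route does not buy, however, is the specific map $\map{i}{j}$ of the paper: one can check (e.g., with $\QCCal=2,\QCCam=5,\QCCbl=3,\QCCbm=3$ that $\map{2}{2}=76$ for the paper's map but $40$ for yours), so the two orderings give permutation-equivalent but distinct codes, and the paper's choice is the one reused verbatim in Lemma~\ref{lem_MappingUnivariateQCCQCC}, Thm.~\ref{theo_ProductCodeQCCQCCUnred} and the spectral analysis; your CRT map would require reworking those later derivations, whereas its advantage is that bijectivity and the digit bookkeeping are completely transparent.
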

\begin{proof}
A codeword of the $\LIN{\QCCan \QCCbn}{\QCCak \QCCbk}{\QCCad \QCCbd}{q}$ product code $\QCCa \otimes \QCCb$ can be represented by an $\QCCbn \times \QCCan$ matrix $(m_{i,j})_{i \in \interval{\QCCbn}}^{j \in \interval{\QCCan}}$, where each row is a codeword of $\QCCa$ and each column is a codeword of $\QCCb$. The entries of the matrix $(m_{i,j})$ in the \ith{i} row and \ith{j} column are mapped to the coefficients of the codeword by:
\begin{equation} \label{def_MappingMatrixPolyQCCQCC}
\map{i}{j} \defeq i \inta \QCCan \QCCal + j \intb \QCCbn \QCCbl  \mod \QCCn,
\end{equation}
where $i \in \interval{\QCCbn}$ and $j \in \interval{\QCCan}$.
In order to prove that the product code $\QCCa \otimes \QCCb$ is $\QCCal \QCCbl$-quasi-cyclic it is sufficient to show that a shift by $\QCCal \QCCbl$ of a codeword in $\QCCa \otimes \QCCb$ serialized to a univariate polynomial by~\eqref{def_MappingMatrixPolyQCCQCC} is again a codeword in $\QCCa \otimes \QCCb$. This will be true if a shift by $\QCCal$ in every row and a shift by $\QCCbl$ in every column correspond to an $\QCCal \QCCbl$-quasi-cyclic shift of the univariate codeword obtained by~\eqref{def_MappingMatrixPolyQCCQCC}, which is indeed the case:
\begin{align*}
\map{i+\QCCbl}{j+\QCCal} & \equiv (i+\QCCbl) \inta \QCCan \QCCal + (j+\QCCal) \intb \QCCbn \QCCbl  \mod \QCCn \\
& \equiv i \inta \QCCan \QCCal + j \intb \QCCbn \QCCbl + \QCCal \QCCbl (\inta \QCCan  + \intb \QCCbn) \mod \QCCn \\
& \equiv \map{i}{j} + \QCCal \QCCbl \mod \QCCn.
\end{align*}
\end{proof}
Instead of representing a codeword in $\QCCa \otimes \QCCb$ as one univariate polynomial in $\Fx{q}$ as in~\eqref{eq_OneUnivariatePolyProduct}, we want to represent it as a vector of $\QCCal \QCCbl$ univariate polynomials in $\Fx{q}$ (as in Lemma~\ref{lem_VectorToUnivariatePoly}) to obtain an explicit expression of the basis of the $\QCCal \QCCbl$-quasi-cyclic product code $\QCCa \otimes \QCCb$. 
\begin{lemma}[Mapping to $\QCCal \QCCbl$ Univariate Polynomials] \label{lem_MappingUnivariateQCCQCC}
Let $\QCCa$ be an $\QCCal$-quasi-cyclic code of length $\QCCan = \QCCal \QCCam $ and let $\QCCb$ be an $\QCCbl$-quasi-cyclic code of length $\QCCbn = \QCCbl \QCCbm$. Let $\QCCl = \QCCal \QCCbl$, $\QCCm = \QCCam \QCCbm$, and $\QCCn = \QCCan \QCCbn$. Let $(m_{i,j})_{i \in \interval{\QCCbn}}^{j \in \interval{\QCCan}}$ be a codeword of the $\QCCl$-quasi-cyclic product code $\QCCa \otimes \QCCb$, where each row is in $\QCCa$ and each column is in $\QCCb$. 

Define $\QCCl$ univariate polynomials as:
\begin{equation} \label{eq_UnivariateFinalQCCQCC}
c_{g,h}(X) \equiv X^{\submap{g}{h}} \cdot \sum_{i=0}^{\QCCbm-1} \sum_{j=0}^{\QCCam-1} m_{i\QCCbl+g,j\QCCal+h} X^{ \mapb{i}{j} } \mod (X^{\QCCm} - 1), \quad \forall g \in \interval{\QCCbl}, h \in \interval{\QCCal},
\end{equation}
with
\begin{align} 
\submap{g}{h} & = g(-\intb\QCCbm ) + h(-\inta \QCCam) \mod \QCCm, \label{eq_ShiftCodeword} \\
\mapb{i}{j} &= i \inta \QCCan + j \intb \QCCbn \mod \QCCm. \label{eq_MappingSubCodewordQCCQCC}
\end{align}
Then the codeword $c(X) \in \QCCa \otimes \QCCb$ corresponding to $(m_{i,j})_{i \in \interval{\QCCbn}}^{j \in \interval{\QCCan}}$ is given by:
\begin{equation} \label{eq_ProofMatrixToUnivariateFirst} 
c(X)  \equiv  \sum_{g=0}^{\QCCbl-1} \sum_{h=0}^{\QCCal-1} c_{g,h} (X^{\QCCal \QCCbl}) X^{g\QCCal + h \QCCbl} \mod (X^{\QCCn}-1).
\end{equation}
\end{lemma}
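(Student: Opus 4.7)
The plan is to expand the right-hand side of \eqref{eq_ProofMatrixToUnivariateFirst} by substituting the definition \eqref{eq_UnivariateFinalQCCQCC} of each polynomial $c_{g,h}(X)$, and then reindex the resulting quadruple sum by $i' = i\QCCbl + g$ and $j' = j\QCCal + h$ so that it ranges over all matrix positions $(i',j') \in \interval{\QCCbn} \times \interval{\QCCan}$. After the substitution $X \mapsto X^{\QCCl}$ inside each $c_{g,h}$, the exponent attached to the coefficient $m_{i',j'}$ is
\begin{equation*}
E(i',j') \defeq \QCCal\QCCbl\,\submap{g}{h} + \QCCal\QCCbl\,\mapb{i}{j} + g\QCCal + h\QCCbl \mod \QCCn,
\end{equation*}
and it suffices to verify that $E(i',j') \equiv \map{i'}{j'} \pmod{\QCCn}$, so that the resulting polynomial agrees term by term with the univariate representation \eqref{eq_OneUnivariatePolyProduct}.

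Next I would compute $\map{i'}{j'} = (i\QCCbl+g)\inta\QCCan\QCCal + (j\QCCal+h)\intb\QCCbn\QCCbl$ from \eqref{def_MappingMatrixPolyQCCQCC} directly, and split off the block already divisible by $\QCCl = \QCCal\QCCbl$. Applying the Bézout identity \eqref{eq_BEzoutRel} in the forms $\inta\QCCan = 1 - \intb\QCCbn$ and $\intb\QCCbn = 1 - \inta\QCCan$ to the contributions of $g$ and $h$ respectively, and using $\QCCbn = \QCCbl\QCCbm$ together with $\QCCan = \QCCal\QCCam$ to pull out the factor $\QCCl$, yields
\begin{equation*}
\map{i'}{j'} \equiv g\QCCal + h\QCCbl + \QCCal\QCCbl\bigl[i\inta\QCCan + j\intb\QCCbn - g\intb\QCCbm - h\inta\QCCam\bigr] \pmod{\QCCn}.
\end{equation*}
The bracketed expression equals $\mapb{i}{j} + \submap{g}{h}$ modulo $\QCCm$ by the definitions \eqref{eq_MappingSubCodewordQCCQCC} and \eqref{eq_ShiftCodeword}, and multiplying by $\QCCl$ lifts this congruence from modulus $\QCCm$ to modulus $\QCCn = \QCCl\QCCm$, giving $E(i',j') \equiv \map{i'}{j'}$ as required.

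Finally, I would note that $\gcd(\QCCal,\QCCbl)=1$, which follows from $\gcd(\QCCan,\QCCbn)=1$ together with $\QCCal \mid \QCCan$ and $\QCCbl \mid \QCCbn$, ensures that $(g,h)\mapsto g\QCCal + h\QCCbl \bmod \QCCl$ is a bijection from $\interval{\QCCbl}\times\interval{\QCCal}$ onto $\interval{\QCCl}$. Consequently, \eqref{eq_ProofMatrixToUnivariateFirst} fits the format of Lemma~\ref{lem_VectorToUnivariatePoly} with the $\QCCl$ components indexed by $(g,h)$, so $(c_{g,h}(X))$ is a bona fide vector-of-polynomials representation of $c(X) \in \QCCa \otimes \QCCb$.

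The main obstacle is the bookkeeping across two different moduli: each $c_{g,h}(X)$ lives modulo $X^{\QCCm}-1$, whereas the reconstructed codeword lives modulo $X^{\QCCn}-1$. The step that makes everything align is the observation that any congruence of exponents modulo $\QCCm$ becomes a congruence modulo $\QCCn = \QCCl\QCCm$ after multiplication by $\QCCl$, so that the "small" shift $\submap{g}{h}$ defined mod $\QCCm$ precisely absorbs the residual terms $-g\intb\QCCbm$ and $-h\inta\QCCam$ that the Bézout substitution leaves behind when passing between \eqref{def_MappingMatrixPolyQCCQCC} and \eqref{eq_MappingSubCodewordQCCQCC}.
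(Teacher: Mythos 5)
Your proposal is correct and follows essentially the same route as the paper's proof: substitute \eqref{eq_UnivariateFinalQCCQCC} into \eqref{eq_ProofMatrixToUnivariateFirst}, reindex via $i'=i\QCCbl+g$, $j'=j\QCCal+h$, and verify the exponent identity $\QCCl(\mapb{i}{j}+\submap{g}{h})+g\QCCal+h\QCCbl \equiv \map{i'}{j'} \pmod{\QCCn}$ using the B\'ezout relation \eqref{eq_BEzoutRel}, with the mod-$\QCCm$-to-mod-$\QCCn$ lifting handled exactly as in the paper's equivalence step. Your closing remark that $\gcd(\QCCal,\QCCbl)=1$ makes $(g,h)\mapsto g\QCCal+h\QCCbl$ a bijection onto $\interval{\QCCl}$ is a small, correct addition the paper leaves implicit.
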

\begin{proof}
We have:
\begin{align}
y & \equiv  \mapb{i}{j} + \submap{g}{h}  \mod \QCCm. \nonumber \\
&  \qquad \qquad \Leftrightarrow \nonumber \\
\QCCal \QCCbl y & \equiv \QCCal \QCCbl ( \mapb{i}{j}+  \submap{g}{h} )  \mod \QCCn . \label{eq_MovingFromBigToSmallMatrixQCCQCC}
\end{align}
With $\inta \QCCan-1 = -\intb \QCCbn = -\intb \QCCbm \QCCbl$ and $\intb \QCCbn-1 = -\inta \QCCan = -\inta \QCCam \QCCal$, and using~\eqref{eq_ShiftCodeword}, we can rewrite~\eqref{eq_MovingFromBigToSmallMatrixQCCQCC} to:
\begin{align}
\QCCal \QCCbl ( \mapb{i}{j} + \submap{g}{h} ) & \equiv \QCCal \QCCbl \mapb{i}{j} + g \QCCal (- \intb \QCCbm \QCCbl ) + h \QCCbl ( -\inta \QCCam \QCCal)  \mod \QCCn \nonumber \\
& \equiv \QCCal \QCCbl \mapb{i}{j} + g \QCCal (\inta \QCCan  - 1 ) + h \QCCbl (\intb \QCCbn  - 1) \mod \QCCn.  \label{eq_SpreadedCodewordsQCCQCC}
\end{align}
With $\mapb{i}{j}$ as in~\eqref{eq_MappingSubCodewordQCCQCC} and $\map{i}{j}$ as in~\eqref{def_MappingMatrixPolyQCCQCC}, we get from~\eqref{eq_SpreadedCodewordsQCCQCC}:
\begin{align}
\QCCal \QCCbl (i \inta \QCCan  + j \intb \QCCbn) + g \QCCal (\inta \QCCan  - 1 ) + h \QCCbl (\intb \QCCbn  - 1) & \equiv (i \QCCbl+g) \inta \QCCan \QCCal + (j \QCCal+h) \intb \QCCbn \QCCbl - g \QCCal- h \QCCbl  \mod \QCCn  \nonumber \\
& \equiv \map{i \QCCbl+g}{j\QCCal+h} - g \QCCal- h \QCCbl  \mod \QCCn. \label{eq_MappingToOneCodewordBack}
\end{align}
Inserting~\eqref{eq_UnivariateFinalQCCQCC} into~\eqref{eq_ProofMatrixToUnivariateFirst} and using the result \eqref{eq_MappingToOneCodewordBack} for the manipulations of the exponents leads to:
\begin{align}
c(X) & \equiv \sum_{g=0}^{\QCCbl-1} \sum_{h=0}^{\QCCal-1} \sum_{i=0}^{\QCCbm-1} \sum_{j=0}^{\QCCam-1}  m_{i\QCCbl+g,j\QCCal+h} X^{\map{i\QCCbl+g}{j\QCCal+h}}  \mod (X^{\QCCn}-1). \label{eq_ProofMatrixToUnivariate}
\end{align}
With $i^{\prime}=i\QCCbl+g$ and $j^{\prime} = j\QCCal+h$, we obtain from~\eqref{eq_ProofMatrixToUnivariate}:
\begin{align*}
c(X) & \equiv \sum_{i^{\prime}=0}^{\QCCbn-1} \sum_{j^{\prime}=0}^{\QCCan-1}  m_{i^{\prime},j^{\prime}} X^{\map{i^{\prime}}{j^{\prime}}}  \mod (X^{\QCCn}-1),
\end{align*} 
which coincides with the expression as in~\eqref{eq_OneUnivariatePolyProduct}.
\end{proof}
The mapping $\mapb{i}{j}$ as in~\eqref{eq_MappingSubCodewordQCCQCC} of the $\QCCl$ submatrices $(m_{i\QCCbl,j\QCCal}), (m_{i\QCCbl,j\QCCal+1}), \dots, (m_{i\QCCbl+\QCCbl-1,j\QCCal+\QCCal-1}) \in \F{q}^{\QCCbm \times \QCCam}$ to the $\QCCl$ univariate polynomials $c_{0,0}(X), c_{0,1}(X), \dots, c_{\QCCbl-1, \QCCal-1}(X) $ is the same as the one used to map the codeword of a cyclic product code of length $\QCCam \QCCbm$ from its matrix representation to the polynomial representation (see \cite[Thm. 1]{burton_cyclic_1965} and Fig.~\ref{fig_QCCvert}).
We illustrate the mapping of the matrix to the polynomial representation of a codeword of an $\QCCl$-quasi-cyclic product code as discussed in Lemma~\ref{lem_MappingToUnivariatePolyQCC} and Lemma~\ref{lem_MappingUnivariateQCCQCC} in the following example.
\begin{example}[$6$-Quasi-Cyclic Product Code] \label{ex_6QCCQCC}
Let $\QCCa$ be a $2$-quasi-cyclic code of length $\QCCan = 2 \cdot 5 = 10$ and let $\QCCb$ be a $3$-quasi-cyclic of length $\QCCbn= 3 \cdot 3 = 9$. Let $\inta = 1$ and $\intb = -1$, such that~\eqref{eq_BEzoutRel} holds. For the purpose of this illustration, the field size $q$ is irrelevant, but we assume that $\gcd(\QCCan,q) = \gcd(\QCCbn,q) = 1$.
Fig.~\ref{fig_QCCQCC} contains three different representations of a codeword of the $6$-quasi-cyclic product code $\QCCa \otimes \QCCb$ of length $90$. The $9 \times 10$ matrix $(m_{i,j})_{i \in \interval{9}}^{j \in \interval{10}}$, where each row is a codeword in $\QCCa$ and each column is a codeword in $\QCCb$, is illustrated in Fig.~\ref{fig_QCCQCCfull}. The entries of $(m_{i,j})$ contain the indices of the coefficients if the matrix $(m_{i,j})$ is mapped to a univariate polynomial as given in~\eqref{eq_OneUnivariatePolyProduct}. The color of a code symbol indicates the membership of an entry when the codeword in $\QCCa \otimes \QCCb$ is represented as six univariate polynomials as stated in Lemma~\ref{lem_MappingUnivariateQCCQCC}. The corresponding six $3 \times 5$ submatrices $(m_{i3,j2})$, $(m_{i3+1,j2})$, $(m_{i3+2,j2})$, $(m_{i3,j2+1})$, $(m_{i3+1,j2+1})$, $(m_{i3+2,j2+1}) \in \F{q}^{3 \times 5}$ are depicted separately twice in Fig.~\ref{fig_QCCvertintermed} and in Fig.~\ref{fig_QCCvert}, respectively. Both figures contain different indices of the six univariate polynomials as outlined in the corresponding captions.
\newcommand{\mysize}{6mm}
\newcommand{\mysizeb}{6.5mm}
\begin{figure}[htb]
\centering
\subfigure[Illustration of $\map{i}{j}$ as in~\eqref{def_MappingMatrixPolyQCCQCC} for $\inta=1$, $\QCCal = 2$, $\QCCam = 5$ and $\intb=-1$, $\QCCbl=3$, $\QCCbm=3$. The entry of the $(3\cdot 3) \times (2 \cdot 5)$ matrix $(m_{i,j}) \in \QCCa \otimes \QCCb $ in the $\ith{i}$ row and the $\ith{j}$ column is the $\ith{\map{i}{j}}$ coefficient of the univariate polynomial of degree less than $90$ representing a codeword of $\QCCa \otimes \QCCb$.]{\resizebox{.7\textwidth}{!}{\includegraphics{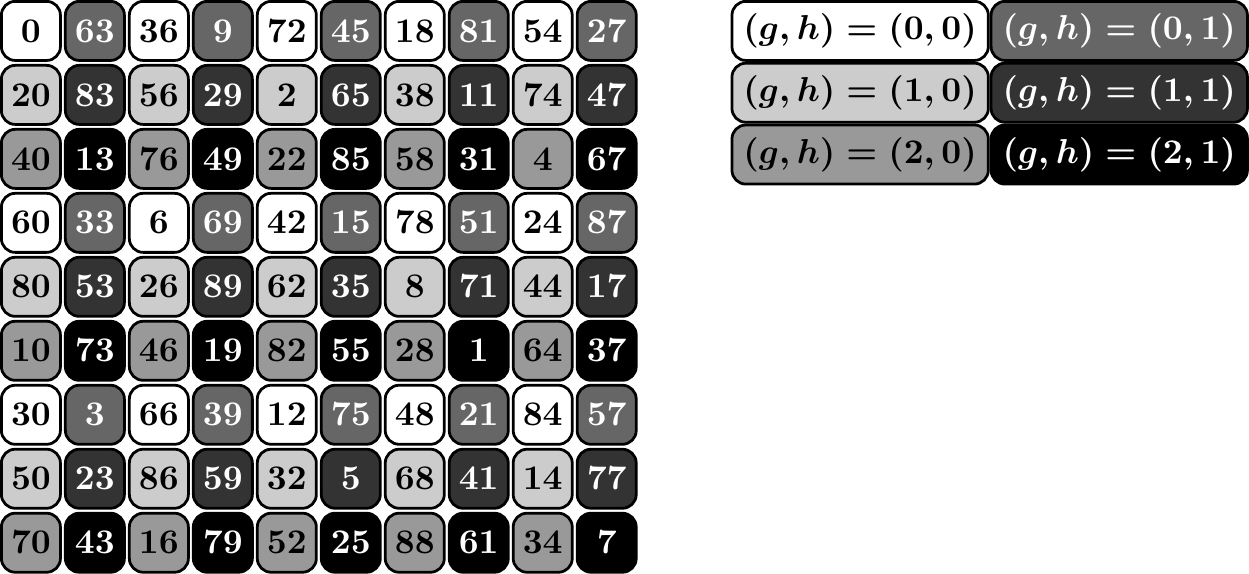}}\label{fig_QCCQCCfull}}\\
 \subfigure[The three submatrices $(m_{i3,j2})$, $(m_{i3+1,j2})$, $(m_{i3+2,j2})$ on the left and the three submatrices $(m_{i3,j2+1})$, $(m_{i3+1,j2+1})$, $(m_{i3+2,j2+1})$ on the right (for $i \in \interval{3}$ and $j \in \interval{5}$) of the $9 \times 10$ matrix $(m_{i,j})$ as given in Fig.~\ref{fig_QCCQCCfull}. The entry of the $\ith{i}$ row and the $\ith{j}$ column of $(m_{i3+g,j2+h})$ is the value $\map{i3+g}{j2+h}$ for all $g \in \interval{3}$ and $h \in \interval{2}$.]{\tikzsetnextfilename{CyclicProductCodesVert1intermediate}\resizebox{.18\textwidth}{!}{\includegraphics{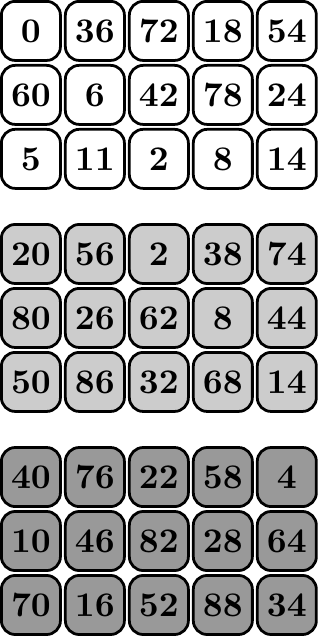}}\resizebox{.18\textwidth}{!}{\includegraphics{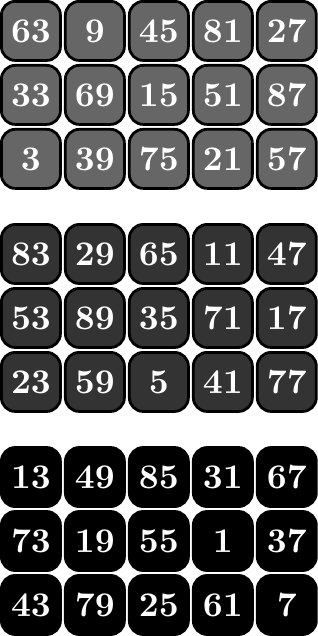}}\label{fig_QCCvertintermed}}\hfill
 \subfigure[All six $3 \times 5$ submatrices as in Fig.~\ref{fig_QCCvertintermed}. Here, the entry of the $\ith{i}$ row and the $\ith{j}$ column of $(m_{i3+g,j2+h})$ is the value $\mapb{i}{j} + g 3 +h(-5) \bmod 15$ that is equivalent to  $6(\map{i3+g}{j2+h}-g2-h3) \bmod 90$ according to~\eqref{eq_MappingToOneCodewordBack}. The entries are the coefficients of the six univariate polynomials $c_{0,0}(X)$, $c_{1,0}(X)$, $c_{2,0}(X)$ (left column) and $c_{0,1}(X)$, $c_{1,1}(X)$, $c_{2,1}(X)$ (right column) as in~\eqref{eq_UnivariateFinalQCCQCC}.]{\resizebox{.18\textwidth}{!}{\includegraphics{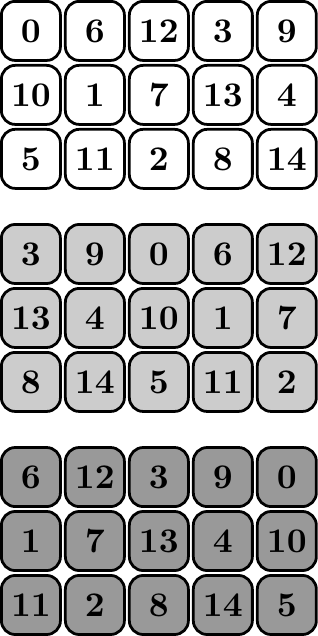}}\resizebox{.18\textwidth}{!}{\includegraphics{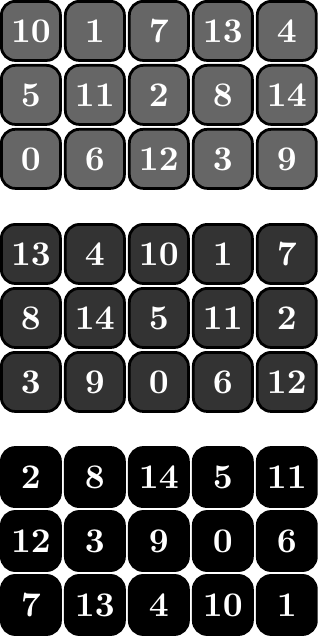}}\label{fig_QCCvert}
   }
\caption{Illustration of $\map{i}{j}$ as in~\eqref{def_MappingMatrixPolyQCCQCC} and $\mapb{i}{j}$ as in~\eqref{eq_MappingSubCodewordQCCQCC} for a $6$-quasi-cyclic product code $\QCCa \otimes \QCCb$ of length $6 \cdot 15$, where the row-code $\QCCa$ is $2$-quasi-cyclic and has length $2 \cdot 5$ and the column-code $\QCCb$ is $3$-quasi-cyclic and has length $3 \cdot 3$. The mapping $\map{i}{j}$ to one univariate polynomial is shown in Fig.~\ref{fig_QCCQCCfull}. The mapping $\mapb{i}{j}$ to six univariate polynomials is illustrated in Fig.~\ref{fig_QCCvertintermed} and Fig.~\ref{fig_QCCvert}.}
\label{fig_QCCQCC}
\end{figure}

We consider the entry in the $2^{\text{nd}}$ row and the $2^{\text{nd}}$ column of the full $9 \times 10$ matrix $(m_{i,j})_{i \in \interval{9}}^{j \in \interval{10}} \in \QCCa \otimes \QCCb$ shown in Fig.~\ref{fig_QCCQCCfull}. According to~\eqref{def_MappingMatrixPolyQCCQCC}, we have $\map{2}{2} = 76$, i.e., the coefficient of $X^{76}$ of the univariate polynomial $c(X) \in \QCCa \otimes \QCCb$ is $c_{76} = m_{2,2}$. The entry $m_{2,2}$ belongs to the $3 \times 5$ submatrix $(m_{i\QCCbl+2,j\QCCal})_{i \in \interval{3}}^{j \in \interval{5}}$ (bottom leftmost submatrix in Fig.~\ref{fig_QCCvertintermed} and in Fig.~\ref{fig_QCCvert}, with parameters $g=2$ and $h=0$). The entry in the $\ith{0}$ row and the $1^{\text{st}}$ column of the submatrix $(m_{i\QCCbl+2,j\QCCal})$ is the coefficient of $X^{12}$ of the polynomial $c_{2,0}(X)$, because $\submap{2}{0} + \mapb{0}{1} = 6 + 6 = 12$ according to~\eqref{eq_UnivariateFinalQCCQCC}. Via~\eqref{eq_ProofMatrixToUnivariateFirst}, it can be verified that the coefficient of $X^{12}$ of $c_{2,0}(X)$ is the coefficient of $X^{76}$ of $c(X) \in \QCCa \otimes \QCCb$.
\end{example}
In the following theorem, we state a basis of the $\QCCl$-quasi-cyclic product code $\QCCa \otimes \QCCb$ in terms of the two given generator matrices of $\QCCa$ and $\QCCb$ in RGB/POT form.
\begin{theorem}[Unreduced Basis of a Quasi-Cyclic Product Code] \label{theo_ProductCodeQCCQCCUnred}
Let $\QCCa$ be an $\LIN{\QCCal \cdot \QCCam}{\QCCak}{\QCCad}{q}$ $\QCCal$-quasi-cyclic code with generator matrix $\genmat[A] \in \Fx{q}^{\QCCal \times \QCCal}$ as in~\eqref{eq_GroebMatrixCodeA}, let $\QCCb$ be an $\LIN{\QCCbl \cdot \QCCbm}{\QCCbk}{\QCCbd}{q}$ $\QCCbl$-quasi-cyclic code with generator matrix $\genmat[B] \in \Fx{q}^{\QCCbl \times \QCCbl}$ as in~\eqref{eq_GroebMatrixCodeB}. Let $\QCCl = \QCCal \QCCbl$ and $\QCCm = \QCCam \QCCbm$.

Let $\mathbf{c}(X) = (c_{0,0}(X) \ c_{1,0}(X) \ \cdots \ c_{\QCCbl-1,0}(X)  \ \cdots \ c_{\QCCbl-1,\QCCal-1}(X) ) \in \Fx{q}^{\QCCl}$ be a codeword in $\QCCa \otimes \QCCb$, where $c_{g,h}(X), \forall g \in \interval{\QCCbl}, h \in \interval{\QCCal}$, is as defined in~\eqref{eq_UnivariateFinalQCCQCC}.

Then, a generator matrix in unreduced form with entries in $\Fx{q}$ of the $\QCCl$-quasi-cyclic product code $\QCCa \otimes \QCCb$ is given by
\begin{equation} \label{eq_GenMatrixQCCQCCUnreduced}
\mathbf{U}(X) = 
\begin{pmatrix}
\mathbf{U}^0(X) \\ 
\mathbf{U}^1(X)   
\end{pmatrix}, 
\end{equation}
where
\begin{equation} \label{eq_UnReducedBasisPart1}
\begin{split}
& \mathbf{U}^0(X)  = 
\begin{pmatrix}
\genunred[0][0] & \genunred[0][1]  & \cdots  & \cdots  & \genunred[0][\QCCl-1] \\
 & \genunred[1][1] & \cdots & \cdots & \genunred[1][\QCCl-1] \\
& & \ddots & \vdots & \vdots \\
\multicolumn{3}{c}{\bigzero} & \genunred[\QCCl-2][\QCCl-2] & \genunred[\QCCl-2][\QCCl-1] \\
 &  & & & \genunred[\QCCl-1][\QCCl-1]
\end{pmatrix} \\
& \qquad \qquad \qquad \cdot \diag \left( 1, X^{\submap{1}{0}}, \dots, X^{\submap{\QCCbl-1}{0}},\dots, X^{\submap{\QCCbl-1}{\QCCal-1}} \right),
\end{split}
\end{equation}
and where
\begin{equation} \label{eq_ElementsProductCodeQCCQCCUnred}
\begin{split}
\genunred[g+h\QCCbl][g^{\prime} + h^{\prime}\QCCbl] & = \genarg{A}[h][h^{\prime}]{X^{\intb \QCCbn}} \genarg{B}[g][g^{\prime}]{X^{\inta \QCCan}} \mod (X^{\QCCm}-1),  \\
& \qquad \qquad \qquad \forall g \in \interval{\QCCbl}, h \in \interval{\QCCal},g^{\prime} \in \interval{g, \QCCbl}, h^{\prime} \in \interval{h, \QCCal},
\end{split}
\end{equation}
and  
\begin{equation*} 
\mathbf{U}^1(X) = (X^{\QCCm}-1) \mathbf{I}_{\QCCl}, 
\end{equation*}
where $\mathbf{I}_{\QCCl}$ denotes the $\QCCl \times \QCCl$ identity matrix. The function $\submap{g}{h}$ is as defined in~\eqref{eq_ShiftCodeword}.
\end{theorem}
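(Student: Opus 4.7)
The proof proceeds in two movements: first, every row of $\mathbf{U}(X)$, once mapped via Lemma~\ref{lem_MappingUnivariateQCCQCC}, represents an element of $\QCCa \otimes \QCCb$; second, these rows generate the full $\Fx{q}$-submodule whose image under $\phi$ (as in~\eqref{eq_Homorphism}) is $\QCCa \otimes \QCCb$. The rows of $\mathbf{U}^{1}(X) = (X^{\QCCm}-1)\mathbf{I}_{\QCCl}$ obviously lie in $\ker\phi$, so the substance of the first movement is concentrated in $\mathbf{U}^{0}(X)$.

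The central claim for the first movement is that the $(g+h\QCCbl)$-th row of $\mathbf{U}^{0}(X)$ coincides, component by component, with the image under Lemma~\ref{lem_MappingUnivariateQCCQCC} of the rank-one codeword matrix $M^{(h,g)}$ defined by $M^{(h,g)}_{i,j} = a^{(h)}_{j} \, b^{(g)}_{i}$, where $a^{(h)}$ denotes the codeword of $\QCCa$ given by row $h$ of $\genmat[A]$ (read through Lemma~\ref{lem_VectorToUnivariatePoly}) and $b^{(g)}$ denotes the codeword of $\QCCb$ given by row $g$ of $\genmat[B]$. I would verify this by substituting $m_{i\QCCbl+g^{\prime},\,j\QCCal+h^{\prime}} = a^{(h)}_{j\QCCal+h^{\prime}} \, b^{(g)}_{i\QCCbl+g^{\prime}}$ into~\eqref{eq_UnivariateFinalQCCQCC}, factoring the resulting double sum into a product of single sums, and reading each single sum as a polynomial substitution. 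By Lemma~\ref{lem_VectorToUnivariatePoly} applied to $a^{(h)}$, the coefficient $a^{(h)}_{j\QCCal+h^{\prime}}$ is the coefficient of $X^{j}$ in $\gen[A][h][h^{\prime}]$, and analogously for $b^{(g)}_{i\QCCbl+g^{\prime}}$. Hence
\begin{equation*}
\sum_{j=0}^{\QCCam-1} a^{(h)}_{j\QCCal+h^{\prime}} \, X^{j \intb \QCCbn} \equiv \genarg{A}[h][h^{\prime}]{X^{\intb\QCCbn}} \mod (X^{\QCCm}-1),
\end{equation*}
with the analogous identity for the $\QCCb$-side; their product is precisely~\eqref{eq_ElementsProductCodeQCCQCCUnred}. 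The prefactor $X^{\submap{g^{\prime}}{h^{\prime}}}$ in~\eqref{eq_UnivariateFinalQCCQCC} is absorbed into the diagonal matrix of~\eqref{eq_UnReducedBasisPart1}, and the support pattern $g^{\prime} \geq g$, $h^{\prime} \geq h$ is immediate from the upper-triangular structure of $\genmat[A]$ and $\genmat[B]$.

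For the second movement, I would argue that the rows of $\genmat[A]$ together with the relations $(X^{\QCCam}-1)\mathbf{e}_{j}$ generate the preimage submodule of $\QCCa$ inside $\Fx{q}^{\QCCal}$, and analogously for $\QCCb$. Hence every codeword of $\QCCa \otimes \QCCb$ is an $\F{q}$-linear combination of outer products of shifted RGB/POT generators. The first movement then shows that each such outer product reduces, modulo $X^{\QCCm}-1$, to an $\Fx{q}$-multiple of the corresponding row of $\mathbf{U}^{0}(X)$, while outer products involving any $(X^{\QCCam}-1)\mathbf{e}_{j}$ or $(X^{\QCCbm}-1)\mathbf{e}_{g}$ factor become, after reduction, $\Fx{q}$-multiples of the rows of $\mathbf{U}^{1}(X)$.

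The main obstacle will be this second movement—specifically, verifying that a cyclic shift of $a^{(h)}$ inside $\QCCa$ (respectively $b^{(g)}$ inside $\QCCb$) translates, through Lemma~\ref{lem_MappingUnivariateQCCQCC}, into $\Fx{q}$-multiplication of the appropriate row of $\mathbf{U}^{0}(X)$ modulo $X^{\QCCm}-1$; here the Bézout relation~\eqref{eq_BEzoutRel} plays the same role as in Lemma~\ref{lem_MappingToUnivariatePolyQCC}. By contrast, the first movement is notationally dense but mechanical once the coefficient identifications via Lemma~\ref{lem_VectorToUnivariatePoly} are in place.
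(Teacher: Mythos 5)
Your proposal is correct and takes essentially the same route as the paper: the paper likewise decomposes $\QCCa\otimes\QCCb$ into the products of the subcodes spanned by the ($X$-power shifts of the) individual rows of $\genmat[A]$ and $\genmat[B]$, serializes the resulting rank-one arrays via Lemma~\ref{lem_MappingUnivariateQCCQCC}, and identifies the $(g^{\prime},h^{\prime})$-th component with $X^{\submap{g^{\prime}}{h^{\prime}}}\genarg{A}[h][h^{\prime}]{X^{\intb\QCCbn}}\genarg{B}[g][g^{\prime}]{X^{\inta\QCCan}} \bmod (X^{\QCCm}-1)$. The step you defer---that the serialized image of the $(X^{\alpha},X^{\beta})$-shifted outer product equals $X^{\mapb{\beta}{\alpha}}$ times the corresponding row of $\mathbf{U}^{0}(X)$, with $\mapb{\gamma}{\gamma}=\gamma$ yielding the full cyclic span---is precisely the exponent computation that occupies the bulk of the paper's proof, and it goes through exactly as you predict via the Bézout relation~\eqref{eq_BEzoutRel}.
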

\begin{proof}
To get an explicit expression for the entries $\genunred[g+h\QCCbl][g^{\prime} + h^{\prime}\QCCbl]$ of the matrix $\mathbf{U}^0(X) \in \Fx{q}^{\QCCl \times \QCCl}$ as in~\eqref{eq_UnReducedBasisPart1}, we define a subcode of the product code $\QCCa \otimes \QCCb$ that is generated by one row of $\mathbf{U}^0(X)$ as given in~\eqref{eq_UnReducedBasisPart1}. 

Let $\QCCa^{(h)}$ denote a subcode of the given $\QCCal$-quasi-cyclic code $\QCCa$ that is spanned by 
\begin{align} \label{eq_SubCodeOfQCC}
\mathbf{a}^{(h)}(X) & \defeq  \big( 0 \ \cdots \ 0 \ \gen[A][h][h]  \ \gen[A][h][h+1] \ \cdots \ \gen[A][h][\QCCal-1]  \big),  \quad \forall h \in \interval{\QCCal}.
\end{align}
Similarly, let $\QCCb^{(g)}$ be a subcode of the given $\QCCbl$-quasi-cyclic code $\QCCb$ spanned by
\begin{align} \label{eq_SubCodeOfQCC}
\mathbf{b}^{(g)}(X) & \defeq  \big( 0 \ \cdots \ 0 \ \gen[B][g][g]  \ \gen[B][g][g+1] \ \cdots \ \gen[B][g][\QCCbl-1]  \big),  \quad \forall g \in \interval{\QCCbl}.
\end{align}
Clearly, we have for the row-code $\QCCa$ and the column-code $\QCCb$ that:
\begin{equation} \label{eq_QCCSumOfSubCodes}
\QCCa = \bigoplus_{h=0}^{\QCCal-1} \QCCa^{(h)}, \qquad \QCCb = \bigoplus_{g=0}^{\QCCbl-1} \QCCb^{(g)},
\end{equation}
and 
\begin{equation} \label{eq_ProductQCCSumOfSubCodes}
\QCCa \otimes \QCCb = \bigoplus_{g,h} \left( \QCCa^{(h)} \otimes \QCCb^{(g)} \right).
\end{equation}
Let $k_{A,h} = \QCCam - \deg \gen[A][h][h]$. The subcode $\QCCa^{(h)}$ is spanned by $\{X^{\alpha} \mathbf{a}^{(h)}(X): \alpha \in \interval{k_{A,h}}\}$. As in~\eqref{eq_VectorToUnivariatePoly}, a codeword of $\QCCa^{(h)}$ is an $\F{q}$-linear combination of $a^{(h,\alpha)}(X) \defeq \sum_{h^{\prime}=h}^{\QCCal-1} X^{h^{\prime}} a^{(\alpha)}_{h, h^{\prime}}(X^{\QCCal})$, where  $a^{(\alpha)}_{h, h^{\prime}}(X) = X^{\alpha} \gen[A][h][h^{\prime}]$. More explicitly if $g_{h,h'}^A (X) = \sum_{u=0}^{\QCCam-1} g_{h,h',u}^A X^u$, we have:
\begin{align} \label{eq_SubCodeACW2}
a^{(h,\alpha)}(X) & = \sum_{h^{\prime}=h}^{\QCCal-1} X^{h^{\prime}} X^{\QCCal \alpha} \genarg{A}[h][h^{\prime}]{X^{\QCCal}}  = \sum_{h^{\prime}=h}^{\QCCal-1} \sum_{u=0}^{\QCCam-1} \gencoeff{A}{h,h^{\prime},u} X^{h^{\prime} + \QCCal \alpha + \QCCal u },  \quad \forall h \in \interval{\QCCal}, \alpha \in \interval{k_{A,h}}.
\end{align}
Similarly, let $k_{B,g} = \QCCbm - \deg \gen[B][g][g]$. The subcode $\QCCb^{(g)}$ is spanned by $\{X^{\beta} \mathbf{b}^{(g)}(X): \beta \in \interval{k_{B,g}}\}$ and therefore if $g_{g,g'}^B(X) = \sum_{v=0}^{\QCCbm-1} g_{g,g',v}^B X^v$, a codeword of $\QCCb^{(g)}$ is an $\F{q}$-linear combination of 
\begin{align} \label{eq_SubCodeACW2}
b^{(g,\beta)}(X) & \defeq \sum_{g^{\prime}=g}^{\QCCbl-1} X^{g^{\prime}} X^{\QCCbl \beta} \genarg{B}[g][g^{\prime}]{X^{\QCCbl}} = \sum_{g^{\prime}=g}^{\QCCbl-1} \sum_{v=0}^{\QCCbm-1} \gencoeff{B}{g,g^{\prime},v} X^{g^{\prime} + \QCCbl \beta + \QCCbl v }, \quad \forall g \in \interval{\QCCbl}, \beta \in \interval{k_{B,g}}.
\end{align}
By definition of the product code $\QCCa \otimes \QCCb$ as in~\eqref{eq_ProductQCCSumOfSubCodes}, in the product array of $X^{\alpha} \mathbf{a}^{(h)}(X) \otimes X^{\beta} \mathbf{b}^{(g)}(X)$, the $\ith{(i,j)}$ entry is 
\begin{equation} \label{eq_SubcodeIndexes}
\gencoeff{A}{h,h^{\prime},u} \gencoeff{B}{g,g^{\prime},v}, 
\end{equation}
where 
\begin{equation} \label{eq_IndexesWithinMatrix}
\begin{split}
i & = g^{\prime} + \QCCbl \beta + \QCCbl v,\\
j & = h^{\prime} + \QCCal \alpha + \QCCal u.
\end{split}
\end{equation}
By Lemma~\ref{lem_MappingToUnivariatePolyQCC}, the corresponding codeword in $\QCCa \otimes \QCCb$ is then an $\F{q}$-linear combination of
\begin{equation} \label{eq_MatrixEntriesCoeff}
\sum_{h^{\prime},g^{\prime},u,v} \gencoeff{A}{h,h^{\prime},u} \gencoeff{B}{g,g^{\prime},v} X^{\map{i}{j}}.
\end{equation}
With $\map{i}{j}$ as in~\eqref{def_MappingMatrixPolyQCCQCC}, and with $i,j$ as in~\eqref{eq_IndexesWithinMatrix}, we obtain from~\eqref{eq_MatrixEntriesCoeff}:
\begin{align} 
\sum_{h^{\prime},g^{\prime},u,v} \gencoeff{A}{h,h^{\prime},u} \gencoeff{B}{g,g^{\prime},v} X^{\map{i}{j}} & = \sum_{h^{\prime},g^{\prime},u,v} \gencoeff{A}{h,h^{\prime},u} \gencoeff{B}{g,g^{\prime},v} X^{i \inta \QCCan \QCCal + j \intb \QCCbn \QCCbl} \nonumber \\
& = \sum_{h^{\prime},g^{\prime},u,v} \gencoeff{A}{h,h^{\prime},u} \gencoeff{B}{g,g^{\prime},v} X^{g^{\prime} \inta \QCCan \QCCal + \beta \inta \QCCan \QCCl + v \inta \QCCan \QCCl } X^{h^{\prime} \intb \QCCbn \QCCbl + \alpha \intb \QCCbn \QCCl + u \intb \QCCbn \QCCl} \nonumber  \\
& = X^{\QCCl (\beta \inta \QCCan +  \alpha \intb \QCCbn) } \sum_{h^{\prime},g^{\prime},u,v} \gencoeff{A}{h,h^{\prime},u} \gencoeff{B}{g,g^{\prime},v} X^{v \inta \QCCan \QCCl } X^{u \intb \QCCbn \QCCl} X^{g^{\prime} \inta \QCCan \QCCal} X^{h^{\prime} \intb \QCCbn \QCCbl} \nonumber \\
& = X^{\QCCl (\beta \inta \QCCan +  \alpha \intb \QCCbn) } \sum_{h^{\prime},g^{\prime}} \gencoeff{A}{h,h^{\prime}}(X^{\intb \QCCbn \QCCl}) \gencoeff{B}{g,g^{\prime}}(X^{\inta \QCCan \QCCl}) X^{g^{\prime} \inta \QCCan \QCCal} X^{h^{\prime} \intb \QCCbn \QCCbl}.  \label{eq_MatrixToPolyUnred}
\end{align}
With $\mapb{\beta}{\alpha}$ as defined in~\eqref{eq_MappingSubCodewordQCCQCC} and using~\eqref{eq_BEzoutRel}, we can reformulate~\eqref{eq_MatrixToPolyUnred} as follows:
\begin{align}
X^{\QCCl (\beta \inta \QCCan +  \alpha \intb \QCCbn) } & \sum_{h^{\prime},g^{\prime}} \gencoeff{A}{h,h^{\prime}}(X^{\intb \QCCbn \QCCl}) \gencoeff{B}{g,g^{\prime}}(X^{\inta \QCCan \QCCl}) X^{g^{\prime} \inta \QCCan \QCCal} X^{h^{\prime} \intb \QCCbn \QCCbl} \nonumber \\
& = X^{\QCCl \mapb{\beta}{\alpha}} \sum_{h^{\prime},g^{\prime}} \gencoeff{A}{h,h^{\prime}}(X^{\intb \QCCbn \QCCl}) \gencoeff{B}{g,g^{\prime}}(X^{\inta \QCCan \QCCl}) X^{g^{\prime} \QCCal + h^{\prime} \QCCbl} X^{\QCCal \QCCbl (- g^{\prime} \intb \QCCbm -h^{\prime} \inta \QCCam)} \nonumber \\
& = X^{\QCCl \mapb{\beta}{\alpha}} \sum_{h^{\prime},g^{\prime}} \gencoeff{A}{h,h^{\prime}}(X^{\intb \QCCbn \QCCl}) \gencoeff{B}{g,g^{\prime}}(X^{\inta \QCCan \QCCl}) X^{g^{\prime} \QCCal + h^{\prime} \QCCbl} X^{\QCCl \submap{g^{\prime}}{h^{\prime}}}, \label{eq_StripeOfCodeword} 
\end{align}
where $\submap{g^{\prime}}{h^{\prime}}$ is as in~\eqref{eq_ShiftCodeword}.
With~\eqref{eq_ProofMatrixToUnivariateFirst} of Lemma~\ref{lem_MappingUnivariateQCCQCC}, the $\ith{(g^{\prime},h^{\prime})}$ polynomial of the codeword $X^{\alpha} \mathbf{a}^{(h)}(X) \otimes X^{\beta} \mathbf{b}^{(g)}(X)$ in $\QCCa \otimes \QCCb$, in the form of a vector of $\QCCal \QCCbl$ univariate polynomials, is from~\eqref{eq_StripeOfCodeword}: 
\begin{equation}
X^{\mapb{\beta}{\alpha}} \gencoeff{A}{h,h^{\prime}}(X^{\intb \QCCbn}) \gencoeff{B}{g,g^{\prime}}(X^{\inta \QCCan}) X^{\submap{g^{\prime}}{h^{\prime}}} = X^{\mapb{\beta}{\alpha}} u_{g + h \QCCbl, g^{\prime} + h^{\prime}\QCCbl}(X) X^{\submap{g^{\prime}}{h^{\prime}}}.
\end{equation}
Hence $X^{\alpha} \mathbf{a}^{(h)}(X) \otimes X^{\beta} \mathbf{b}^{(g)}(X) $ is given by 
\begin{equation}
X^{\mapb{\beta}{\alpha}} \left(0 \ \cdots \ 0 \ u_{g + h \QCCbl, g + h\QCCbl}(X) \ \cdots \ u_{g + h \QCCbl, \QCCl-1}(X)\right) \diag \left( 1,\dots, X^{\submap{\QCCbl-1}{\QCCal-1}} \right), \quad \forall \alpha \in \interval{k_{A,h}}, \beta \in \interval{k_{B,g}},
\end{equation}
and therefore the subcode $\QCCa^{(h)} \otimes \QCCb^{(g)}$ is in the subspace generated by 
$\left(0 \ \cdots \ 0 \ u_{g + h \QCCbl, g + h\QCCbl}(X) \ \cdots \ u_{g + h \QCCbl, \QCCl-1}(X)\right) \cdot \diag \left( 1, X^{\submap{1}{0}}, \dots, X^{\submap{\QCCbl-1}{\QCCal-1}} \right)$.
Furthermore, we know that 
\begin{equation} \label{eq_BasisProductCode}
X^{\gamma} \left(0 \ \cdots \ 0 \ u_{g + h \QCCbl, g + h\QCCbl}(X) \ \cdots \ u_{g + h \QCCbl, \QCCl-1}(X)\right) \cdot \diag \left( 1, X^{\submap{1}{0}}, \dots, X^{\submap{\QCCbl-1}{\QCCal-1}} \right)
\end{equation}
equals $X^{\gamma} \mathbf{a}^{(h)}(X) \otimes X^{\gamma} \mathbf{b}^{(g)}(X) $, because $\mapb{\gamma}{\gamma} = \gamma \inta \QCCan + \gamma \intb \QCCbn = \gamma \bmod \QCCm$. Hence~\eqref{eq_BasisProductCode} spans $\QCCa^{(h)} \otimes \QCCb^{(g)}$ for all $\gamma$ and therefore the subcode $\QCCa^{(h)} \otimes \QCCb^{(g)}$ is the subspace generated by \eqref{eq_BasisProductCode}, for $\gamma \in [\QCCm)$.
\end{proof}
We consider the unreduced generating set of a $6$-quasi-cyclic product code in the following example according to Thm.~\ref{theo_ProductCodeQCCQCCUnred}.
\begin{example}[Unreduced Basis of a $6$-Quasi-Cyclic Product Code] \label{ex_QCCQCCProductb}
Let $\QCCa$ be a $2$-quasi-cyclic code of length $\QCCan = 2\QCCam$ and let $\QCCb$ be a $3$-quasi-cyclic code of length $\QCCbn = 3 \QCCbm$, where $\gcd(\QCCan, \QCCbn) = 1$. Let $\QCCm = \QCCam \QCCbm$. The generator matrices of $\QCCa$ and $\QCCb$ in RGB/POT form are
\begin{equation*}
\genmat[A] =
\begin{pmatrix}
\gen[A][0][0] & \gen[A][0][1]   \\
0 & \gen[A][1][1] 
\end{pmatrix}
\quad
\text{and}
\quad
\genmat[B] = 
\begin{pmatrix}
\gen[B][0][0] & \gen[B][0][1] & \gen[B][0][2] \\ 
0  & \gen[B][1][1] & \gen[B][1][2] \\
0 & 0 & \gen[B][2][2] 
\end{pmatrix}.
\end{equation*}
The unreduced basis $(\mathbf{U}^0(X) \ \mathbf{U}^1(X))^T$ of the $6$-quasi-cyclic product code $\QCCa \otimes \QCCb$ as in~\eqref{eq_GenMatrixQCCQCCUnreduced} is:
\begin{align} \label{eq_ExampleGenMatrix6QCC}
\mathbf{U}^0(X) & =
\begin{pmatrix}
\genunred[0][0] &  \genunred[0][1] & \genunred[0][2]  & \genunred[0][3] & \genunred[0][4] & \genunred[0][5] \\
 & \genunred[1][1] & \genunred[1][2] & 0 & \genunred[1][4] & \genunred[1][5]  \\
 &  &  \genunred[2][2] & 0 & 0 & \genunred[2][5]  \\
 & & & \genunred[3][3] & \genunred[3][4] & \genunred[3][5]  \\
\multicolumn{3}{c}{\bigzero} & & \genunred[4][4] &  \genunred[4][5]  \\
 & & & & & \genunred[5][5]
\end{pmatrix}\\
& \qquad \cdot \diag \left( 1, X^{\submap{1}{0}}, X^{\submap{2}{0}}, X^{\submap{0}{1}}, X^{\submap{1}{1}}, X^{\submap{2}{1}} \right), \nonumber
\end{align}
and $\mathbf{U}^1(X) = (X^{\QCCm}-1) \mathbf{I}_{6}$.
With $Y=X^{\intb \QCCbn}$, $Z=X^{\inta \QCCan}$, we can write~\eqref{eq_ExampleGenMatrix6QCC} explicitly 
\begin{align*}
\mathbf{U}^0(X) =
& \begin{pmatrix} 
\genarg{A}[0][0]{Y} \genarg{B}[0][0]{Z} & \genarg{A}[0][0]{Y} \genarg{B}[0][1]{Z} & \genarg{A}[0][0]{Y} \genarg{B}[0][2]{Z} & \genarg{A}[0][1]{Y} \genarg{B}[0][0]{Z} & \genarg{A}[0][1]{Y} \genarg{B}[0][1]{Z}  & \genarg{A}[0][1]{Y} \genarg{B}[0][2]{Z}  \\
0  & \genarg{A}[0][0]{Y} \genarg{B}[1][1]{Z}  & \genarg{A}[0][0]{Y} \genarg{B}[1][2]{Z} & 0  & \genarg{A}[0][1]{Y} \genarg{B}[1][1]{Z} & \genarg{A}[0][1]{Y} \genarg{B}[1][2]{Z} \\
0  & 0  & \genarg{A}[0][0]{Y} \genarg{B}[2][2]{Z} & 0 & 0  & \genarg{A}[0][1]{Y} \genarg{B}[2][2]{Z} \\
0  & 0 & 0 & \genarg{A}[1][1]{Y} \genarg{B}[0][0]{Z} & \genarg{A}[1][1]{Y}  \genarg{B}[0][1]{Z} & \genarg{A}[1][1]{Y} \genarg{B}[0][2]{Z} \\
0 & 0 & 0 & 0 & \genarg{A}[1][1]{Y} \genarg{B}[1][1]{Z} & \genarg{A}[1][1]{Y} \genarg{B}[1][2]{Z} \\
0 & 0 & 0 & 0 & 0 & \genarg{A}[1][1]{Y} \genarg{B}[2][2]{Z}
\end{pmatrix}\\
& \qquad \cdot \diag \left( 1, X^{\submap{1}{0}}, X^{\submap{2}{0}}, X^{\submap{0}{1}}, X^{\submap{1}{1}}, X^{\submap{2}{1}} \right),
\end{align*}
where each \nonzero{} non-diagonal element is taken modulo $(X^{\QCCm}-1)$.
Note that the matrix $\mathbf{U}^0(X)$ in~\eqref{eq_ExampleGenMatrix6QCC} has a zero entry at the same position as the Kronecker product of $\genmat[A] \otimes \genmat[B]$.
\end{example}
In the following, we derive a reduced basis of a $2$-quasi-cyclic product code $\QCCa \otimes \QCCb$, where $\QCCal=2$ and $\QCCbl=1$. As in Lemma~\ref{lem_EquivalenceSpectralAnalysis}, we denote the polynomials of the Pre-RGB/POT form that can be different from their counterparts in the RGB/POT form by a bar.
\begin{theorem}[Generator Matrix of a 2-Quasi-Cyclic Product Code in Pre-RGB/POT Form] \label{theo_ProductCode2-QCCTimesCyclic}
Let $\QCCa$ be an $\LIN{ \QCCan = 2 \cdot \QCCam}{\QCCak}{\QCCad}{q}$ $2$-quasi-cyclic code with generator matrix $\genmat[A] \in \Fx{q}^{2 \times 2}$ as in~\eqref{eq_GroebMatrixCodeA} and let $\QCCb$ be an $\LIN{\QCCbn = \QCCbm}{\QCCbk}{\QCCbd}{q}$ cyclic code with generator polynomial $\gen[B] \in \Fx{q}$. Let $\QCCm = \QCCam \QCCbm$. Then, a generator matrix in $\Fx{q}^{2 \times 2}$ in Pre-RGB/POT form as in~\eqref{eq_PRE-RGBPOTForm} of the $2$-quasi-cyclic product code $\QCCa \otimes \QCCb$ is given by:
\begin{equation} \label{eq_GenMatrix2-QCCTimesCyclic}
\bar{\mathbf{G}}(X) = 
\begin{pmatrix}
g_{0,0}(X) &  \bar{g}_{0,1}(X) \\
0 & g_{1,1}(X)  \\
\end{pmatrix} \cdot
\diag(1, X^{-\inta \QCCam}),
\end{equation}
where
\begin{align}
g_{0,0}(X) & = \gcd \left( X^{\QCCm}-1, \genarg{A}[0][0]{X^{\intb \QCCbn}} \genarg{B}{X^{\inta \QCCan}} \right), \nonumber \\
& = u_0(X)(X^{\QCCm}-1) + v_0(X) \genarg{A}[0][0]{X^{\intb \QCCbn}} \genarg{B}{X^{\inta \QCCan}} \label{eq_Bezout1},
\end{align}
for some polynomials $u_0(X), v_0(X) \in \Fx{q}$, and
\begin{align*}
g_{1,1}(X) & = \gcd \left( X^{\QCCm}-1, \genarg{A}[1][1]{X^{\intb \QCCbn}} \genarg{B}{X^{\inta \QCCan}} \right),\\
\bar{g}_{0,1}(X) & = v_0(X) \genarg{A}[0][1]{X^{\intb \QCCbn}} \genarg{B}{X^{\inta \QCCan}}.
\end{align*}
\end{theorem}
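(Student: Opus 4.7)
The plan is to reduce the unreduced basis $\mathbf{U}(X)$ of $\QCCa \otimes \QCCb$ given by Theorem~\ref{theo_ProductCodeQCCQCCUnred} to the claimed Pre-RGB/POT form via elementary $\Fx{q}$-row operations. Specializing Theorem~\ref{theo_ProductCodeQCCQCCUnred} to $\QCCal = 2$ and $\QCCbl = 1$ (so $\QCCb$ is cyclic with generator polynomial $\gen[B]$ and $\submap{0}{1} = -\inta \QCCam \bmod \QCCm$), the matrix $\mathbf{U}^0(X)$ has entries
\[
u_{h,h'}(X) = \genarg{A}[h][h']{X^{\intb\QCCbn}}\, \genarg{B}{X^{\inta\QCCan}} \bmod (X^{\QCCm}-1),
\]
and together with $\mathbf{U}^1(X) = (X^{\QCCm}-1)\mathbf{I}_2$ we have four rows generating the lifted module $\tilde{\QCC}_{\QCCa \otimes \QCCb} \subseteq \Fx{q}^2$.

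First, I would apply the extended Euclidean algorithm in $\Fx{q}$ to $u_{0,0}(X)$ and $X^{\QCCm}-1$, obtaining $u_0(X), v_0(X)$ such that $u_0(X)(X^{\QCCm}-1) + v_0(X) u_{0,0}(X) = g_{0,0}(X)$. Replacing Row $1$ of $\mathbf{U}^0(X)$ by $v_0(X)$ times itself plus $u_0(X)$ times the first row of $\mathbf{U}^1(X)$ produces the desired first row $\bigl(g_{0,0}(X),\ \bar{g}_{0,1}(X)\, X^{-\inta \QCCam}\bigr)$, where the identification of $\bar{g}_{0,1}(X) = v_0(X)\,\genarg{A}[0][1]{X^{\intb\QCCbn}}\,\genarg{B}{X^{\inta\QCCan}}$ comes directly from the explicit expression for $u_{0,1}(X)$. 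An analogous reduction of Row $2$ of $\mathbf{U}^0(X)$ against the second row of $\mathbf{U}^1(X)$ yields the row $\bigl(0,\ g_{1,1}(X)\, X^{-\inta \QCCam}\bigr)$. The RGB/POT conditions C1 (upper triangular) and C3 ($g_{i,i}(X) \mid X^{\QCCm}-1$) hold by construction, since any gcd with $X^{\QCCm}-1$ divides $X^{\QCCm}-1$; condition C4 reduces to the observation that if $g_{0,0}(X) = X^{\QCCm}-1$ then $u_{0,0}(X) \equiv 0 \pmod{X^{\QCCm}-1}$, whence we may choose $v_0(X) = 0$ in the Bézout identity, forcing $\bar{g}_{0,1}(X) = 0$.

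The main obstacle is showing that the two reduced rows alone generate the same $\Fx{q}$-submodule $\tilde{\QCC}_{\QCCa \otimes \QCCb}$ as the original four rows. Equivalently, I must show that $(X^{\QCCm}-1)\mathbf{e}_0$ and $(X^{\QCCm}-1)\mathbf{e}_1$ lie in the $\Fx{q}$-span of the new rows. For $(0, X^{\QCCm}-1)$, multiplying the new second row by $h_1(X) = (X^{\QCCm}-1)/g_{1,1}(X)$ takes care of the second component in the quotient $R = \Fx{q}/\langle X^{\QCCm}-1\rangle$; for $(X^{\QCCm}-1, 0)$, multiplying the new first row by $h_0(X) = (X^{\QCCm}-1)/g_{0,0}(X)$ yields the correct first component, and the residual second-coordinate contribution $h_0(X)\,\bar{g}_{0,1}(X)\, X^{-\inta \QCCam}$ must be killed by an $\Fx{q}$-multiple of the new second row. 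The crux is establishing the divisibility $g_{1,1}(X) \mid h_0(X)\bar{g}_{0,1}(X) \pmod{X^{\QCCm}-1}$, which I would derive by lifting the RGB/POT relation satisfied by $\genmat[A]$---namely $\gen[A][1][1] \mid \bigl((X^{\QCCam}-1)/\gen[A][0][0]\bigr)\gen[A][0][1]$---through the substitutions $X \mapsto X^{\intb\QCCbn}$ and $X \mapsto X^{\inta\QCCan}$, exploiting the fact that in the single-root setting $\gcd(\QCCm, \mathrm{char}\,\F{q}) = 1$, the gcds $g_{0,0}$ and $g_{1,1}$ factor cleanly in terms of the cyclotomic factorisation of $X^{\QCCm}-1$.
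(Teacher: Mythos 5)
Your overall route is the paper's: specialize Thm.~\ref{theo_ProductCodeQCCQCCUnred} to $\QCCal=2$, $\QCCbl=1$, row-reduce, and let everything hinge on a divisibility that lets the leftover row be absorbed into $(0,\ g_{1,1}(X))$. The genuine gap is in your module-equality bookkeeping. Replacing the first unreduced row by $v_0(X)$ times itself plus $u_0(X)(X^{\QCCm}-1)\mathbf{e}_0$ is not an invertible operation ($v_0(X)$ is not a unit of $\Fx{q}$), so to conclude that $\bar{\mathbf{G}}(X)$ generates all of $\QCCa\otimes\QCCb$ you must show that the discarded row $\bigl(g_{0,0}^A(X^{\intb\QCCbn})g^B(X^{\inta\QCCan}),\ g_{0,1}^A(X^{\intb\QCCbn})g^B(X^{\inta\QCCan})X^{-\inta\QCCam}\bigr)$ lies in the span of the two new rows modulo $(X^{\QCCm}-1)\Fx{q}^{2}$; your ``equivalently'' step covers only the kernel rows $(X^{\QCCm}-1)\mathbf{e}_0$, $(X^{\QCCm}-1)\mathbf{e}_1$ and never this one. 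Subtracting $\tfrac{g_{0,0}^A(X^{\intb\QCCbn})g^B(X^{\inta\QCCan})}{g_{0,0}(X)}$ times the new first row and using the B\'ezout identity leaves $\bigl(0,\ u_0(X)\tfrac{X^{\QCCm}-1}{g_{0,0}(X)}g_{0,1}^A(X^{\intb\QCCbn})g^B(X^{\inta\QCCan})X^{-\inta\QCCam}\bigr)$, so the divisibility you actually need is $g_{1,1}(X)\mid \tfrac{X^{\QCCm}-1}{g_{0,0}(X)}g_{0,1}^A(X^{\intb\QCCbn})g^B(X^{\inta\QCCan})$, with no B\'ezout cofactor attached. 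The crux as you state it, $g_{1,1}(X)\mid h_0(X)\bar{g}_{0,1}(X)=h_0(X)\,v_0(X)\,g_{0,1}^A(X^{\intb\QCCbn})g^B(X^{\inta\QCCan})$, is strictly weaker and does not imply it. The $v_0$-free statement is exactly what the paper proves; once you have it, it settles the discarded row, your $(X^{\QCCm}-1)\mathbf{e}_0$ membership, and (trivially, since $g_{1,1}(X)$ divides $g_{1,1}^A(X^{\intb\QCCbn})g^B(X^{\inta\QCCan})$) the second unreduced row as well.

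On proving that divisibility, your plan diverges from the paper's execution and is viable. The relation you lift, $g_{1,1}^A(X)\mid\tfrac{X^{\QCCam}-1}{g_{0,0}^A(X)}g_{0,1}^A(X)$, is correct---it is the first row of the paper's identity $\mathbf{A}(X)\genmat[A]=(X^{\QCCam}-1)\mathbf{I}_2$---and in the single-root case a root-by-root argument closes the proof: a root $\gamma^z$ of $g_{1,1}(X)$ satisfies $g_{1,1}^A(\alpha^z)g^B(\beta^z)=0$ because $\gamma^{z\intb\QCCbn}=\alpha^z$ and $\gamma^{z\inta\QCCan}=\beta^z$; if $g^B(\beta^z)=0$ then $\gamma^z$ is a root of the factor $g^B(X^{\inta\QCCan})$, otherwise $g_{1,1}^A(\alpha^z)=0$ and the lifted relation forces $g_{0,1}^A(\alpha^z)=0$ or $g_{0,0}^A(\alpha^z)\neq0$, the latter making $\gamma^z$ a root of $\tfrac{X^{\QCCm}-1}{g_{0,0}(X)}$; squarefreeness of $X^{\QCCm}-1$ then yields the divisibility. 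This is a legitimate, arguably cleaner alternative to the paper's chain of gcd/lcm identities, but in your write-up it is only a sketch, so it must be carried out explicitly---and note that it naturally proves the $v_0$-free form, which is the one you need to close the gap above.
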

\begin{proof}
Let two polynomials $u_1(X), v_1(X) \in \Fx{q}$ be such that
\begin{align*}
g_{1,1}(X) & = \gcd \Big( X^{\QCCm}-1, g_{1,1}^{A}(X^{\intb \QCCbn})  g^{B}(X^{\inta \QCCan}) \Big)  \nonumber \\
& = u_1(X)(X^{\QCCm}-1) + v_1(X) g_{1,1}^{A}(X^{\intb \QCCbn}) g^{B}(X^{\inta \QCCan}). 
\end{align*}
Now, we transform the basis of the preimage directly. We denote a new Row $i$ by $\rowop{i}'$ and give the operation between two matrices. For ease of notation, we omit the term $\diag(1, X^{-\inta \QCCam})$.
From Thm.~\ref{theo_ProductCodeQCCQCCUnred}, we have:
\begin{align}
& \begin{pmatrix} %\label{eq_RowRed1}
g_{0,0}^{A}(X^{\intb \QCCbn})  g^{B}(X^{\inta \QCCan}) &  g_{0,1}^{A}(X^{\intb \QCCbn}) g^{B}(X^{\inta \QCCan}) \\
0  & g_{1,1}^{A}(X^{\intb \QCCbn}) g^{B}(X^{\inta \QCCan})\\
X^{\QCCm}-1 & 0 \\
0 & X^{\QCCm}-1
\end{pmatrix} \nonumber \\
& \quad \rowop{0}' \leftarrow \rowop{0} \cdot v_0+ \rowop{2} \cdot u_0 \nonumber \\
& \begin{pmatrix} % \label{eq_RowRed2}
g_{0,0}(X) & v_0(X) g_{0,1}^{A}(X^{\intb \QCCbn}) g^{B}(X^{\inta \QCCan}) \\
g_{0,0}^{A}(X^{\intb \QCCbn })  g^{B}(X^{\inta \QCCan}) &  g_{0,1}^{A}(X^{\intb \QCCbn}) g^{B}(X^{\inta \QCCan}) \\
0  & g_{1,1}^{A}(X^{ \intb \QCCbn}) g^{B}(X^{\inta \QCCan})\\
X^{\QCCm  }-1 & 0 \\
0 & X^{\QCCm}-1
\end{pmatrix} \nonumber \\
& \quad \rowop{1}' \leftarrow \rowop{1} - \frac{g_{0,0}^{A}(X^{\intb \QCCbn})  g^{B}(X^{\inta \QCCan})}{g_{0,0}(X)} \cdot \rowop{0} \nonumber \\
& \quad \rowop{3}' \leftarrow \rowop{3} - \frac{X^{\QCCm}-1}{g_{0,0}(X)} \cdot \rowop{0} \nonumber \\
& \begin{pmatrix} \label{eq_RowRed3}
g_{0,0}(X) & v_0(X) g_{0,1}^{A}(X^{\intb \QCCbn }) g^{B}(X^{\inta \QCCan}) \\
0  & g_{0,1}^{A}(X^{\intb \QCCbn}) g^{B}(X^{\inta \QCCan}) \left( 1 - v_0(X)\frac{g_{0,0}^{A}(X^{\intb \QCCbn})  g^{B}(X^{\inta \QCCan})}{g_{0,0}(X)} \right) \\
0  & g_{1,1}^{A}(X^{\intb \QCCbn}) g^{B}(X^{\inta \QCCan})\\
0  & - \frac{X^{\QCCm}-1}{g_{0,0}(X)} v_0(X) g_{0,1}^{A}(X^{\intb \QCCbn}) g^{B}(X^{\inta \QCCan}) \\
0 & X^{\QCCm}-1
\end{pmatrix},
\end{align}
and with~\eqref{eq_Bezout1}, we can reformulate Row $\rowop{1}$ of the matrix in~\eqref{eq_RowRed3} to 
\begin{align}
& \begin{pmatrix} \label{eq_RowRed4}
g_{0,0}(X) & v_0(X) g_{0,1}^{A}(X^{\intb \QCCbn }) g^{B}(X^{\inta \QCCan}) \\
0  & g_{0,1}^{A}(X^{\intb \QCCbn}) g^{B}(X^{\inta \QCCan}) u_0(X)\frac{X^{\QCCm}-1}{g_{0,0}(X)} \\
0  & g_{1,1}^{A}(X^{\intb \QCCbn}) g^{B}(X^{\inta \QCCan})\\
0  & - \frac{X^{\QCCm}-1}{g_{0,0}(X)} v_0(X) g_{0,1}^{A}(X^{\intb \QCCbn}) g^{B}(X^{\inta \QCCan}) \\
0 & X^{\QCCm}-1
\end{pmatrix}.
\end{align}
Using that $u_0(X)$ and $v_0(X)$ are relatively prime, we can merge $\rowop{1}$ and $\rowop{3}$ of the matrix in~\eqref{eq_RowRed4} to:
\begin{align}
& \begin{pmatrix} % \label{eq_RowRed5}
g_{0,0}(X) & v_0(X) g_{0,1}^{A}(X^{\intb \QCCbn}) g^{B}(X^{\inta \QCCan}) \\
0  & g_{1,1}^{A}(X^{\intb \QCCbn}) g^{B}(X^{\inta \QCCan})\\
0  & \frac{X^{\QCCm}-1}{g_{0,0}(X)} g_{0,1}^{A}(X^{\intb \QCCbn }) g^{B}(X^{\inta \QCCan}) \\
0 & X^{\QCCm}-1
\end{pmatrix} \nonumber \\
& \quad \text{Merge } \rowop{1} \text{ and } \rowop{3}  \text{, because } g_{1,1}(X) = \gcd \left( X^{\QCCm}-1, g_{1,1}^{A}(X^{\intb \QCCbn}) g^{B}(X^{\inta \QCCan})\right)   \nonumber \\
& \begin{pmatrix} \label{eq_RowRed6}
g_{0,0}(X) & v_0(X) g_{0,1}^{A}(X^{\intb \QCCbn}) g^{B}(X^{\inta \QCCan}) \\
0  & \frac{X^{\QCCm}-1}{g_{0,0}(X)} g_{0,1}^{A}(X^{\intb \QCCbn}) g^{B}(X^{\inta \QCCan}) \\
0 & g_{1,1}(X)
\end{pmatrix}.
\end{align}
In the last step, we show that $g_{1,1}(X) \mid \frac{X^{\QCCm}-1}{g_{0,0}(X)} g_{0,1}^{A}(X^{\intb \QCCbn }) g^{B}(X^{\inta \QCCan})$ and therefore Row $\rowop{1}$ of the matrix as in~\eqref{eq_RowRed6} can be deleted. 
From~\cite[Eq. (4)]{lally_algebraic_2001}, we know that for any generator matrix $\genmat[A] \in \Fx{q}^{2 \times 2}$ in RGB/POT form, there exists a matrix $\mathbf{A}(X) =  (a_{i,j}^A(X))^{j \in \interval{2}}_{i \in \interval{2}} \in \Fx{q}^{2 \times 2}$ with $a_{1,0}^A(X)= 0$ such that 
\begin{align} \label{eq_FromRGBPOTForA}
\mathbf{A}(X) \genmat[A] & = (X^{\QCCam}-1) \mathbf{I}_{2}.
\end{align}
We have 
\begin{align}
g_{1,1}(X) &  = \gcd \left( g_{1,1}^A(X^{\intb \QCCbn}) g^{B}(X^{\inta \QCCan}) ,X^{\QCCm}-1 \right) \nonumber \\
& = \lcm \left( \gcd \left( g_{1,1}^A(X^{\intb \QCCbn}),X^{\QCCm}-1 \right), \gcd \left( g^{B}(X^{\inta \QCCan}), X^{\QCCm}-1 \right) \right). \label{eq_FirstStepSubstitute}
\end{align}
From~\eqref{eq_FromRGBPOTForA}, we obtain $g_{1,1}^A(X^{\intb \QCCbn}) = -g_{0,1}^A(X^{\intb \QCCbn}) a_{0,0}^A(X^{\intb \QCCbn}) /a_{0,1}^A(X^{\intb \QCCbn}) $ and inserted in~\eqref{eq_FirstStepSubstitute} leads to:
\begin{align} \label{eq_ExpressionForG11}
g_{1,1}(X) & = \lcm \left( \gcd \left( \frac{g_{0,1}^A(X^{\intb \QCCbn}) a_{0,0}^A(X^{\intb \QCCbn}) }{a_{0,1}^A(X^{\intb \QCCbn})},X^{\QCCm}-1 \right) , \gcd \left( g^{B}(X^{\inta \QCCan}), X^{\QCCm}-1 \right) \right).
\end{align}
From~\eqref{eq_ExpressionForG11}, we can conclude that:
\begin{align}
g_{1,1}(X) & \mid \lcm \left( \gcd \left( g_{0,1}^A(X^{\intb \QCCbn}) a_{0,0}^A(X^{\intb \QCCbn}) ,X^{\QCCm}-1 \right) , \gcd \left( g^{B}(X^{\inta \QCCan}),  X^{\QCCm}-1  \right) \right), \nonumber 
\end{align}
which implies that
\begin{align}
g_{1,1}(X) & \mid g_{0,1}^A(X^{\intb \QCCbn}) \lcm \left( \gcd \left( a_{0,0}^A(X^{\intb \QCCbn}) ,X^{\QCCm}-1 \right) , \gcd \left( g^{B}(X^{\inta \QCCan}), X^{\QCCm}-1  \right) \right). \label{eq_BeforeSubsta00}
\end{align}
The polynomial $X^{\QCCam}-1$ has no repeated roots and we have $a_{0,0}^A(X) g_{0,0}^A(X) = X^{\QCCam}-1$. Clearly $a_{0,0}^A(X)$ and $g_{0,0}^A(X)$ are co-prime, i.e., $\exists u(X), v(X) \in \Fx{q}$, such that
\begin{align*}
u(X) a_{0,0}^A(X) + v(X) g_{0,0}^A(X) & =  1,
\end{align*}
implying that
\begin{align*}
u(X^{\intb \QCCbn}) a_{0,0}^A(X^{\intb \QCCbn}) + v(X^{\intb \QCCbn}) g_{0,0}^A(X^{\intb \QCCbn}) & =  1.
\end{align*}
Hence, the polynomials $a_{0,0}^A(X^{\intb \QCCbn})$ and $g_{0,0}^A(X^{\intb \QCCbn})$ are also relatively prime.
From $a_{0,0}^A(X^{\intb \QCCbn}) g_{0,0}^A(X^{\intb \QCCbn}) = X^{\intb \QCCm}-1 $, we can conclude that
\begin{align}
\gcd(a_{0,0}^A(X^{\intb \QCCbn}), X^{\QCCm}-1) \gcd(g_{0,0}^A(X^{\intb \QCCbn}), X^{\QCCm}-1) &  = \gcd(X^{\QCCm \intb}-1, X^{\QCCm}-1) = X^{\QCCm}-1. \nonumber 
\end{align}
Therefore
\begin{align}
\gcd(a_{0,0}^A(X^{\intb \QCCbn}), X^{\QCCm}-1) & = \frac{X^{\QCCm}-1}{\gcd(g_{0,0}^A(X^{\intb \QCCbn}), X^{\QCCm}-1)}. \label{eq_ExpressionForA00}
\end{align}
Inserting~\eqref{eq_ExpressionForA00} in~\eqref{eq_BeforeSubsta00} leads to:
\begin{align} \label{eq_BeforeExtendinga}
g_{1,1}(X) & \mid g_{0,1}^A(X^{\intb \QCCbn}) \lcm \left( \frac{X^{\QCCm}-1}{\gcd(g_{0,0}^A(X^{\intb \QCCbn}), X^{\QCCm}-1)} , \gcd \left( g^{B}(X^{\inta \QCCan}), X^{\QCCm}-1  \right) \right).
\end{align}
Note that
\begin{align}
& \gcd \left(g_{0,0}^A(X^{\intb \QCCbn}) g^B(X^{\inta \QCCan}), X^{\QCCm}-1 \right) \mid \gcd \left( g_{0,0}^A(X^{\intb \QCCbn}),  X^{\QCCm}-1 \right) \gcd\left(g^B(X^{\inta \QCCan}), X^{\QCCm}-1 \right),
\end{align}
which is equivalent to
\begin{align}
& f(X) \gcd \left( g_{0,0}^A(X^{\intb \QCCbn}) g^B(X^{\inta \QCCan}), X^{\QCCm}-1 \right) =  \gcd \left(g_{0,0}^A(X^{\intb \QCCbn}),  X^{\QCCm}-1 \right) \gcd \left(g^B(X^{\inta \QCCan}), X^{\QCCm}-1 \right), \label{eq_MultipleForGCD}
\end{align}
for some $f(X) \in \Fx{q}$. 
Extending the numerator and the denominator of the first lcm-term in~\eqref{eq_BeforeExtendinga} by $\gcd(g^B(X^{\inta \QCCan}), X^{\QCCm}-1)$ gives:
\begin{align} \label{eq_BeforeExtendingb}
g_{1,1}(X)  & \mid g_{0,1}^A(X^{\intb \QCCbn}) \lcm \left( \frac{(X^{\QCCm}-1)\gcd(g^B(X^{\inta \QCCan}), X^{\QCCm}-1) }{ f(X) \gcd \left( g_{0,0}^A(X^{\intb \QCCbn}) g^B(X^{\inta \QCCan}), X^{\QCCm}-1 \right)} , \gcd \left( g^{B}(X^{\inta \QCCan}), X^{\QCCm}-1  \right) \right).
\end{align}
Multiplying the RHS of~\eqref{eq_BeforeExtendingb} by a polynomial in $\Fx{q}$ does not change the divisibility. We multiply with $f(X)$ as in~\eqref{eq_MultipleForGCD} and obtain:
\begin{align} \label{eq FullExtension}
g_{1,1}(X) & \mid g_{0,1}^A(X^{\intb \QCCbn}) \lcm \left( \frac{(X^{\QCCm}-1)\gcd(g^B(X^{\inta \QCCan}), X^{\QCCm}-1)}{ f(X) \gcd \left( g_{0,0}^A(X^{\intb \QCCbn}) g^B(X^{\inta \QCCan}), X^{\QCCm}-1 \right)} f(X) , \gcd \left( g^{B}(X^{\inta \QCCan}), X^{\QCCm}-1  \right) \right).
\end{align}
We can extract the obtained factors from~\eqref{eq FullExtension} and get:
\begin{align*}
g_{1,1}(X) & \mid g_{0,1}^A(X^{\intb \QCCbn})  \frac{X^{\QCCm}-1}{g_{0,0}(X)} \lcm \left( \gcd( g^{B}(X^{\inta \QCCan}), X^{\QCCm}-1 ), \gcd \left( g^{B}(X^{\inta \QCCan}), X^{\QCCm}-1  \right) \right),\\
& = g_{0,1}^A(X^{\intb \QCCbn})  \frac{X^{\QCCm}-1}{g_{0,0}(X)} \gcd \left( g^{B}(X^{\inta \QCCan}), X^{\QCCm}-1  \right),\\
g_{1,1}(X) & \mid g_{0,1}^A(X^{\intb \QCCbn}) \frac{X^{\QCCm}-1}{g_{0,0}(X)} g^{B}(X^{\inta \QCCan}).
\end{align*}
Therefore we can delete Row $\rowop{1}$ of the matrix as in~\eqref{eq_RowRed6} and obtain the matrix in Pre-RGB/POT form as in~\eqref{eq_GenMatrix2-QCCTimesCyclic}, where we omitted the term $\diag(1, X^{-\inta \QCCam})$ during the proof.
\end{proof}
We consider an example of the generator matrix of a binary $2$-quasi-cyclic product code $\QCCa \otimes \QCCb$ in Pre-RGB/POT form, where the row-code $\QCCa$ is $2$-quasi-cyclic and the column-code $\QCCb$ is cyclic.
\begin{example}[Binary $2$-Quasi-Cyclic Product Code] \label{ex_BinaryQCCode}
Let $\alpha$ be a $21^{\text{st}}$ root of unity in $\F{2^{12}} \cong \Fx{2}/(X^{12} + X^7 + X^6 + X^5 + X^3 + X + 1)$.
Let $\QCCa$ be a binary $\LIN{2 \cdot 21}{17}{8}{2}$ $2$-quasi-cyclic code with generator matrix in RGB/POT form:
\begin{equation*}
\genmat[A] =
\begin{pmatrix}
\gen[A][0][0] & \gen[A][0][1]\\
0 & \gen[A][1][1] 
\end{pmatrix},
\end{equation*}
where
\begin{align*}
\gen[A][0][0]  & = \minpoly{\alpha} \cdot \minpoly{\alpha^3} \cdot \minpoly{\alpha^7},\\
\gen[A][0][1] & = \gen[A][0][0] \cdot (X^2+1), \\
\gen[A][1][1] & = \gen[A][0][0] \cdot \minpoly{\alpha^9},
\end{align*}
where the minimal polynomial $\minpoly{\alpha^i}$ was defined in~\eqref{eq_MinPoly}.
 The common roots $\alpha^i$ of $\gen[A][0][0], \gen[A][0][1]$ and $\gen[A][1][1]$, where $i \in \coset{1} \cup \coset{3} \cup \coset{7} = \{1,2,3,4,6,7,8,11,12,14,16\}$ are eigenvalues of $\genmat[A]$ with multiplicity two and the corresponding eigenvectors span the full space $\F{2^{12}}^2$ (see~\eqref{eq_cyclotomiccoset} for the definition of a cyclotomic coset \coset{i}).

Let $\beta$ be a $\ith{5}$ root of unity and let $\gen[B] = \minpoly{\beta^0} = X+1$ be the generator polynomial of the $\LIN{5}{4}{2}{2}$ cyclic code $\CYCb$. Let $\inta = 3$ and $\intb = -25$ be such that~\eqref{eq_BEzoutRel} holds. Let $\gamma = \alpha \beta$ and we have 
\begin{align*}
X^{105}-1 & = \prod_{i \in  \substack{\{0,1,3,5,7,9,11,13,15,\\17,21,25,35,45,49\}}} \minpoly{\gamma^i}.
\end{align*}
According to Thm.~\ref{theo_ProductCode2-QCCTimesCyclic}, the generator matrix in Pre-RGB/POT form as defined in~\eqref{eq_PRE-RGBPOTForm} of the $\LIN{2 \cdot 105}{68}{16}{2}$ 2-quasi-cyclic product code $\QCCa \otimes \QCCb$ is 
\begin{equation*}
\bar{\mathbf{G}}(X) =
\begin{pmatrix}
\gen[0][0] & \genbar[0][1]\\
0 & \gen[1][1] 
\end{pmatrix} \cdot \diag \left( 1, X^{-3 \cdot 21} \right),
\end{equation*}
where
\begin{align*}
\gen[0][0]  & = \prod_{i \in  \substack{\{0,1,3,5,7,9,11,\\15,21,25,35,45\}}} \minpoly{\gamma^i},\\
\genbar[0][1] & \equiv v_0(X) \genarg{A}[0][1]{X^{-25 \cdot 5}} \genarg{B}{X^{3 \cdot 42 }} \mod (X^{105}-1) \\
& \equiv (X^{39} + X^{38} + X^{36} + X^{35} + X^{32} + X^{30} + X^{25} + X^{24} + X^{22} + X^{20} + X^{18} + X^{17} + X^{12} + X^{11} + X^{10} + \\
& \qquad  X^{6} + X^{3} + X^{2} + 1)(X^{95} + X^{91} + X^{76} + X^{71} + X^{70} + X^{55} + X^{51} + X^{50} + X^{46} + X^{31} + X^{30} + X^{25} + \\
& \qquad X^{21} + X^{11} + X^{10} + 1) \mod (X^{105}-1) \\
& \equiv  X^{95} + X^{92} + X^{91} + X^{90} + X^{89} + X^{86} + X^{85} + X^{84} + X^{82} + X^{80} + X^{75} + X^{72} + X^{71} + X^{69} + X^{67} + \\ 
& \qquad X^{62} + X^{61} +  X^{59} + X^{57} + X^{52} + X^{51} + X^{49} + X^{47} + X^{45} + X^{30} + X^{27} + X^{26} + X^{24} + X^{22} + X^{20} + \\
& \qquad  X^{15} + X^{12} + X^{11} + X^{10} + X^{9} + X^{6} + X^{5} + X^{4} + X^{2} + 1 \mod (X^{105}-1),\\
\gen[1][1] & = \gen[0][0] \cdot \minpoly{\gamma^9},
\end{align*}
where $\deg \gen[1][1] = 77 $. Performing row-reduction on $\bar{\mathbf{G}}(X)$ leads to the RGB/POT form, where:
\begin{align*}
\gen[0][1] & = X^{75} + X^{72} + X^{71} + X^{69} + X^{67} + X^{62} + X^{61} + X^{59} + X^{57} + X^{55} + X^{40} + X^{37} + X^{36} + X^{35} + X^{34} + \\
& \qquad X^{31} + X^{30} + X^{29} + X^{27} + X^{25} + X^{20} + X^{17} + X^{16} + X^{14} + X^{12} + X^{10}.
\end{align*}
\end{example}
The following theorem gives the generator matrix in RGB/POT form (as defined in~\eqref{def_GroebBasisMatrix}) of an $\QCCl$-quasi-cyclic product code $\QCCa \otimes \QCCb$, where the row-code $\QCCa$ is a $1$-level $\QCCl$-quasi-cyclic code and $\QCCb$ is a cyclic code (see Definition~\ref{def_LevelQC} for the property $1$-level). 
\begin{theorem}[Generator Matrix of a $1$-Level Quasi-Cyclic Product Code in RGB/POT Form] \label{theo_OneLevelQC}
Let $\QCCa$ be an $\LIN{\QCCan = \QCCl \cdot \QCCam}{\QCCak}{\QCCad}{q}$ $1$-level $\QCCl$-quasi-cyclic code with generator matrix in RGB/POT form:
\begin{align*}
\genmat[A]  & = \begin{pmatrix} \gen[A][0][0] & \hspace{.2cm} \gen[A][0][1] &  \hspace{.55cm} \cdots & \hspace{.3cm} \gen[A][0][\QCCl-1] 
\end{pmatrix} \nonumber \\
& = \begin{pmatrix}
\gen[A] & \gen[A] f_{1}^{A}(X) & \cdots & \gen[A] f_{\QCCl-1}^{A}(X)
\end{pmatrix} \label{eq_GenMatrixQCCOneLevel}
\end{align*}
as shown in Corollary~\ref{cor_OneLevelQC}. Let $\QCCb$ be an $\LIN{\QCCbn=\QCCbm}{\QCCbk}{\QCCbd}{q}$ cyclic code with generator polynomial $\gen[B] \in \Fx{q}$. Let $\QCCm = \QCCam \QCCbm$. Then the generator matrix of the $1$-level $\QCCl$-quasi-cyclic product code $\QCCa \otimes \QCCb$ in RGB/POT form is:
\begin{equation*}
\genmat =  
\begin{pmatrix}
\gen & \gen f_{1}^{A}(X^{\intb \QCCbn})  & \cdots & \gen f_{\QCCl-1}^{A}(X^{\intb \QCCbn })
\end{pmatrix} 
\cdot \diag \big( 1, X^{-\inta \QCCam}, X^{-2 \inta \QCCam}, \dots, X^{- (\QCCl-1) \inta \QCCam} \big),
\end{equation*}
where 
\begin{equation*} \label{eq_GCDOneLevel}
\gen = \gcd \left( X^{\QCCm}-1, \genarg{A}{X^{\intb \QCCbn}} \genarg{B}{X^{\inta \QCCan}} \right) .
\end{equation*}
\end{theorem}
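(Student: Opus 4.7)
The plan is to apply Thm.~\ref{theo_ProductCodeQCCQCCUnred} to the present setting, exploit the $1$-level structure of $\QCCa$ to collapse all but the first row of $\mathbf{U}^{0}(X)$, and then reduce that surviving row modulo $X^{\QCCm}-1$ via a Bezout identity for $\gen$.

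Setting $\QCCbl = 1$ forces $g = 0$ in~\eqref{eq_ElementsProductCodeQCCQCCUnred}. By Corollary~\ref{cor_OneLevelQC}, we have $g^{A}_{0,h^{\prime}}(X) = \gen[A]\, f^{A}_{h^{\prime}}(X)$ for all $h^{\prime} \in \interval{\QCCl}$ (with $f^{A}_{0}(X) \defeq 1$), together with $g^{A}_{h,h}(X) = X^{\QCCam}-1$ for $h \in \interval{1,\QCCl}$ and $g^{A}_{h,h^{\prime}}(X) = 0$ for $1 \le h \le h^{\prime} \le \QCCl - 1$. Under the substitution $X \mapsto X^{\intb \QCCbn}$, the polynomial $X^{\QCCam}-1$ becomes $X^{\intb \QCCm}-1$, which is congruent to $0 \pmod{X^{\QCCm}-1}$ regardless of the sign of $\intb$, since $\intb \QCCm$ is a multiple of $\QCCm$ in the quotient ring $\Fx{q}/(X^{\QCCm}-1)$. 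Consequently, every row of $\mathbf{U}^{0}(X)$ indexed by $h \ge 1$ is an $\Fx{q}$-multiple of $(X^{\QCCm}-1)\mathbf{e}_{h}$ and hence lies in the row span of $\mathbf{U}^{1}(X) = (X^{\QCCm}-1)\mathbf{I}_{\QCCl}$; such rows can be discarded. The unique surviving nontrivial row has $h^{\prime}$-th entry $\genarg{A}{X^{\intb \QCCbn}}\, \genarg{B}{X^{\inta \QCCan}}\, f^{A}_{h^{\prime}}(X^{\intb \QCCbn})\, X^{-h^{\prime}\inta \QCCam}$.

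Next, I would combine this row with the $0$-th row $(X^{\QCCm}-1)\mathbf{e}_{0}$ of $\mathbf{U}^{1}(X)$ using the Bezout identity $\gen = u(X)(X^{\QCCm}-1) + v(X)\, \genarg{A}{X^{\intb \QCCbn}}\, \genarg{B}{X^{\inta \QCCan}}$, which produces $\gen$ in the $0$-th position. In every column $h^{\prime} \ge 1$ the entry then equals $f^{A}_{h^{\prime}}(X^{\intb \QCCbn})\bigl(\gen - u(X)(X^{\QCCm}-1)\bigr)X^{-h^{\prime}\inta \QCCam}$, so adding back an appropriate multiple of $(X^{\QCCm}-1)\mathbf{e}_{h^{\prime}}$ cancels the $(X^{\QCCm}-1)$-term and leaves exactly $\gen f^{A}_{h^{\prime}}(X^{\intb \QCCbn})\, X^{-h^{\prime}\inta \QCCam}$, which is the claimed first row of $\genmat$. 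The remaining $\QCCl - 1$ rows $(X^{\QCCm}-1)\mathbf{e}_{h}$ stay untouched and realize the implicit lower part of the RGB/POT form.

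To conclude, I would verify RGB/POT Conditions C1--C4: C3 at $(0,0)$ holds because $\gen$ is defined as a divisor of $X^{\QCCm}-1$; C1 and C4 hold because every row below the first is supported only at its diagonal entry $X^{\QCCm}-1$; and C2 is enforced by taking the unique degree-$<\QCCm$ representatives of the first-row entries modulo $X^{\QCCm}-1$. The main obstacle I foresee is the \emph{redundancy step}: showing rigorously that every row of $\mathbf{U}^{0}(X)$ with $h \ge 1$ lies in the $\Fx{q}$-span of $\mathbf{U}^{1}(X)$ requires combining the vanishing of the diagonal factor $\genarg{A}[h][h]{X^{\intb \QCCbn}}$ modulo $X^{\QCCm}-1$ with the fact that the off-diagonal entries $g^{A}_{h,h^{\prime}}(X)$ for $h^{\prime} > h \ge 1$ are zero by C4 applied to $\genmat[A]$. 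Once this is in place, the rest of the reduction is essentially the same two-step Bezout argument used in the proof of Thm.~\ref{theo_ProductCode2-QCCTimesCyclic}, but substantially simpler because only a single nontrivial row needs to be processed.
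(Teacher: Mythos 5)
Your route is essentially the paper's: specialize Thm.~\ref{theo_ProductCodeQCCQCCUnred} to $\QCCbl=1$, discard the rows of $\mathbf{U}^0(X)$ with $h\geq 1$ (their diagonal factor becomes $X^{\intb \QCCm}-1\equiv 0 \bmod (X^{\QCCm}-1)$ and the off-diagonal entries vanish by C4 applied to $\genmat[A]$ --- so the step you single out as the ``main obstacle'' is already complete as you sketched it), and then use the Bezout identity $\gen=u(X)(X^{\QCCm}-1)+v(X)\genarg{A}{X^{\intb \QCCbn}}\genarg{B}{X^{\inta \QCCan}}$ together with the rows $(X^{\QCCm}-1)\mathbf{e}_j$ to produce the row $\big(\gen \ \ \gen f_{1}^{A}(X^{\intb \QCCbn}) \ \cdots \big)\cdot\diag\big(1,X^{-\inta\QCCam},\dots\big)$. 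The paper performs this in the single operation $\rowop{0}' \leftarrow v(X)\rowop{0}+u(X)\rowop{1}+u(X)f_{1}^{A}(X^{\intb \QCCbn})\rowop{2}+\dots$; your two-step variant (Bezout in column $0$, then clearing the $(X^{\QCCm}-1)$-terms columnwise) is the same computation.

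The genuine gap is the inclusion you never argue. Replacing the generator $\rowop{0}$ by $v(X)\rowop{0}+(\text{multiples of }(X^{\QCCm}-1)\mathbf{e}_j)$ is not an invertible operation over $\Fx{q}$, because $v(X)$ is not a unit; what you have shown is therefore only that the claimed matrix generates a submodule of the one spanned by the unreduced basis, i.e., a subcode of $\QCCa\otimes\QCCb$. For equality you must also show that the original nontrivial row $\genarg{A}{X^{\intb \QCCbn}}\genarg{B}{X^{\inta \QCCan}}\big(1 \ \ f_{1}^{A}(X^{\intb \QCCbn}) \ \cdots\big)\cdot\diag\big(1,X^{-\inta\QCCam},\dots\big)$ lies in the span of your new generating set, so that it may legitimately be deleted. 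The paper closes exactly this point: since $\gen=\gcd\big(X^{\QCCm}-1,\genarg{A}{X^{\intb \QCCbn}}\genarg{B}{X^{\inta \QCCan}}\big)$ divides $\genarg{A}{X^{\intb \QCCbn}}\genarg{B}{X^{\inta \QCCan}}$, the old row equals the new row multiplied by $\genarg{A}{X^{\intb \QCCbn}}\genarg{B}{X^{\inta \QCCan}}/\gen$ (any reduction modulo $X^{\QCCm}-1$ being absorbed by the rows $(X^{\QCCm}-1)\mathbf{e}_j$), hence is redundant. This is a one-line repair, but without it the argument does not establish that the stated matrix generates all of $\QCCa\otimes\QCCb$; your concluding verification of C1--C4 is fine.
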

\begin{proof}
Let two polynomials $u(X), v(X) \in \Fx{q}$ be such that:
\begin{equation} \label{eq_BezoutProductDiagOneLevel}
\begin{split}
\gen & = u(X) (X^{\QCCm}-1) + v(X) \genarg{A}{X^{\intb \QCCbn}} \genarg{B}{X^{\inta \QCCan }}.
\end{split}
\end{equation}
We show how to reduce the basis representation to the RGB/POT form. As in the proof of Thm.~\ref{theo_ProductCode2-QCCTimesCyclic}, we denote a new Row $i$ by $\rowop{i}'$. For ease of notation, we omit the term $\diag(1, X^{-\inta \QCCam}, X^{-2 \inta \QCCam}, \dots ,X^{- (\QCCl-1) \inta \QCCam} )$.

According to Thm.~\ref{theo_ProductCodeQCCQCCUnred}, the unreduced basis of $\QCCa \otimes \QCCb$ is:
\begingroup
\begin{align}
& \begin{pmatrix}
\genarg{A}{X^{\intb \QCCbn}} \genarg{B}{X^{\inta \QCCan}} & \genarg{A}{X^{\intb \QCCbn}} f_{1}^{A}(X^{\intb \QCCbn}) \genarg{B}{X^{\inta \QCCan}} & \cdots & \genarg{A}{X^{\intb \QCCbn}} f_{\QCCl-1}^{A}(X^{\intb \QCCbn}) \genarg{B}{X^{\inta \QCCan}} \\
X^{\QCCm}-1 &   \\
& X^{\QCCm}-1 & \multicolumn{2}{c}{\bigzero} \\
\multicolumn{2}{c}{\bigzero} & \ddots & \\
 &  & & X^{\QCCm}-1 
\end{pmatrix} \label{eq_StartMatrix}  \\[2ex]
& \rowop{0}' \leftarrow v(X)\rowop{0} + u(X) \rowop{1} + u(X) f_{1}^{A}(X^{\intb \QCCbn}) \rowop{2} + \dots + u(X) f_{\QCCl-1}^{A}(X^{\intb \QCCbn}) \rowop{\QCCl}  \nonumber \\[2ex]
& \begin{pmatrix}
\gen & \gen f_{1}^{A}(X^{\intb \QCCbn}) & \cdots & \gen f_{\QCCl-1}^{A}(X^{\intb \QCCbn}) \\
\genarg{A}{X^{\intb \QCCbn}} \genarg{B}{X^{\inta \QCCan}} & \genarg{A}{X^{\intb \QCCbn}} f_{1}^{A}(X^{\intb \QCCbn}) \genarg{B}{X^{\inta \QCCan}} & \cdots &  \genarg{A}{X^{\intb \QCCbn}} f_{\QCCl-1}^{A}(X^{\intb \QCCbn}) \genarg{B}{X^{\inta \QCCan}} \\
X^{\QCCm}-1 & \\
& X^{\QCCm }-1 &  \multicolumn{2}{c}{\bigzero} \\
\multicolumn{2}{c}{\bigzero} & \ddots & \\
&  & & X^{\QCCm}-1 
\end{pmatrix}, \label{eq_MatrixFirstMerge}
\end{align}
where the $\ith{i}$ entry in the new Row $\rowop{0}$ in the matrix in~\eqref{eq_MatrixFirstMerge} from matrix in~\eqref{eq_StartMatrix} was obtained using:
\begin{align}
v(X) \genarg{A}{X^{\intb \QCCbn}} f_{i}^{A}(X^{\intb \QCCbn}) \genarg{B}{X^{\inta \QCCan}} + & u(X) f_{i}^{A}(X^{\intb \QCCbn}) (X^{\QCCm} -1) \nonumber \\
& = f_{i}^{A}(X^{\intb \QCCbn}) \big( v(X) \genarg{A}{X^{\intb \QCCbn}} \genarg{B}{X^{\inta \QCCan}} + u(X) (X^{\QCCm} - 1) \big). \label{eq_PreGCDForm}
\end{align}
Inserting~\eqref{eq_BezoutProductDiagOneLevel} into~\eqref{eq_PreGCDForm} gives:
\begin{align*}
f_{i}^{A}(X^{\intb \QCCbn}) \big( v(X) \genarg{A}{X^{\intb \QCCbn}} \genarg{B}{X^{\inta \QCCan}} + u(X) (X^{\QCCm}-1) \big) = f_{i}^{A}(X^{\intb \QCCbn}) \gen.
\end{align*}
Clearly, $\gen$ divides $\genarg{A}{X^{\intb \QCCbn}} \genarg{B}{X^{\inta \QCCan}}$ and it is easy to check that Row $\rowop{1}$ of the matrix in~\eqref{eq_MatrixFirstMerge} can be obtained from Row $\rowop{0}$ by multiplying by $\genarg{A}{X^{\intb \QCCbn}} \genarg{B}{X^{\inta \QCCan}}/\gen$.
Therefore, we can omit the linearly dependent Row $\rowop{1}$ in~\eqref{eq_MatrixFirstMerge} and the reduced basis in RGB/POT form is:
\begin{align*}
& \begin{pmatrix}
\gen \hspace*{.3cm}  & \gen f_{1}^{A}(X^{\intb \QCCbn}) & \cdots   & \gen f_{\QCCl-1}^{A}(X^{\intb \QCCbn})
\end{pmatrix}, 
\end{align*}
where we omitted the matrix $\diag (1, X^{-\inta \QCCam}, X^{-2 \inta \QCCam}, \dots, X^{- (\QCCl-1) \inta \QCCam})$ during the proof.
\endgroup
\end{proof}
We conjecture the (general form of the) generator matrix in Pre-RGB/POT form of an $\QCCal \QCCbl$-quasi-cyclic product code $\QCCa \otimes \QCCb$ in the following. We reduced the unreduced basis of several examples and could verify Conjecture~\ref{conj_ProductCodeQCCQCC}. 
\begin{conjecture}[Generator Matrix of an $\QCCal \QCCbl$-Quasi-Cyclic Product Code in Pre-RGB/POT Form] \label{conj_ProductCodeQCCQCC}
Let $\QCCa$ be an $\LIN{\QCCan = \QCCal \cdot \QCCam}{\QCCak}{\QCCad}{q}$ $\QCCal$-quasi-cyclic code with generator matrix $\genmat[A] \in \Fx{q}^{\QCCal \times \QCCal}$ as in~\eqref{eq_GroebMatrixCodeA} and let $\QCCb$ be an $\LIN{\QCCbn = \QCCbl \cdot \QCCbm}{\QCCbk}{\QCCbd}{q}$ $\QCCbl$-quasi-cyclic code with generator matrix $\genmat[B] \in \Fx{q}^{\QCCbl \times \QCCbl}$ as in~\eqref{eq_GroebMatrixCodeB}. Let $\QCCm = \QCCam \QCCbm$ and $\QCCl = \QCCal \QCCbl$. 

Then, a generator matrix in $\Fx{q}^{\QCCl \times \QCCl}$ in Pre-RGB/POT form of the $\LIN{\QCCn = \QCCl \cdot \QCCm}{\QCCak \QCCbk}{\QCCad \QCCbd}{q}$ $\QCCl$-quasi-cyclic product code $\QCCa \otimes \QCCb$ is given by:
\begin{equation*} \label{eq_GenMatrixQCCQCC}
\begin{split}
\bar{\mathbf{G}}(X) & = 
\begin{pmatrix}
g_{0,0}(X) &  \bar{g}_{0,1}(X) & \cdots  & \cdots  & \bar{g}_{0,\QCCl-1}(X) \\
 & g_{1,1}(X) & \cdots & \cdots & \bar{g}_{1,\QCCl-1}(X) \\
& & \ddots & \vdots & \vdots \\
\multicolumn{3}{c}{\bigzero} & g_{\QCCl-2,\QCCl-2}(X) & \bar{g}_{\QCCl-2,\QCCl-1}(X) \\
 &  & & & g_{\QCCl-1,\QCCl-1}(X)
\end{pmatrix} \\
& \qquad \qquad \cdot \diag \left( 1, X^{\submap{1}{0}}, \dots, X^{\submap{\QCCbl-1}{0}},\dots, X^{\submap{\QCCbl-1}{\QCCal-1}} \right),
\end{split}
\end{equation*}
where the $\QCCl$ diagonal entries are
\begin{align} 
g_{g+h\QCCbl, g+h\QCCbl}(X) & = \gcd \Big( X^{\QCCm}-1, \genarg{A}[h][h]{X^{\intb \QCCbn}} \genarg{B}[g][g]{X^{\inta \QCCan}} \Big), \quad \forall g \in \interval{\QCCbl}, \forall  h \in \interval{\QCCal} \label{eq_DiagonalElementsProductCodeQCCQCC}.
\end{align}
Let the polynomials $u_{g,h}(X), v_{g,h}(X) \in \Fx{q}$ be such that:
\begin{equation*} \label{eq_BezoutProductDiag}
\begin{split}
\gen[g+h\QCCbl][g+h\QCCbl] & = u_{g,h}(X)  (X^{\QCCm}-1) + v_{g,h}(X) \genarg{A}[h][h]{X^{\intb \QCCbn}} \genarg{B}[g][g]{X^{\inta  \QCCan }}, \quad \forall g \in \interval{\QCCbl}, \forall  h \in \interval{\QCCal}.
\end{split}
\end{equation*}
Then the off-diagonal entries of the matrix $\bar{\mathbf{G}}(X)$ are given by
\begin{equation*}  \label{eq_NonDiagonalElementsProductCodeQCCQCC}
\begin{split}
\bar{g}_{g+h\QCCbl,g^{\prime} + h^{\prime}\QCCbl}(X) & = v_{g,h}(X) \genarg{A}[h][h^{\prime}]{X^{\intb \QCCbn}} \genarg{B}[g][g^{\prime}]{X^{\inta \QCCan}} \mod (X^{\QCCm}-1), \\
 & \qquad  \qquad \forall g \in \interval{\QCCbl}, h \in \interval{\QCCal},g^{\prime} \in \interval{g+1, \QCCbl}, h^{\prime} \in \interval{h+1, \QCCal}.
\end{split}
\end{equation*}
\end{conjecture}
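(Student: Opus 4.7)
The plan is to generalize the row-reduction strategy of Theorems~\ref{theo_ProductCode2-QCCTimesCyclic} and~\ref{theo_OneLevelQC} to the full unreduced basis of Theorem~\ref{theo_ProductCodeQCCQCCUnred}, which consists of the upper-triangular block $\mathbf{U}^0(X)$ stacked on top of $\mathbf{U}^1(X) = (X^{\QCCm}-1)\mathbf{I}_{\QCCl}$. I would process the diagonal positions $k = g + h\QCCbl$ in the product partial order $(g,h) \preceq (g',h') \iff g \leq g'$ and $h \leq h'$: for each such $k$, combine row $k$ of $\mathbf{U}^0(X)$ with the row $(X^{\QCCm}-1)\mathbf{e}_k$ of $\mathbf{U}^1(X)$ using the Bézout identity that defines $\gen[k][k]$, producing a new row whose leading entry is $\gen[k][k]$ and whose entry in any column $k' = g' + h'\QCCbl$ with $(g',h') \succeq (g,h)$ equals
\[
v_{g,h}(X)\,\genarg{A}[h][h']{X^{\intb \QCCbn}}\,\genarg{B}[g][g']{X^{\inta \QCCan}} \bmod (X^{\QCCm}-1),
\]
which matches the claimed Pre-RGB/POT form once the diagonal shift $\diag(1, X^{\submap{1}{0}}, \dots, X^{\submap{\QCCbl-1}{\QCCal-1}})$ is reattached.

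Next, I would prove that after this combination the remaining $\QCCl$ original rows of $\mathbf{U}^0(X)$ and the remaining rows of $\mathbf{U}^1(X)$ are redundant. This reduces to the divisibility statement that, for each $k \prec k'$,
\[
\gen[k'][k'] \;\Big|\; \frac{X^{\QCCm}-1}{\gen[k][k]}\,\genarg{A}[h][h']{X^{\intb \QCCbn}}\,\genarg{B}[g][g']{X^{\inta \QCCan}},
\]
which directly generalizes the concluding step of the proof of Theorem~\ref{theo_ProductCode2-QCCTimesCyclic}. The auxiliary matrices $\mathbf{A}(X)$ and $\mathbf{B}(X)$ with $\mathbf{A}(X)\genmat[A] = (X^{\QCCam}-1)\mathbf{I}_{\QCCal}$ and $\mathbf{B}(X)\genmat[B] = (X^{\QCCbm}-1)\mathbf{I}_{\QCCbl}$ exist by upper-triangularity together with $\genarg{A}[h][h]{X} \mid X^{\QCCam}-1$ and $\genarg{B}[g][g]{X} \mid X^{\QCCbm}-1$. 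They let one rewrite each diagonal $\genarg{A}[h'][h']{X}$ as a rational function in off-diagonal entries $\genarg{A}[h][h']{X}$ and entries of $\mathbf{A}(X)$, and analogously for $\QCCb$; substituting, using that the squarefreeness of $X^{\QCCam}-1$ and $X^{\QCCbm}-1$ forces the pertinent $\mathbf{A}$- and $\mathbf{B}$-factors to be coprime to the corresponding diagonal entries, and taking gcds with $X^{\QCCm}-1$ yields the divisibility.

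The main obstacle is the combinatorial book-keeping in this second phase. In the $2 \times 2$ case of Theorem~\ref{theo_ProductCode2-QCCTimesCyclic} only one redundancy must be established, whereas here one must delete on the order of $\QCCl^2$ rows in a consistent order, and the cancellations of the off-diagonal tails depend on the Bézout coefficients $v_{g,h}(X)$ not interacting destructively across different $(g,h)$. I therefore expect the cleanest realization to be by induction on $\QCCl$: first fix $\QCCbl = 1$ and induct on $\QCCal$ to extend Theorem~\ref{theo_OneLevelQC} beyond the $1$-level case by peeling off one row of $\genmat[A]$ at a time, and then lift to general $\QCCbl$ using the direct-sum decomposition $\QCCa \otimes \QCCb = \bigoplus_{g,h}(\QCCa^{(h)} \otimes \QCCb^{(g)})$ from equation~\eqref{eq_ProductQCCSumOfSubCodes} together with Lemma~\ref{lem_MappingUnivariateQCCQCC} to align the reduced rows with generators of the subcodes. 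That the authors verified the conjecture only computationally is consistent with this being a book-keeping obstruction rather than an obstacle of principle.
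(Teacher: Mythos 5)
You should note first that the paper contains no proof of this statement at all: it is precisely Conjecture~\ref{conj_ProductCodeQCCQCC}, which the authors support only by row-reducing the unreduced basis of Thm.~\ref{theo_ProductCodeQCCQCCUnred} in several examples, and whose proof they explicitly list as open future work in Section~\ref{sec_conclusion}. So your proposal has to stand on its own, and as written it does not: it is a strategy outline whose hard part is exactly the part you defer. Your phase one is fine --- combining row $k=g+h\QCCbl$ of $\mathbf{U}^0(X)$ with $(X^{\QCCm}-1)\mathbf{e}_k$ via the B\'ezout identity does produce a row with leading entry $g_{k,k}(X)$ and tail $v_{g,h}(X)\,g^{A}_{h,h'}(X^{\intb\QCCbn})\,g^{B}_{g,g'}(X^{\inta\QCCan})$, mirroring the first step of Thm.~\ref{theo_ProductCode2-QCCTimesCyclic}. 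The gap is in phase two. You claim that the redundancy of the remaining rows ``reduces to'' the scalar divisibility $g_{k',k'}(X)\mid \frac{X^{\QCCm}-1}{g_{k,k}(X)}\,g^{A}_{h,h'}(X^{\intb\QCCbn})\,g^{B}_{g,g'}(X^{\inta\QCCan})$ for $k\prec k'$, but this reduction is only valid in the $2\times 2$ situation, where the row to be deleted has a \emph{single} nonzero entry. For $\QCCl>2$ every eliminated row carries a whole tail of entries in the columns beyond $k$, so redundancy is a module-membership statement: one must exhibit polynomial combinations of the later rows of $\bar{\mathbf{G}}(X)$ and of the $(X^{\QCCm}-1)\mathbf{e}_j$ that match \emph{all} tail entries simultaneously, and each time a column $k'$ is cleared the entries in columns $k''>k'$ are perturbed by multiples of $\bar{g}_{k',k''}(X)$, which reintroduces the different B\'ezout coefficients $v_{g',h'}(X)$. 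Whether these perturbations cancel coherently is exactly the ``book-keeping'' you set aside, and it is the mathematical content of the conjecture, not an administrative detail.

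Moreover, even the pairwise divisibility is asserted rather than proved. In Thm.~\ref{theo_ProductCode2-QCCTimesCyclic} the analogous step occupies most of the proof and hinges on the specific $2\times 2$ consequence of $\mathbf{A}(X)\genmat[A]=(X^{\QCCam}-1)\mathbf{I}_2$, namely $g^{A}_{1,1}(X)a^{A}_{0,1}(X)=-g^{A}_{0,1}(X)a^{A}_{0,0}(X)$, which expresses the lower diagonal entry through a \emph{single} off-diagonal entry. For general $\QCCal$ the identity $\mathbf{A}(X)\genmat[A]=(X^{\QCCam}-1)\mathbf{I}_{\QCCal}$ ties $g^{A}_{h',h'}(X)$ to a \emph{sum} over intermediate indices, so the step ``rewrite each diagonal as a rational function in off-diagonal entries, then take gcds'' is not available in the form you invoke it; some new argument (or a genuinely different induction) is needed. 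Your fallback plan --- fix $\QCCbl=1$, induct on $\QCCal$, then lift via the decomposition in~\eqref{eq_ProductQCCSumOfSubCodes} and Lemma~\ref{lem_MappingUnivariateQCCQCC} --- is plausible as a research direction, but the decomposition only provides generators of the submodule, not a triangular reduced set, so the lifting step conceals the same interaction problem. In short, the statement remains unproven both in the paper and in your proposal; what you have written is a reasonable plan of attack, not a proof.
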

Note that the expression of the diagonal terms in~\eqref{eq_DiagonalElementsProductCodeQCCQCC} is equivalent to the generator polynomial of a cyclic product code $\QCCa \otimes \QCCb$ where the cyclic row-code $\QCCa$ is generated by $\genarg{A}[h][h]{X}$ and the generator polynomial of the cyclic column-code $\QCCb$ is $\genarg{B}[g][g]{X}$.

\section{Spectral Analysis of a Quasi-Cyclic Product Code and Bounding The Minimum Hamming Distance} \label{sec_SpectralAnalysis}

\subsection{Spectral Analysis} \label{subsec_SpectralAnalysisProduct}
In this section, we apply the spectral techniques of Semenov and Trifonov~\cite{semenov_spectral_2012} to an $\QCCl$-quasi-cyclic product code $\QCCa \otimes \QCCb$, where $\QCCa$ is an $\QCCl$-quasi-cyclic code and $\QCCb$ is a cyclic code, and generalize the results for a cyclic product code as in~\cite[Thm. 4]{lin_further_1970}.
Furthermore, we bound the minimum Hamming distance of a given $\QCCl$-quasi-cyclic code $\QCCa$ by embedding it into an $\QCCl$-quasi-cyclic product code $\QCCa \otimes \QCCb$. This method extends the approach of~\cite[Thm. 4]{zeh_generalizing_2013}, where a lower bound on the minimum Hamming distance of a given cyclic code was obtained through embedding it into a cyclic product code.

It turns out that the eigenvalues of maximal multiplicity $\QCCl$ of the $\QCCl$-quasi-cyclic code $\QCCa$ and the zeros of $\QCCb$ occur in the spectral analysis of the $\QCCl$-quasi-cyclic code $\QCCa \otimes \CYCb$ with maximal multiplicity $\QCCl$. This is similar to the appearance of the zeros of two cyclic codes in the generator polynomial of their cyclic product code. The eigenvalues of multiplicity smaller than $\QCCl$ of $\QCCa$ and the \nonzero{}s of $\QCCb$ are reflected in the spectral analysis of $\QCCa \otimes \QCCb$ in a manner similar to that of the \nonzero{}s of two cyclic codes $\QCCa$ and $\QCCb$ in the case of a cyclic product code $\QCCa \otimes \QCCb$. Therefore, they are treated separately in Lemma~\ref{lem_EigenvaluesMaxMulti} and in Lemma~\ref{lem_EigenvaluesSmallerThanMax}.

Throughout this section, let $\QCCa$ be an $\LIN{\QCCan = \QCCl \cdot \QCCam}{\QCCak}{\QCCad}{q}$ $\QCCl$-quasi-cyclic code with generator matrix in RGB/POT form as in~\eqref{eq_GroebMatrixCodeA} and let $\CYCb$ be an $\LIN{\QCCbn = \QCCbm}{\QCCbk}{\QCCbd}{q}$ cyclic code with generator polynomial $\gen[B]$. Let $\QCCm = \QCCam \QCCbm$. The product code $\QCCa \otimes \QCCb$ is an $\LIN{\QCCl \cdot \QCCm}{\QCCak \QCCbk}{\QCCad \QCCbd}{q}$ $\QCCl$-quasi-cyclic code with generator matrix in Pre-RGB/POT form as given in Conjecture~\ref{conj_ProductCodeQCCQCC}, i.e., their entries are:
\begin{align} 
\gen[h][h] & = u_h(X) (X^{\QCCm}-1) + v_h(X) \genarg{A}[h][h]{X^{\intb \QCCbn}} \genarg{B}{X^{\inta \QCCan}}, & \forall h \in \interval{\QCCl} \label{eq_QCTimeCyclicDiag},\\
\bar{g}_{h,h^{\prime}}(X) & = v_h(X) \genarg{A}[h][h^{\prime}]{X^{\intb \QCCbn}} \genarg{B}{X^{\inta \QCCan}}, & \forall h \in \interval{\QCCl}, h^{\prime} \in \interval{h+1, \QCCl}. \label{eq_QCTimeCyclicNonDiag}
\end{align}
Furthermore, as in~\eqref{eq_BEzoutRel} let throughout this section two \nonzero{} integers $\inta, \intb$ be such that $\inta \QCCan + \intb \QCCbn = 1$.
For a given set $A = \{a_0, a_1, \dots, a_{|A|-1}\}$, denote by $\shiftset{A}{z} \defeq \{a_i+z  \mid  a_{i} \in A \}$.
\begin{lemma}[Eigenvalues Of Maximal Multiplicity] \label{lem_EigenvaluesMaxMulti}
Let $\QCCa$ be an $\LIN{\QCCl \cdot \QCCam}{\QCCak}{\QCCad}{q}$ $\QCCl$-quasi-cyclic code with generator matrix $\genmat[A]$ in RGB/POT form. Let $\alpha$ be an element of order $\QCCam$ in $\F{q^{\QCCas}}$, $\QCCb$ an $\LIN{\QCCbn = \QCCbm}{\QCCbk}{\QCCbd}{q}$ cyclic code, and $\beta$ an element of order $\QCCbm$ in $\F{q^{\QCCbs}}$. Define $\QCCs \defeq \lcm(\QCCas, \QCCbs)$. Let $\gamma \defeq \alpha \beta$ be in $\F{q^{\QCCs}}$.
Let the set $A^{(\QCCl)} \subseteq \interval{\QCCam}$ contain the exponents of all eigenvalues $\eigenvalue{z}^A = \alpha^{z}, \forall z \in A^{(\QCCl)}$ of $\QCCa$ of (algebraic and geometric) multiplicity $\QCCl$. Let $B \subseteq \interval{\QCCbm}$ be the defining set of $\QCCb$, i.e., the set of exponents of all roots of the generator polynomial $\gen[B] = \prod_{i \in B} (X-\beta^i)$ of $\QCCb$. Then, the set:
\begin{equation*}
C^{(\QCCl)} = A^{(\QCCl)} \cup \shiftset{A^{(\QCCl)}}{\QCCam} \cup \shiftset{A^{(\QCCl)}}{2\QCCam} \cup \cdots \cup 
\shiftset{A^{(\QCCl)}}{(\QCCbm-1)\QCCam} \cup
B \cup \shiftset{B}{\QCCbm} \cup \shiftset{B}{2\QCCbm} \cup \cdots \cup \shiftset{B}{(\QCCam-1)\QCCbm} 
\end{equation*}
is the set of all the exponents of the eigenvalues $\eigenvalue{z} = \gamma^{z}$ for all $z \in C^{(\QCCl)}$ of the $\QCCl$-quasi-cyclic product code $\QCCa \otimes \QCCb$ of maximal multiplicity $\QCCl$.
Furthermore, we have $|C^{(\QCCl)}| = |A^{(\QCCl)}| \QCCbm+ (\QCCbm - \QCCbk)\QCCam - |A^{(\QCCl)}| (\QCCbm - \QCCbk) = (\QCCbm - \QCCbk)\QCCam + |A^{(\QCCl)}| \QCCbk$. 
\end{lemma}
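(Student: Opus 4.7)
The plan is to work directly from the generator matrix of $\QCCa \otimes \QCCb$ in Pre-RGB/POT form (Conjecture~\ref{conj_ProductCodeQCCQCC}), using Lemma~\ref{lem_EquivalenceSpectralAnalysis} to assert that the spectral content of $\QCCa \otimes \QCCb$ can be read off the diagonal entries $\gen[h][h]$, $h \in \interval{\QCCl}$, exactly as in~\eqref{eq_QCTimeCyclicDiag}. Since $\gamma$ has order $\QCCm$ and each $\gen[h][h]$ divides $X^{\QCCm}-1$ (which has distinct roots in the single-root case), the multiplicity of $\gamma^z$ as an eigenvalue of $\QCCa \otimes \QCCb$ equals the number of indices $h \in \interval{\QCCl}$ with $\gen[h][h](\gamma^z)=0$. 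Thus $\gamma^z$ has maximal multiplicity $\QCCl$ iff $(X-\gamma^z)$ divides every diagonal entry.

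Next I would establish the two pivotal identities $\gamma^{\intb \QCCbn} = \alpha$ and $\gamma^{\inta \QCCan} = \beta$. Using $\QCCbn = \QCCbm$ and the fact that $\beta^{\QCCbm}=1$, one finds $\gamma^{\intb \QCCbn} = \alpha^{\intb \QCCbn}$, and the Bezout relation~\eqref{eq_BEzoutRel} combined with $\alpha^{\QCCam}=1$ (note $\QCCan = \QCCl \QCCam$) reduces this to $\alpha^{1} = \alpha$; the identity for $\gamma^{\inta \QCCan}$ is symmetric. Evaluating \eqref{eq_QCTimeCyclicDiag} at $X=\gamma^z$ then yields
\begin{equation*}
\gen[h][h](\gamma^z) = 0 \;\Longleftrightarrow\; \gamma^z \text{ is a root of } X^{\QCCm}-1 \text{ and of } \genarg{A}[h][h]{\alpha^z}\,\genarg{B}{\beta^z}.
\end{equation*}
The first condition holds automatically, and the product on the right vanishes iff $\genarg{A}[h][h]{\alpha^z}=0$ or $\genarg{B}{\beta^z}=0$. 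Consequently, $(X-\gamma^z)\mid \gen[h][h]$ for every $h$ iff either $\beta^z$ is a root of $\gen[B]$ (in which case \emph{all} diagonal entries vanish at $\gamma^z$) or $\alpha^z$ is a common root of all $\genarg{A}[h][h]{X}$ (i.e., a multiplicity-$\QCCl$ eigenvalue of $\QCCa$).

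Translating these two alternatives into exponent sets in $\interval{\QCCm}$ gives exactly the union $C^{(\QCCl)}$ in the statement: the condition $z\bmod \QCCam \in A^{(\QCCl)}$ describes the $\QCCam$-periodic shifts $A^{(\QCCl)}\cup \shiftset{A^{(\QCCl)}}{\QCCam}\cup\cdots\cup \shiftset{A^{(\QCCl)}}{(\QCCbm-1)\QCCam}$, and the condition $z\bmod \QCCbm \in B$ describes the shifts $B\cup \shiftset{B}{\QCCbm}\cup\cdots\cup \shiftset{B}{(\QCCam-1)\QCCbm}$. For the cardinality, I would invoke inclusion-exclusion: each of the two unions consists of pairwise disjoint shifted copies (shifts are multiples of $\QCCam$ resp.\ $\QCCbm$ landing in disjoint intervals), giving $|A^{(\QCCl)}|\QCCbm$ and $(\QCCbm-\QCCbk)\QCCam$ respectively; the intersection consists of those $z\in\interval{\QCCm}$ with $z\bmod \QCCam\in A^{(\QCCl)}$ and $z\bmod \QCCbm\in B$, and by the Chinese Remainder Theorem (applicable since $\gcd(\QCCan,\QCCbn)=1$ forces $\gcd(\QCCam,\QCCbm)=1$) this has size $|A^{(\QCCl)}|\,(\QCCbm-\QCCbk)$, yielding the stated formula $|C^{(\QCCl)}| = (\QCCbm-\QCCbk)\QCCam + |A^{(\QCCl)}|\QCCbk$.

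The main obstacle is the bookkeeping with the substitutions $X\mapsto X^{\intb \QCCbn}$ and $X\mapsto X^{\inta \QCCan}$: one must carefully argue that the $\gcd$ with $X^{\QCCm}-1$ does not introduce spurious roots and that the factorization into alternatives really gives all and only the $z$ for which $(X-\gamma^z)$ divides \emph{every} diagonal entry. Once the two identities $\gamma^{\intb \QCCbn}=\alpha$, $\gamma^{\inta \QCCan}=\beta$ are in hand, the rest reduces to disjoint-union bookkeeping and an application of CRT.
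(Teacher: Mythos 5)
Your proposal is correct and follows essentially the same route as the paper: both read the maximal-multiplicity eigenvalues off the conjectured diagonal entries $\gcd\bigl(X^{\QCCm}-1,\, \genarg{A}[h][h]{X^{\intb \QCCbn}}\genarg{B}{X^{\inta \QCCan}}\bigr)$ using the identities $\gamma^{\intb\QCCbn}=\alpha$, $\gamma^{\inta\QCCan}=\beta$, and then count. The only cosmetic differences are that the paper invokes Lin--Weldon's Theorem~3 for the characterization of the zeros of the diagonal entries where you derive it directly from the gcd structure, and that you spell out the inclusion--exclusion/CRT counting that the paper compresses into ``the cardinality follows.''
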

\begin{proof}
For an eigenvalue $\eigenvalue{z}^A = \alpha^{z}, \forall z \in A^{(\QCCl)}$ of the $\QCCl$-quasi-cyclic code $\QCCa$ of multiplicity $\QCCl$, all $\ell$ diagonal entries $\gen[A][h][h]$ of $\genmat[A]$ are divisible by $(X-\alpha^{z})$. 
From Conjecture~\ref{conj_ProductCodeQCCQCC}, we can conclude that if $\alpha^{z}$ is a root of $\gen[A][h][h]$, then 
\begin{equation} \label{eq_RootsMultiplication}
\gamma^{z}, \gamma^{z+ \QCCam}, \gamma^{z + 2\QCCam}, \dots, \gamma^{z + (\QCCbm-1) \QCCam}
\end{equation}
are $\QCCbm$ roots of $\genarg{A}[h][h]{X^{\intb \QCCbn}}$ and therefore of $\gen[][h][h]$ as in~\eqref{eq_QCTimeCyclicDiag}, because:
\begin{equation} \label{eq_MappingFromComponentToProductPre}
(\gamma^{z + i \QCCam})^{\intb \QCCbn}  = \gamma^{z \intb \QCCbn} =  \alpha^{z \intb \QCCbn} \beta^{z \intb \QCCbn} = \alpha^{z \intb \QCCbn}, 
\end{equation}
where in the first step we used the fact that the order of $\gamma$ is $\QCCam \QCCbn$.
The order of $\alpha$ is $\QCCam$ and with~\eqref{eq_BEzoutRel}, we obtain from~\eqref{eq_MappingFromComponentToProductPre}:
\begin{equation*} 
\alpha^{z \intb \QCCbn} = \alpha^{z \intb \QCCbn} \alpha^{z \inta \QCCan} = \alpha^z.
\end{equation*}
The zeros of the generator polynomial $\gen[B]$ of $\CYCb$ appear in the spectral analysis of the product code $\QCCa \otimes \QCCb$ similar to the eigenvalues of $\QCCa$ with multiplicity $\QCCl$. A \nonzero{} polynomial $\gen[][h][h], \forall h \in \interval{\QCCl}$ as given in~\eqref{eq_QCTimeCyclicDiag} has a zero at
\begin{equation*}
\gamma^{z}, \gamma^{z+\QCCbm}, \gamma^{z+2\QCCbm}, \dots, \gamma^{z+(\QCCam-1)\QCCbm}, 
\end{equation*}
if $\beta^{z}$ is a zero of $\gen[B]$. From~\cite[Thm. 3]{lin_further_1970}, we know that the polynomial $\gen[A][h][h]$ as in~\eqref{eq_QCTimeCyclicDiag} has a zero if and only if either the polynomial $\genarg{A}[h][h]{X^{\intb \QCCbn}}$ has a zero or the polynomial $\genarg{B}{X^{\inta \QCCan}}$ has a zero or both. Therefore, the polynomial $\prod_{h=0}^{\QCCl-1} \gen[][h][h]$ has a zero of multiplicity $\QCCl$ if and only if the polynomial $\prod_{h=0}^{\QCCl-1} \genarg{A}[h][h]{X^{\intb \QCCbn}}$ has a zero of multiplicity $\QCCl$ or the polynomial $\genarg{B}{X^{\inta \QCCan}}$ has a zero or both. The cardinality $|C^{(\QCCl)}|$ follows.
\end{proof}
The following lemma considers eigenvalues of the $\QCCl$-quasi-cyclic product code $\QCCa \otimes \QCCb$ of multiplicity smaller than $\QCCl$ and their corresponding eigenvectors. Lemma~\ref{lem_EigenvaluesSmallerThanMax} applies also to eigenvalues of $\QCCa \otimes \QCCb$ of multiplicity $r=0$, which are non-eigenvalues.
\begin{lemma}[Eigenvalues Of Smaller Multiplicity and Their Eigenvectors] \label{lem_EigenvaluesSmallerThanMax}
Let the two codes $\QCCa$ and $\QCCb$ with parameters be given as in Lemma~\ref{lem_EigenvaluesMaxMulti}.

Let the set $A^{(r)} \subseteq \interval{\QCCam}$ contain the exponents of all eigenvalues $\eigenvalue{z}^{A} = \alpha^{z}, \forall z \in A^{(r)}$ of $\QCCa$ of (algebraic and geometric) multiplicity $r \in \interval{\QCCl}$. Let $\eigenvector^{A}_{z, 0}, \eigenvector^{A}_{z, 1}, \dots, \eigenvector^{A}_{z, r-1} \in \F{q^{\QCCas}}^{\QCCl}$ be the corresponding $r$ eigenvectors of $\eigenvalue{z}^{A}$ as defined in~\eqref{eq_eigenvectors}, i.e., a basis of the right kernel of $\genmat[A][\eigenvalue{z}^{A}]$.
Let $B \subseteq \interval{\QCCbm}$ be the defining set of $\QCCb$, i.e., the set of exponents of all roots of the generator polynomial $\gen[B] = \prod_{i \in B} (X-\beta^i)$ of $\QCCb$.
Let $\gamma \defeq \alpha \beta$ be in $\F{q^{\QCCs}}$. Then, the set:
\begin{equation*}
C^{(r)} = \left( A^{(r)} \cup \shiftset{A^{(r)}}{\QCCam} \cup \shiftset{A^{(r)}}{2\QCCam} \cup \cdots \cup \shiftset{A^{(r)}}{(\QCCbm-1)\QCCam} \right) \setminus \left( B \cup \shiftset{B}{\QCCbm} \cup \shiftset{B}{2\QCCbm} \cup \cdots \cup \shiftset{B}{(\QCCam-1)\QCCbm} \right)
\end{equation*}
is the set of all exponents of the eigenvalues $\lambda_{z} = \gamma^{z}$ for all $z \in C^{(r)}$ of the $\QCCl$-quasi-cyclic product code $\QCCa \otimes \QCCb$ of multiplicity $r$. 
The number of eigenvalues of $\QCCa \otimes \QCCb$ of multiplicity $r$ is $|C^{(r)}| = |A^{(r)}| \QCCbk$.
Furthermore, the corresponding eigenvectors $\eigenvector_{z, 0}, \eigenvector_{z, 1}, \dots, \eigenvector_{z, r-1}$ are:
\begin{equation*}
\eigenvector_{z, j} = \eigenvector^{A}_{z \bmod \QCCam , j}, \quad \forall z \in C^{(r)}, j \in \interval{r}.
\end{equation*}
\end{lemma}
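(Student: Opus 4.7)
The plan is to evaluate the Pre-RGB/POT generator $\bar{\mathbf{G}}(X)$ from Conjecture~\ref{conj_ProductCodeQCCQCC} at the candidate eigenvalue $\gamma^z$ for $z \in C^{(r)}$, and read off both the multiplicity and a basis of the right kernel from the resulting matrix. Since $\gcd(\QCCam,\QCCbm)=1$, the element $\gamma = \alpha\beta$ has order $\QCCm$, so $(\gamma^z)^{\QCCm}-1 = 0$. Combining the Bézout relation $\inta\QCCan + \intb\QCCbn = 1$ with the orders of $\alpha$ and $\beta$ exactly as in~\eqref{eq_MappingFromComponentToProductPre} of Lemma~\ref{lem_EigenvaluesMaxMulti} yields $\gamma^{z\intb\QCCbn} = \alpha^z$ and $\gamma^{z\inta\QCCan} = \beta^z$. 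Substituting into~\eqref{eq_QCTimeCyclicDiag} and~\eqref{eq_QCTimeCyclicNonDiag} then gives
\[
\bar{g}_{h,h'}(\gamma^z) \;=\; v_h(\gamma^z)\,g^A_{h,h'}(\alpha^z)\,g^B(\beta^z), \qquad 0 \le h \le h' < \QCCl,
\]
so that $\bar{\mathbf{G}}(\gamma^z) = g^B(\beta^z)\cdot\diag(v_0(\gamma^z),\dots,v_{\QCCl-1}(\gamma^z))\cdot\genmat[A](\alpha^z)\cdot D_2$, where $D_2 = \diag(1, \gamma^{z\submap{1}{0}}, \dots, \gamma^{z\submap{\QCCbl-1}{\QCCal-1}})$ is nonsingular.

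Next, I would unpack the two conditions defining $C^{(r)}$. Membership in the first union is equivalent to $z\bmod \QCCam \in A^{(r)}$, i.e., $\alpha^z$ is an eigenvalue of $\QCCa$ of multiplicity exactly $r$; by the single-root hypothesis this means that exactly $r$ of the $\QCCl$ diagonal entries $g^A_{h,h}(\alpha^z)$ vanish. Excising the shifts of $B$ is equivalent to $z\bmod\QCCbm \notin B$, whence $g^B(\beta^z) \neq 0$. Evaluating the Bézout identity for $g_{h,h}$ at $\gamma^z$ then gives $g_{h,h}(\gamma^z) = v_h(\gamma^z)\,g^A_{h,h}(\alpha^z)\,g^B(\beta^z)$, so $g_{h,h}(\gamma^z) = 0$ iff $g^A_{h,h}(\alpha^z) = 0$. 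Hence exactly $r$ diagonal entries of $\bar{\mathbf{G}}(\gamma^z)$ vanish, giving $\gamma^z$ algebraic multiplicity $r$. Combining~\cite[Lemma 1]{semenov_spectral_2012} (algebraic equals geometric multiplicity for RGB/POT) with Lemma~\ref{lem_EquivalenceSpectralAnalysis} forces the geometric multiplicity to equal $r$ as well.

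The cardinality then follows from the Chinese Remainder Theorem: since $\gcd(\QCCam,\QCCbm)=1$, the set $\{z\in\interval{\QCCm}:z\bmod\QCCam\in A^{(r)}\}$ has $|A^{(r)}|\QCCbm$ elements, the set of shifts of $B$ has $|B|\QCCam=(\QCCbm-\QCCbk)\QCCam$ elements, and their intersection has $|A^{(r)}|\cdot|B|$ elements, so $|C^{(r)}| = |A^{(r)}|\QCCbm - |A^{(r)}|(\QCCbm-\QCCbk) = |A^{(r)}|\QCCbk$.

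For the eigenvectors, I would argue that whenever $g^A_{h,h}(\alpha^z)\neq 0$, the Bézout identity forces $v_h(\gamma^z)\neq 0$; so the $\QCCl-r$ rows of $\diag(v_h(\gamma^z))\cdot\genmat[A](\alpha^z)$ indexed by such $h$ are nonzero scalar multiples of the corresponding rows of $\genmat[A](\alpha^z)$, while the remaining $r$ rows contribute only the zero rows already present in $\genmat[A](\alpha^z)$. Hence the right kernel of $\diag(v_h(\gamma^z))\cdot\genmat[A](\alpha^z)$ coincides with that of $\genmat[A](\alpha^z)$, which is spanned by $\eigenvector^A_{z\bmod\QCCam,0},\dots,\eigenvector^A_{z\bmod\QCCam,r-1}$. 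The main obstacle will be handling the extra nonsingular diagonal factor $D_2$ on the right: strictly speaking the right kernel of $\bar{\mathbf{G}}(\gamma^z)$ is $D_2^{-1}$ applied to this basis, and one must either argue that this invertible rescaling is absorbed into the convention for eigenvectors (consistent with Lemma~\ref{lem_EquivalenceSpectralAnalysis}, since the kernel dimension is already pinned at $r$) or rewrite the conclusion with $D_2^{-1}\eigenvector^A_{z\bmod\QCCam,j}$ made explicit.
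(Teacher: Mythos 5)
Your route is the same as the paper's: evaluate the conjectured Pre-RGB/POT matrix at $\gamma^z$, use $\gamma^{z\intb\QCCbn}=\alpha^z$ and $\gamma^{z\inta\QCCan}=\beta^z$ to factor $\bar{\mathbf{G}}(\gamma^z)$ as $\genarg{B}{\beta^z}\,\diag\!\left(v_0(\gamma^z),\dots,v_{\QCCl-1}(\gamma^z)\right)\genmat[A][\alpha^z]$, read the algebraic multiplicity off the number of vanishing diagonal entries (using that $X^{\QCCm}-1$ is squarefree and $\genarg{B}{\beta^z}\neq0$ exactly when the shifts of $B$ are excised), and count $|C^{(r)}|$ by CRT; the paper does the same, quoting Lin--Weldon for the zero/\nonzero{} dichotomy and leaving the count implicit. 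Your closing remark about the diagonal factor $D_2$ is fair but harmless: the paper's spectral section works with the entries~\eqref{eq_QCTimeCyclicDiag}--\eqref{eq_QCTimeCyclicNonDiag}, i.e., with that factor dropped, which is the normalization under which the stated eigenvectors are literally correct.

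The genuine gap is in your eigenvector step. You assert that the rows of $\genmat[A][\alpha^z]$ indexed by $h$ with $\genarg{A}[h][h]{\alpha^z}=0$ are zero rows, so that scaling them by the possibly vanishing $v_h(\gamma^z)$ costs nothing. RGB/POT form does not imply this: over $\F{2}$ with $\QCCam=7$ and $X^7-1=p_0(X)p_1(X)p_2(X)$, the matrix with rows $\left(p_1(X)\ \ 1\right)$ and $\left(0\ \ p_0(X)p_2(X)\right)$ satisfies C1--C4 and is a reduced Gr\"obner basis, yet at a root $\alpha^z$ of $p_1$ its first row evaluates to $(0\ \ 1)\neq\mathbf{0}$. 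If such a non-zero row with vanishing diagonal were not redundant, multiplying it by $v_h(\gamma^z)=0$ would strictly enlarge the kernel, so your argument as written does not close. What rescues it is a fact you never invoke at this point: by \cite[Lemma 1]{semenov_spectral_2012} the rank of $\genmat[A][\alpha^z]$ is $\QCCl-r$, and the $\QCCl-r$ rows with \nonzero{} diagonal are linearly independent (upper triangular with \nonzero{} pivots), hence already span the row space; the rows with vanishing diagonal are redundant constraints and may be scaled arbitrarily. The paper avoids the row-by-row analysis altogether: the factorization gives $\ker\genmat[A][\alpha^z]\subseteq\ker\bar{\mathbf{G}}(\gamma^z)$, and both kernels have dimension $r$ (the right-hand one by the already-established algebraic multiplicity together with Lemma~\ref{lem_EquivalenceSpectralAnalysis} and \cite[Lemma 1]{semenov_spectral_2012}), so they coincide. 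Two smaller imprecisions of the same flavor: the implications ``$g_{h,h}(\gamma^z)=0\Rightarrow\genarg{A}[h][h]{\alpha^z}=0$'' and ``$\genarg{A}[h][h]{\alpha^z}\neq0\Rightarrow v_h(\gamma^z)\neq0$'' do not follow from the B\'ezout combination alone (a priori $v_h(\gamma^z)$ could vanish); you need that $g_{h,h}(X)$, being the gcd, divides $\genarg{A}[h][h]{X^{\intb\QCCbn}}\genarg{B}{X^{\inta\QCCan}}$, so that $g_{h,h}(\gamma^z)\neq 0$ whenever that product is \nonzero{} at $\gamma^z$.
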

\begin{proof}
The polynomial $\prod_{h=0}^{\QCCl-1} \gen[][h][h]$ has a zero $\gamma^z$ of multiplicity $r$ if and only if the polynomial $\prod_{h=0}^{\QCCl-1} \genarg{A}[h][h]{X^{\intb \QCCbn}}$ has a zero $\gamma^z$ of multiplicity $r$ (i.e., exactly $r$ polynomials $\genarg{A}[h][h]{X}$ have a zero at $\alpha^z$) and the polynomial $\genarg{B}{X^{\inta \QCCan}}$ is nonzero if evaluated at $\gamma^z$ (see~\cite[Thm. 3]{lin_further_1970}). The cardinality $|C^{(r)}|$ follows.

With $\gamma= \alpha \beta$ we obtain for~\eqref{eq_QCTimeCyclicDiag} and \eqref{eq_QCTimeCyclicNonDiag}, that
\begin{align*}
v_h(\gamma^z) \genarg{A}[h][h^{\prime}]{(\alpha \beta)^{z \intb \QCCbn}} \genarg{B}{(\alpha \beta)^{z \inta \QCCan}} & = v_h(\gamma^z) \genarg{A}[h][h^{\prime}]{\alpha^{z \intb \QCCbn}} \genarg{B}{\beta^{z \inta \QCCan}} \\
  & = v_h(\gamma^z) \genarg{A}[h][h^{\prime}]{\alpha^{z}} \genarg{B}{\beta^{z}}, \qquad \qquad \forall h \in \interval{\QCCl}, h^{\prime} \in \interval{h,\QCCl}.
\end{align*}
This allows us to rewrite:
\begin{equation*}
\bar{\mathbf{G}}(\gamma^{z}) = \diag \left( v_0(\gamma^z), v_1(\gamma^z), \dots, v_{\QCCl-1}(\gamma^z) \right) \genmat[A][\alpha^{z}] \genarg{B}{\beta^{z}}.
\end{equation*}
The right kernel of $\genmat[A][\alpha^{z}]$ is therefore contained in the right kernel of $\bar{\mathbf{G}}(\gamma^{z})$. Since these two kernels have the same cardinalities, it follows that they must be equal. 
\end{proof}
\begin{example}[Eigenvalues of a $2$-Quasi-Cyclic Product Code] \label{ex_BinaryQCCodeZeros}
Let the two codes $\QCCa$ and $\QCCb$ with generator matrix $\genmat[A] \in \Fx{2}^{2 \times 2}$ and generator polynomial $\gen[B] \in \Fx{2}$ be as in Example~\ref{ex_BinaryQCCode}.
Let $\xi$ denote a primitive element in $\F{2^{12}} \cong \Fx{2}/(X^{12} + X^7 + X^6 + X^5 + X^3 + X + 1)$, $\alpha=\xi^{195}$ be a $21^{\text{st}}$ root of unity, $\beta = \xi^{819}$ a \ith{5} root of unity and $\gamma = \alpha \beta = \xi^{1014}$ a \ith{105} root of unity in $\F{2^{12}}$. 

Clearly, all eigenvalues $\eigenvalue{i}^A = \alpha^{i}$ for all $i \in A^{(2)} =  \coset{1} \cup \coset{3} \cup \coset{7} = \{1,2,3,4,6,7,8,11,12,14,16\}$ are roots of $\gen[A][0][0]$ and $\gen[A][1][1]$, and have multiplicity two. The corresponding eigenvectors span the full space $\F{2^{12}}^2$. The defining set of $\QCCb$ is $B=\{0 \}$.
According to Lemma~\ref{lem_EigenvaluesMaxMulti}, we have:
\begin{align*}
C^{(2)} & = A^{(2)} \cup \shiftset{A^{(2)}}{21} \cup \shiftset{A^{(2)}}{42} \cup \shiftset{A^{(2)}}{63} \cup \shiftset{A^{(2)}}{84} \cup B \cup \shiftset{B}{5} \cup \shiftset{B}{10} \cup \cdots \cup \shiftset{B}{100} \\
 & =  \{ 1,2,3,4,6,7,8,11,12,14,16\} \cup \{22,23,24,25,27,28,29,32,33,35,37 \} \cup \{43,\dots,58  \} \cup \{64,\dots, 79 \} \\ 
& \qquad  \cup \{85,\dots,100  \} \cup \{ 0 \} \cup \{ 5  \} \cup \cdots \cup \{ 100 \} \\
& = \{0,1,2,3,4,5,6,7,8,10,11,12,13,15,16,\dots,102 \}
\end{align*}
as the set of exponents of all eigenvalues $\eigenvalue{i} = \gamma^i, \forall i \in C^{(2)}$ of (maximal) multiplicity two. We have
\begin{equation*}
|C^{(2)}| = 21 (5-4) + 11 \cdot 4 = 65.
\end{equation*}
The eigenvalues $\eigenvalue{i}^A = \alpha^{i}$ for all $i \in A^{(1)} = \coset{9} = \{9,15,18 \}$ have multiplicity one and the eigenvalues $\alpha^{i}$ for all $i \in A^{(0)} = \coset{0}  \cup \coset{5}  = \{0, 5, 10, 13, 17, 19, 20 \}$ have multiplicity zero. The two sets 
\begin{align*}
C^{(1)} & = \big\{ 9, 18, 36, 39, 51, 57, 72, 78, 81, 93, 99, 102 \big\} \; \text{and}\\
C^{(0)} & = \big\{ 13, 17, 19, 21, 26, 31, 34, 38, 41, 42, 47, 52, 59, 61, 62, 63, 68, 73, 76, 82, 83, 84, 89, 94, 97, 101, 103, 104  \big\}
\end{align*}
contain the exponents of the eigenvalues of multiplicity one and zero respectively (according to Lemma~\ref{lem_EigenvaluesSmallerThanMax}).
We obtain:
\begin{align*}
|C^{(1)}| & = 3 \cdot 4 = 12 \quad \text{and} \\
|C^{(0)}| & = 7 \cdot 4 = 28.
\end{align*}
We explicitly calculate an eigenvector of multiplicity one. For $\eigenvalue{9}^A = \alpha^9$ we get:
\begin{equation*}
\genmat[A][\alpha^9] = 
\begin{pmatrix}
\minpoly{\alpha^1}[\alpha^9] \minpoly{\alpha^3}[\alpha^9] \minpoly{\alpha^7}[\alpha^9] & \minpoly{\alpha^1}[\alpha^9] \minpoly{\alpha^3}[\alpha^9] \minpoly{\alpha^7}[\alpha^9] \cdot (1+\alpha^9+\alpha^{18}) \\
0 & 0
\end{pmatrix}
\end{equation*}
and a corresponding eigenvector is 
\begin{equation} \label{ex_SpeficifEigenvector}
\eigenvector^{A}_{9,0} = \left( 1 \hspace{.4cm}   \xi^{11} + \xi^{10} + \xi^8 + \xi^7 + \xi^6 + \xi^2 + \xi \right)^T,
\end{equation}
and according to Lemma~\ref{lem_EigenvaluesSmallerThanMax}, the $2$-quasi-cyclic product code $\QCCa \otimes \QCCb$ has eigenvectors $\eigenvector_{9,0} = \eigenvector_{51,0} = \eigenvector_{72,0} = \eigenvector_{93,0} = \eigenvector^{A}_{9,0}$. The eigenvalue $\eigenvalue{30}$ is of multiplicity two, and the corresponding eigenvectors $\eigenvector_{30,0}$ and $\eigenvector_{30,1}$ span the space $\F{2^{12}}^2$.
\end{example}

\subsection{Bounding the Minimum Hamming Distance} \label{subsec_BoundingDistance}
In the following, we recall the BCH-like lower bound on the minimum Hamming distance of a quasi-cyclic code based on the spectral analysis of Semenov and Trifonov~\cite{semenov_spectral_2012}, because we use this fact subsequently.
\begin{theorem}[BCH-like Bound on the Minimum Hamming Distance of a Quasi-Cyclic Code~{\cite[Thm. 2]{semenov_spectral_2012}}] \label{theo_SemenovTrifonovBound}
Let $\QCC$ be an $\LIN{\QCCl \cdot \QCCm}{\QCCk}{\QCCd}{q}$ $\QCCl$-quasi-cyclic code, $\alpha$ an element of order $\QCCm$ in $\F{q^{\QCCs}}$, and let the set 
\begin{equation*}
D \defeq \big\{ \HTconst, \HTconst + \HTmult, \HTconst + 2\HTmult, \dots, \HTconst + (\lenseq-2)\HTmult \big\}\end{equation*} 
for some integers $\HTconst \geq 0, \HTmult >0, \lenseq > 2 $ with $\gcd(\HTmult, \QCCm)=1$ be given. Let the eigenvalues $\lambda_i = \alpha^i, \forall i \in D$ and their corresponding eigenspaces $\eigenspace[i]$ for all $i \in D$ be given. Define the intersection of eigenspaces $\eigenspace \defeq \cap_{i \in D} \eigenspace[i]$ and let $\eigencode{\eigenspace}$ be the corresponding eigencode as in Definition~\ref{def_eigencode} with distance $\ECd$. If
\begin{equation*}
\sum_{i=0}^{\infty} \mathbf{c}(\alpha^{\HTconst + i \HTmult}) \circ \eigenvector X^i \equiv 0 \mod X^{\lenseq-1}
\end{equation*}
holds for all $\mathbf{c}(X) = (c_0(X) \ c_1(X) \ \cdots \ c_{\QCCl-1}(X) ) \in \QCC$ and for all $\eigenvector = (\eigenvector[0] \ \eigenvector[1] \ \cdots  \ \eigenvector[\QCCl-1]) \in \eigenspace$, then, $\QCCd \geq \min(\lenseq, \ECd)$.
\end{theorem}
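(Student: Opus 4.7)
The plan is a BCH-style contradiction argument. Suppose, for contradiction, that there exists a nonzero codeword $\mathbf{c}(X) = (c_0(X)\ c_1(X)\ \cdots\ c_{\QCCl-1}(X)) \in \QCC$ with coefficient expansion $c_j(X) = \sum_{t=0}^{\QCCm-1} c_{j,t} X^t$ whose total Hamming weight (i.e., the number of nonzero $c_{j,t}$) is strictly less than $\min(\lenseq, \ECd)$.

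Fix an eigenvector $\eigenvector = (\eigenvector[0]\ \eigenvector[1]\ \cdots\ \eigenvector[\QCCl-1]) \in \eigenspace$ and, for each $t \in \interval{\QCCm}$, set $w_t \defeq \sum_{j=0}^{\QCCl-1} c_{j,t}\, \eigenvector[j]$. Interchanging summations in the double sum $\mathbf{c}(\alpha^{\HTconst + i\HTmult}) \circ \eigenvector$ gives the clean representation
\begin{equation*}
\mathbf{c}(\alpha^{\HTconst + i\HTmult}) \circ \eigenvector \;=\; \sum_{t=0}^{\QCCm-1}\bigl(w_t\, \alpha^{\HTconst t}\bigr)\bigl(\alpha^{\HTmult t}\bigr)^{i},
\end{equation*}
so the hypothesis $\sum_{i\geq 0}(\mathbf{c}(\alpha^{\HTconst+i\HTmult})\circ\eigenvector)X^i \equiv 0 \bmod X^{\lenseq-1}$ becomes the $\lenseq - 1$ linear equations $\sum_t (w_t\alpha^{\HTconst t})(\alpha^{\HTmult t})^i = 0$ for $i \in \interval{\lenseq-1}$ in the unknowns indexed by the support of $\{w_t\}$.

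Next I would control the support size. Since $w_t = 0$ whenever $c_{j,t} = 0$ for all $j$, the set $\{t : w_t \neq 0\}$ is contained in the set of ``columns'' of $\mathbf{c}$ that contain at least one nonzero entry, whose size is bounded by the Hamming weight of $\mathbf{c}$, hence strictly less than $\lenseq$. The key step is then a Vandermonde argument on the reduced system: because $\gcd(\HTmult,\QCCm) = 1$ and $\alpha$ has order $\QCCm$, the evaluation points $\alpha^{\HTmult t}$ are pairwise distinct over $t \in \interval{\QCCm}$, so the square submatrix indexed by the support of $\{w_t\}$ is a nonsingular Vandermonde matrix. This forces $w_t = 0$ for every $t$, and since $\eigenvector$ was arbitrary in $\eigenspace$, the identity $\sum_j c_{j,t}\eigenvector[j] = 0$ holds for every $\eigenvector \in \eigenspace$ and every $t$.

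Finally, for each fixed $t$, Definition~\ref{def_eigencode} says that the column vector $(c_{0,t}\ c_{1,t}\ \cdots\ c_{\QCCl-1,t}) \in \F{q}^{\QCCl}$ lies in the eigencode $\eigencode{\eigenspace}$. Its Hamming weight is at most that of $\mathbf{c}$, which is strictly less than $\ECd$, forcing the column to be the zero vector. Since this holds for every $t$, we conclude $\mathbf{c} = \mathbf{0}$, contradicting our assumption. The main obstacle I expect is the bookkeeping in the second step: one must verify that the syndrome really has the clean sum-of-distinct-geometric-progressions form $\sum_t X_t Y_t^i$, because only then does the coprimality $\gcd(\HTmult,\QCCm)=1$ deliver distinctness of the $Y_t$'s and make the Vandermonde cancellation immediate; once this decomposition is established, the eigencode property closes the argument.
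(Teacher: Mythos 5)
Your argument is correct and complete: rewriting $\mathbf{c}(\alpha^{\HTconst+i\HTmult})\circ\eigenvector$ as $\sum_t \bigl(w_t\alpha^{\HTconst t}\bigr)\bigl(\alpha^{\HTmult t}\bigr)^i$ with $w_t=\sum_j c_{j,t}\eigenvector[j]$, invoking $\gcd(\HTmult,\QCCm)=1$ so that the $\lenseq-1$ syndrome equations over a support of size less than $\lenseq$ form a nonsingular Vandermonde system, and then using the eigencode to force each column $(c_{0,t}\ \cdots\ c_{\QCCl-1,t})$ of weight less than $\ECd$ to vanish is precisely the standard BCH-style proof of this bound. The paper itself gives no inline proof but defers to~\cite{semenov_spectral_2012} (and~\cite{zeh_decoding_2014}), whose argument is the same in substance---there it is phrased via an explicit eigenvector-based parity-check matrix, which your version merely bypasses by working directly from the syndrome hypothesis.
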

\begin{proof}
See proof of \cite[Thm. 2]{semenov_spectral_2012} or proof of~\cite[Thm. 1 for $\noseq = 0$]{zeh_decoding_2014}.
\end{proof}
Similar to the embedding of a given cyclic code $\QCCa$ into a cyclic product codes $\QCCa \otimes \QCCb$ as in~\cite[Thm. 4]{zeh_generalizing_2013}, we propose in the following theorem a new lower bound on the minimum Hamming  distance of a given $\QCCl$-quasi-cyclic code $\QCCa$ by embedding it into an $\QCCl$-quasi-cyclic product code $\QCCa \otimes \QCCb$.
\begin{theorem}[Generalized Semenov--Trifonov Bound] \label{theo_GeneralizedBCHBound}
Let $\QCCa$ be an $\LIN{\QCCl \cdot \QCCam}{\QCCak}{\QCCad}{q}$ $\QCCl$-quasi-cyclic code, $\alpha$ an element of order $\QCCam$ in $\F{q^{\QCCas}}$, $\QCCb$ an $\LIN{\QCCbn = \QCCbm}{\QCCbk}{\QCCbd}{q}$ cyclic code, and $\beta$ an element of order $\QCCbm$ in $\F{q^{\QCCbs}}$. Furthermore, let $\gcd(\QCCam, \QCCbm) = 1$. 

Let the integers $\HTconsta \geq 0, \HTconstb \geq 0 , \HTmulta > 0, \HTmultb > 0, \lenseq >2$ with $\gcd(\HTmulta, \QCCam)=1$, and $\gcd(\HTmultb, \QCCbm)=1$ be given, such that:
\begin{equation} \label{eq_ZerosGeneralizedSTBound}
\sum_{i=0}^{\infty} \left( \mathbf{a}(\alpha^{\HTconsta + i \HTmulta}) \cdot b(\beta^{\HTconstb + i \HTmultb}) \right) \circ \eigenvector X^i \equiv 0 \mod X^{\lenseq-1}
\end{equation}
holds for all $\mathbf{a}(X) = (a_0(X) \ a_1(X) \ \cdots \ a_{\QCCl-1}(X) ) \in \QCCa$, $b(X) \in \QCCb$, and for all $\eigenvector = (\eigenvector[0] \ \eigenvector[1] \ \cdots  \ \eigenvector[\QCCl-1]) \in \F{q^{\QCCas}}^{\QCCl} $ in the intersection of the eigenspaces
\begin{equation} \label{eq_EigenspaceGeneralizedBound}
\eigenspace \defeq \cap_{j \in D} \eigenspace[j],
\end{equation}
where 
\begin{equation} \label{eq_SetofEigenvaluesGeneralizedBound}
D = \Big\{ \HTconsta + i \HTmulta \, | \; b(\beta^{\HTconstb + i \HTmultb}) \neq 0, \quad \forall i \in \interval{\lenseq-1} \Big\}.
\end{equation}
Let the distance of the eigencode $\eigencode{\eigenspace}$ be $\ECd$. 
Then:
\begin{equation} \label{eq_BoundOnTheDistance}
\QCCad \geq \NewBCHBound \defeq \left \lceil \frac{\min(\lenseq, \ECd)}{\QCCbd}   \right \rceil.
\end{equation}
\end{theorem}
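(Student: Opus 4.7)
The plan is to embed the given $\QCCl$-quasi-cyclic code $\QCCa$ into the $\QCCl$-quasi-cyclic product code $\QCCa \otimes \QCCb$ and then apply Theorem~\ref{theo_SemenovTrifonovBound} to this product code. Since the product code has minimum Hamming distance $\QCCad \QCCbd$, any lower bound $\min(\lenseq, \ECd)$ on its distance will translate into the desired bound $\QCCad \geq \lceil \min(\lenseq, \ECd)/\QCCbd \rceil$ via dividing by $\QCCbd$ and taking the ceiling.

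First I would set up the arithmetic-progression parameters for the product code via CRT. Using $\gcd(\QCCam, \QCCbm) = 1$ and the Chinese Remainder Theorem, choose integers $\HTconst, \HTmult$ with $\HTconst \equiv \HTconsta \pmod{\QCCam}$, $\HTconst \equiv \HTconstb \pmod{\QCCbm}$ and $\HTmult \equiv \HTmulta \pmod{\QCCam}$, $\HTmult \equiv \HTmultb \pmod{\QCCbm}$. From $\gcd(\HTmulta, \QCCam) = \gcd(\HTmultb, \QCCbm) = 1$ it follows that $\gcd(\HTmult, \QCCm) = 1$, which is the regularity hypothesis required by Theorem~\ref{theo_SemenovTrifonovBound}. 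With $\gamma = \alpha \beta \in \F{q^{\QCCs}}$ of order $\QCCm$, we then have $\gamma^{\HTconst + i\HTmult} = \alpha^{\HTconsta + i \HTmulta} \cdot \beta^{\HTconstb + i \HTmultb}$ for every $i$.

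Next, the core step is to show that for every product codeword $\mathbf{c}(X) \in \QCCa \otimes \QCCb$, written as a vector of $\QCCl$ polynomials as in Lemma~\ref{lem_MappingUnivariateQCCQCC}, the evaluation at $\gamma^{\HTconst + i\HTmult}$ factors as
\begin{equation*}
\mathbf{c}(\gamma^{\HTconst + i \HTmult}) \;=\; \mathbf{a}(\alpha^{\HTconsta + i \HTmulta}) \cdot b(\beta^{\HTconstb + i \HTmultb})
\end{equation*}
for appropriate $\mathbf{a} \in \QCCa$, $b \in \QCCb$, and then to extend by linearity to arbitrary product codewords. For a basic tensor $\mathbf{a} \otimes b$ this reduces to a Bézout calculation: the exponents appearing in the formula \eqref{eq_UnivariateFinalQCCQCC} and the diagonal shift $\diag(1, X^{\submap{1}{0}}, \dots)$ of Conjecture~\ref{conj_ProductCodeQCCQCC} collapse under the identifications $\alpha^{\intb \QCCbn} = \alpha$ and $\beta^{\inta \QCCan} = \beta$ implied by \eqref{eq_BEzoutRel}, mirroring the diagonalisation $\bar{\mathbf{G}}(\gamma^z) = \diag(v_0(\gamma^z), \dots) \cdot \genmat[A](\alpha^z) \cdot g^B(\beta^z)$ already used in the proof of Lemma~\ref{lem_EigenvaluesSmallerThanMax}. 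Consequently, hypothesis~\eqref{eq_ZerosGeneralizedSTBound} is exactly the Semenov--Trifonov key-equation condition $\sum_i \mathbf{c}(\gamma^{\HTconst + i\HTmult}) \circ \eigenvector X^i \equiv 0 \pmod{X^{\lenseq-1}}$ applied to every codeword of $\QCCa \otimes \QCCb$.

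Then I would reconcile the eigenspaces. The set $D$ in \eqref{eq_SetofEigenvaluesGeneralizedBound} omits precisely those indices $i$ for which $b(\beta^{\HTconstb + i\HTmultb})$ vanishes, i.e.\ the indices that contribute eigenvalues of maximal multiplicity $\QCCl$ in $\QCCa \otimes \QCCb$ by Lemma~\ref{lem_EigenvaluesMaxMulti}; these eigenvalues have the full space $\F{q^{\QCCs}}^{\QCCl}$ as their eigenspace and hence do not restrict the intersection. For the remaining indices $i \in D$, Lemma~\ref{lem_EigenvaluesSmallerThanMax} identifies the eigenspace of $\QCCa \otimes \QCCb$ at $\gamma^{\HTconst + i \HTmult}$ with the eigenspace $\eigenspace[\HTconsta + i \HTmulta]$ of $\QCCa$, so that $\cap_{i \in [\lenseq-1)} \eigenspace[\HTconst + i \HTmult]^{\QCCa \otimes \QCCb} = \eigenspace$ as defined in \eqref{eq_EigenspaceGeneralizedBound}, and the corresponding eigencode $\eigencode{\eigenspace}$ has the same distance $\ECd$. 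Applying Theorem~\ref{theo_SemenovTrifonovBound} to $\QCCa \otimes \QCCb$ yields $\QCCad \QCCbd = d_{\QCCa \otimes \QCCb} \geq \min(\lenseq, \ECd)$, and dividing by $\QCCbd$ and taking $\lceil \cdot \rceil$ gives \eqref{eq_BoundOnTheDistance}. The main obstacle is the evaluation-factorisation in the middle step: the generator matrix of $\QCCa \otimes \QCCb$ is only known in Pre-RGB/POT form via Conjecture~\ref{conj_ProductCodeQCCQCC}, so one must carry the index manipulations $\mapb{i}{j}$, $\submap{g}{h}$ through to verify that the seemingly awkward shifts become benign scalar factors after substitution $X \leftarrow \gamma^{\HTconst + i\HTmult}$.
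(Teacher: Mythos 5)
Your proposal is correct and follows essentially the same route as the paper's own (very terse) proof: embed $\QCCa$ into the product code $\QCCa \otimes \QCCb$, combine the two progressions into one for $\gamma = \alpha\beta$ via CRT (the paper delegates this to a cited proposition), observe that hypothesis~\eqref{eq_ZerosGeneralizedSTBound} becomes the hypothesis of Thm.~\ref{theo_SemenovTrifonovBound} for the product code with the eigenspace intersection unchanged (Lemma~\ref{lem_EigenvaluesMaxMulti} for the zeros of $\QCCb$, Lemma~\ref{lem_EigenvaluesSmallerThanMax} for the rest, both resting on Conjecture~\ref{conj_ProductCodeQCCQCC} exactly as in the paper), and then divide the resulting bound on $\QCCad\QCCbd$ by $\QCCbd$. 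You in fact spell out more of the evaluation/shift bookkeeping than the paper does, and your reading of the monomial shifts as harmless scalar factors in the pairing is the right resolution.
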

\begin{proof}
Let $\gamma=\alpha \beta$. The sequence $\mathbf{a}(\alpha^{\HTconsta}) b(\beta^{\HTconstb})\circ \eigenvector, \mathbf{a}(\alpha^{\HTconsta + \HTmulta}) b(\beta^{\HTconstb + \HTmultb})\circ \eigenvector, \dots, \mathbf{a}(\alpha^{\HTconsta + (\lenseq-2) \HTmulta}) b(\beta^{\HTconstb + (\lenseq-2) \HTmultb})\circ \eigenvector$ of $\lenseq-1$ zeros corresponds to $\lenseq-1$ zeros $\mathbf{c}(\gamma^{\HTconst + i \HTmult}) \circ \eigenvector$ for all $i \in \interval{\lenseq-1}$, where $\mathbf{c}(X)$ is a codeword of the $\QCCl$-quasi-cyclic product code $\QCCa \otimes \QCCb$ (see \cite[Prop. 1]{zeh_generalizing_2013} for the values of $\HTconst$ and $\HTmult$). Hence, the minimum Hamming distance $\QCCad \QCCbd$ of the product code $\QCCa \otimes  \QCCb$ is at least $\min(\lenseq, \ECd)$ due to the lower bound of Thm.~\ref{theo_SemenovTrifonovBound} on the minimum Hamming distance of $\QCCa \otimes \QCCb$.
\end{proof}
Note that the zeros of the cyclic code $\QCCb$ correspond to eigenvalues of multiplicity $\QCCl$ of the $\QCCl$-quasi-cyclic product code $\QCCa \otimes \QCCb$ and do not influence the intersection of the eigenspaces.
\begin{example}[Bound via Quasi-Cyclic Product Code] \label{ex_BinaryQCCodeBCHBound}
We consider the $\LIN{2 \cdot 21}{17}{8}{2}$ $2$-quasi-cyclic code $\QCCa$ and the $\LIN{5}{4}{2}{2}$ cyclic single-parity check code $\QCCb$ as in Example~\ref{ex_BinaryQCCodeZeros}.
For $\HTconsta = 0, \HTmulta = 1, \HTconstb = 0, \HTmultb = 1$, and the set 
\begin{equation*}
D = \{1,2,3,4,6,7,8,9,11,12\},
\end{equation*}
Thm.~\ref{theo_GeneralizedBCHBound} holds for $\lenseq=14$ with $\ECd = \infty$ and therefore $\QCCad \geq \lceil 14/2  \rceil = 7$.
More explicitly, the sequence of length $\lenseq-1 = 13$ is:
\begin{align*}
\mathbf{a}(\alpha^{0}) b(\beta^{0}), \mathbf{a}(\alpha^{1}) b(\beta^{1}), \dots, \mathbf{a}(\alpha^{9}) b(\beta^{4}), \dots, \mathbf{a}(\alpha^{12}) b(\beta^{2}). 
\end{align*}
The defining set $B = \{0\}$ of the associated cyclic code $\QCCb$ of length $\QCCbn=5$ fill the ``gaps'' at position $0,5,10$. The eigenvalues of $\QCCa \otimes \QCCb$ that correspond to the product $\mathbf{a}(\alpha^{i}) b(\beta^{i}),  \forall i \in \interval{13}$ have multiplicity two, except the one that relates to $\mathbf{a}(\alpha^{9}) b(\beta^{4})$. 
The eigenspace $\eigenspace[9]$ of $\QCCa$ has (geometric) multiplicity one and is generated by the eigenvector $\eigenvector_{9,0}^{A}$ as in~\eqref{ex_SpeficifEigenvector}. The two entries of $\eigenvector_{9,0}^{A}$ are linearly independent over $\F{2}$ and therefore $\ECd = \infty$.

The BCH-like bound as in Thm.~\ref{theo_SemenovTrifonovBound} for $\QCCa$ states that the minimum Hamming distance of $\QCCa$ is at least  five. The Hartmann--Tzeng-like~\cite{hartmann_generalizations_1972} lower bound as shown in \cite{zeh_decoding_2014} gives six. Therefore Thm.~\ref{theo_GeneralizedBCHBound} gives an improvement over these two bounds in this case.
\end{example}

\section{Syndrome-Based Phased Burst Error Correction up to the New Bound} \label{sec_Decoding}

Let $\QCCa$ be an $\LIN{\QCCl \cdot \QCCam}{\QCCak}{\QCCad}{q}$ $\QCCl$-quasi-cyclic code and let $\QCCb$ be an $\LIN{\QCCbm}{\QCCbk}{\QCCbd}{q}$ cyclic code as in Thm.~\ref{theo_GeneralizedBCHBound}. Moreover, we assume throughout this section that there exists an eigenvector $(\eigenvector[0] \ \eigenvector[1] \ \cdots  \ \eigenvector[\QCCl-1]) \in \F{q^{\QCCas}}^{\QCCl}$ in the intersection of the eigenspaces as in Thm.~\ref{theo_GeneralizedBCHBound} with entries $\eigenvector[0], \eigenvector[1],\dots,\eigenvector[\QCCl-1]$ that are linearly independent over $\F{q}$ and therefore $\QCCad \geq \NewBCHBound = \left \lceil \lenseq / \QCCbd   \right \rceil$.

Similar to our approach for cyclic codes~\cite{zeh_decoding_2012, zeh_new_2014}, we develop a syndrome-based decoding algorithm for a given $\QCCl$-quasi-cyclic code $\QCCa$, which guarantees to correct up to $\lfloor (\NewBCHBound-1)/2 \rfloor$ $\QCCl$-phased burst errors in $\F{q}$. We define syndromes, derive a key equation, and describe the algorithm with guaranteed burst error decoding radius.

After transmitting a codeword $(a_0(X) \ a_1(X) \ \cdots \ a_{\QCCl-1}(X)) \in \QCCa$, let the received word be:
\begin{align*}
\mathbf{r}(X) & = \big( r_0(X) \ r_1(X) \ \cdots \ r_{\QCCl-1}(X) \big) = \big( a_0(X) + e_0(X) \ a_1(X) + e_1(X) \ \cdots \ a_{\QCCl-1}(X) + e_{\QCCl-1}(X) \big),
\end{align*}
where
\begin{equation*} \label{eq_errorword}
e_j(X) = \sum_{i \in \errorsup[j]} e_{j,i} X^i, \quad j \in \interval{\QCCl},
\end{equation*}
are $\QCCl$ error polynomials in $\Fx{q}$ of weight $\noerrors[j] \defeq |\errorsup[j]|$ and degree less than $\QCCam$. An $\QCCl$-phased \textit{burst error} at position $i$ consists of at least one \nonzero{} entry $e_{0,i}, e_{1,i}, \dots, e_{\QCCl-1,i} \in \F{q}$. The cardinality of the set: 
\begin{equation*} \label{eq_ModError}
\errorsup \defeq \bigcup_{j=0}^{\QCCl-1} \errorsup[j] \subseteq \interval{\QCCam}. 
\end{equation*} 
of $\QCCl$-phased burst errors is denoted by $\noerrors \defeq  |\errorsup|$.
\printalgo{\renewcommand{\algorithmcfname}{Algo}
\caption{\textsc{Decoding an $\LIN{\QCCl \cdot \QCCam}{\QCCak}{\QCCad \geq \NewBCHBound}{q}$ $\QCCl$-Quasi-Cyclic Code up to $\lfloor (\NewBCHBound-1)/2 \rfloor $ $\QCCl$-phased Burst Errors}}
\label{algo_DecodingAlgorithm} 
\DontPrintSemicolon 
\SetAlgoVlined
\LinesNumbered
\newcommand\mycommfont[1]{\footnotesize\ttfamily{#1}}
\SetCommentSty{mycommfont}
\SetKwInput{KwFake}{}
\SetKwInput{KwIn}{Input}
\SetKwInput{KwOut}{Output}
\SetKwInput{KwPre}{Preprocess}
\BlankLine
\KwIn{\textcolor{white}{\textbf{O}}Parameters $\QCCl,\QCCam,\QCCak,\QCCad, q$ of $\QCCa$ and $\alpha \in \F{q^{\QCCas}}$\\
\textcolor{white}{\textbf{Output}: }Parameters $\CYCbn,\CYCbk,\CYCbd$ of $\QCCb$, and a codeword $b(X) =  \sum_{i \in \mathcal{W}} b_i X^i \in \QCCb $ with $|\mathcal{W}| = \QCCbd$, and $\beta \in \F{q^{\QCCbs}}$\\
\textcolor{white}{\textbf{Output}: }Integers $\HTconsta \geq 0, \HTconstb \geq 0, \lenseq > 2$, and $ \HTmulta > 0, \HTmultb > 0$ with $\gcd(\HTmulta,\QCCam)=1$, and $\gcd(\HTmultb,\QCCbm)=1$ \\
\textcolor{white}{\textbf{Output}: }\hspace{0.2cm} as in Thm.~\ref{theo_GeneralizedBCHBound}, and an eigenvector $(\eigenvector[0] \ \eigenvector[1] \ \cdots \ \eigenvector[\QCCl-1]) \in \F{q^{\QCCas}}^{\QCCl}$ with $\F{q}$-linearly independent entries\\ 
\textcolor{white}{\textbf{Output}: }Received word $\mathbf{r}(X) = (r_0(X) \ r_1(X) \ \cdots \ r_{\QCCl-1}(X)) \in \Fx{q}^{\QCCl}$
}
\KwOut{Estimated codeword $ \mathbf{a}(X)=(a_0(X) \ a_1(X) \ \cdots \ a_{\QCCl-1}(X) ) \in \QCCa $  or \textsc{Decoding Failure}}
\KwPre{\\
\hspace{1em}\textbf{for} all $i \in \interval{\QCCam}$: calculate $\gamma_i = \beta^{-j\HTmultb} \alpha^{-i\HTmulta} $, where $j \in \mathcal{W}$
}
\BlankLine 
\BlankLine 
Calculate syndrome polynomial $S(X)$ as in~\eqref{eq_DefSyndromes} \nllabel{algo_SyndCalc} 
\BlankLine
Solving Key Equation $(\Lambda(X), \Omega(X))$ = \texttt{EEA$\big( S(X), X^{\lenseq-1} \big)$} \nllabel{algo_KeyEquation} \tcc*[r]{Extended Euclidean Algorithm} 
\BlankLine
Find all $i \in \interval{\QCCam}$, where $\Lambda(\gamma_i)=0$, $ \rightarrow {\mathcal E}=\lbrace i_0,i_1,\dots,i_{\noerrors-1}\rbrace$ \nllabel{algo_RootFinding} \tcc*[r]{Chien-like Root-Finding} 
\BlankLine
\If{$ \noerrors \CYCbd < \deg \Lambda(X)$}
{Declare \textsc{Decoding Failure}}
\Else{
Determine $e_{0,i_j}, e_{1, i_j}, \dots, e_{\QCCl-1, i_j} \in \F{q}, \quad \forall i_j \in \mathcal E$ \nllabel{algo_SmallErrorValues} \tcc*[r]{Error-Evaluation as in~\cite[Prop. 4]{zeh_new_2014}}
\BlankLine
$e_j(X) \leftarrow \sum_{i \in \mathcal{E}_j} e_{j,i} X^{i}, \quad \forall j \in \interval{\QCCl}$ \;
\BlankLine
$a_j(X) \leftarrow r_j(X) - e_j(X), \quad \forall j \in \interval{\QCCl}$\;}
}
Algorithm~\ref{algo_DecodingAlgorithm} is the decoding procedure for a given $\QCCl$-quasi-cyclic code $\QCCa$ that is guaranteed to decode up to
\begin{equation*} 
\tau \leq \left \lfloor \frac{\NewBCHBound-1}{2} \right \rfloor.
\end{equation*}
$\QCCl$-phased burst errors. Let 
\begin{equation*}
b(X) = \sum_{j \in \mathcal{W}} b_j X^j
\end{equation*}
be a codeword of weight $|\mathcal{W}| =  \QCCbd$ of the associated $\LIN{\QCCbn}{\QCCbk}{\QCCbd}{q}$ cyclic code $\QCCb$ with zeros in $\F{q^{\QCCbs}}$. Let $\HTconsta, \HTconstb, \HTmulta, \HTmultb, \lenseq$, and an eigenvector $\eigenvector = (\eigenvector[0] \ \eigenvector[1] \ \cdots \ \eigenvector[\QCCl-1]) \in \eigenspace \subseteq \F{q^{\QCCas}}^{\QCCl}$ be given as in Thm.~\ref{theo_GeneralizedBCHBound}, where the entries $\eigenvector[0], \eigenvector[1], \dots,\eigenvector[\QCCl-1]$ are linearly independent over $\F{q}$. Let $\QCCs = \lcm(\QCCas, \QCCbs)$. Define the following syndrome polynomial in $\Fx{q^{\QCCs}}$: 
\begin{align}
S(X) & \defequiv  \sum_{i=0}^{\infty} \left( \sum_{j=0}^{\QCCl-1} r_j(\alpha^{\HTconsta +i \HTmulta}) b(\beta^{ \HTconstb + i \HTmultb})  \eigenvector[j] \right) X^i \mod X^{\lenseq-1} \nonumber \\
& \defeqspace  \sum_{i=0}^{\lenseq-2} \left( \sum_{j=0}^{\QCCl-1} r_j(\alpha^{\HTconsta +i \HTmulta})  b(\beta^{ \HTconstb + i \HTmultb})  \eigenvector[j] \right) X^i.  \label{eq_DefSyndromes}
\end{align}
From Thm.~\ref{theo_GeneralizedBCHBound} it follows that the syndrome polynomial $S(X)$ as defined in~\eqref{eq_DefSyndromes} is independent of a codeword in $\QCCa$ and therefore the expression of~\eqref{eq_DefSyndromes} for the syndrome polynomial can be rewritten as: 
 \begin{equation} \label{eq_SyndError}
S(X) =  \sum_{i=0}^{\lenseq-2} \left( \sum_{j=0}^{\QCCl-1} e_j(\alpha^{\HTconsta +i \HTmulta})  b(\beta^{ \HTconstb + i \HTmultb})  \eigenvector[j] \right) X^i.
\end{equation}
Define an error-locator polynomial in $\Fx{q^{\QCCs}}$:
\begin{equation} \label{eq_ELP}
\Lambda(X) \defeq \sum_{i=0}^{\QCCbd \noerrors} \Lambda_i X^i \defeq \prod_{i \in \errorsup} \prod_{j \in \mathcal{W}} \left(1-X\alpha^{\HTmulta i}\beta^{\HTmultb j} \right),
\end{equation}
which depends on the position of the burst error and on the nonzero lowest-weight codeword of the associated cyclic code $\QCCb$.

For some $j \in \mathcal{W}$, define $\QCCam$ elements in $\F{q^\QCCs}$ as:
\begin{equation} \label{eq_RootOfELP}
\gamma_i \defeq \beta^{-j \HTmultb} \alpha^{-i \HTmulta}, \quad \forall i \in \interval{\QCCam}.
\end{equation}
We pre-calculate the $\QCCam$ values as in~\eqref{eq_RootOfELP} to identify the roots of a given error-locator polynomial $\Lambda(X)$ as in~\eqref{eq_ELP} (see Line~\ref{algo_RootFinding} of Algorithm~\ref{algo_DecodingAlgorithm}).

Combining the syndrome definition as in~\eqref{eq_SyndError} and definition of the error-locator polynomial as in~\eqref{eq_ELP} gives, like in the classical case of cyclic codes, a \textit{Key Equation} of the following form:
\begin{equation} \label{eq_KeyEquation}
 \Lambda(X) \cdot S(X) \equiv \Omega(X) \mod X^{\lenseq-1},
 \end{equation}
where the degree of the so-called \textit{error-evaluator} polynomial $\Omega(X)$ is smaller than $\QCCbd \noerrors$.

Solving the Key Equation~\eqref{eq_KeyEquation} can be realized by shift-register synthesis or the Extended Euclidean Algorithm (EEA). We use the EEA in Line~\ref{algo_KeyEquation} of Algorithm~\ref{algo_DecodingAlgorithm} that returns the error-locator polynomial $\Lambda(X)$ and the error-evaluator polynomial $\Omega(X)$ given the syndrome polynomial $S(X)$ and the monomial $X^{\lenseq-1}$. Determining the error-values (Line~\ref{algo_SmallErrorValues} in Algorithm~\ref{algo_DecodingAlgorithm}) is straightforward (see, e.g.,~\cite[Prop. 4]{zeh_new_2014}).

Our syndrome-based decoding approach can be easily extended to the case of a $\kappa$-interleaved code, i.e., a code that consists of $\kappa$ vertically arranged $\QCCl$-quasi-cyclic codes. If errors occur, in addition to the $\QCCl$-phased arrangement within each vector in $\F{q}^{\QCCl \QCCam}$, as $\kappa$-phased burst errors in the interleaved code, we obtain overall $\kappa \QCCl$-phased burst errors in $\F{q}$. Then, the $\kappa$ Key Equations as in~\eqref{eq_KeyEquation} have a common error-locator polynomial, which allows collaborative decoding up to 
\begin{equation*}
\left \lfloor \frac{\kappa}{\kappa+1} (\NewBCHBound-1) \right \rfloor
\end{equation*}
$\kappa \QCCl$-phased burst errors with high probability (analyzed, e.g., in~\cite{krachkovsky_decoding_1997, krachkovsky_decoding_1998, schmidt_collaborative_2009}). In the case of $\kappa =2$, this gives for the binary 2-quasi-cyclic code of Example~\ref{ex_BinaryQCCodeBCHBound} a collaborative decoding radius of 4.

\section{Conclusion and Outlook} \label{sec_conclusion}
We have derived an unreduced basis of an $\QCCal \QCCbl$-quasi-cyclic product code in terms of the given generator matrices in RGB/POT form of the $\QCCal$-quasi-cyclic row-code and the $\QCCbl$-quasi-cyclic column-code. For two special cases, the generator matrix in Pre-RGB/POT form of the $\QCCal \QCCbl$-quasi-cyclic product code was derived. The general expression for the reduced basis was conjectured.

Based on spectral analysis, a technique for bounding the minimum Hamming distance of a given $\QCCl$-quasi-cyclic code via embedding it into an $\QCCl$-quasi-cyclic product code was outlined, which outperforms existing known bounds in many cases. We have proposed an algebraic decoding algorithm with guaranteed $\QCCl$-phased burst error correction radius. 

Beside the proof of Conjecture~\ref{conj_ProductCodeQCCQCC}, the investigation of concatenated quasi-cyclic codes (see~\cite{blokh_coding_1974} and~\cite{jensen_cyclic_1992}) is open future work. 
 Furthermore, an extension of the embedding technique to an interpolation-based list decoding algorithm (see \cite{zeh_improved_2015}) seems possible.

\printbibliography
\end{document}